\newcommand{\blind}{1}
\newcommand{\bH}{\text{\boldmath{$H$}}}
\newcommand{\bz}{\boldsymbol{z}}
\newcommand{\bx}{\boldsymbol{x}}
\newcommand{\bbf}{\boldsymbol{f}}
\newcommand{\be}{\boldsymbol{e}}
\newcommand{\bX}{\boldsymbol{X}}
\newcommand{\bZ}{\boldsymbol{Z}}
\newcommand{\bW}{\boldsymbol{W}}
\newcommand{\bepsilon}{\boldsymbol{\epsilon}}
\newcommand{\bxi}{\boldsymbol{\xi}}
\newcommand{\bmu}{\boldsymbol{\mu}}
\newcommand{\bzeta}{\boldsymbol{\zeta}}
\newcommand{\bGamma}{\boldsymbol{\Gamma}}
\newcommand{\bbeta}{\boldsymbol{\beta}}
\newcommand{\bs}{\boldsymbol{s}}
\newcommand{\bg}{\boldsymbol{g}}
\newcommand{\by}{\boldsymbol{y}}
\newcommand{\bY}{\boldsymbol{Y}}
\newcommand{\bU}{\boldsymbol{U}}
\newcommand{\bzero}{\boldsymbol{0}}
\newcommand{\beps}{\boldsymbol{\epsilon}}
\newcommand{\btheta}{\boldsymbol{\theta}}
\newcommand{\BK}{\mathbf{K}}
\newcommand{\bI}{\boldsymbol{I}}
\newcommand{\bV}{\boldsymbol{V}}
\newcommand{\bbP}{\mathbb{P}}
\newcommand{\bbI}{\mathbb{I}}
\newcommand{\bbR}{\mathbb{R}}
\newcommand{\bbC}{\mathbb{C}}
\newcommand{\bv}{\boldsymbol{v}}
\newcommand{\sW}{\mathsf{W}}
\newtheorem{assumption}{\textbf{Assumption}}
\newtheorem{corollary}{\textbf{Corollary}}
\newtheorem{lemma}{\textbf{Lemma}}
\newtheorem{theorem}{\textbf{Theorem}}
\newtheorem{proposition}{\textbf{Proposition}}
\newtheorem{remark}{\textbf{Remark}}
\newcommand{\mE}{\mathbb{E}}
\newcommand{\TV}{\mathsf{TV}}
\newcommand{\cD}{\mathcal{D}}
\newcommand{\cL}{\mathcal{L}}
\newcommand{\cC}{\mathcal{C}}
\newcommand{\cN}{\mathcal{N}}
\newcommand{\cB}{\mathcal{B}}
\newcommand{\cF}{\mathcal{F}}
\newcommand{\cT}{\mathcal{T}}
\newcommand{\cK}{\mathcal{K}}
\newcommand{\cH}{\mathcal{H}}
\newcommand{\cO}{\mathcal{O}}
\newcommand{\sD}{\mathscr{D}}
\newcommand{\tX}{\widetilde{\bX}}
\newcommand{\tZ}{\widetilde{\bZ}}
\newcommand{\tr}{\mathrm{tr}}
\newcommand{\bbSig}{\mathbf{\Sigma}}
\newcommand{\bbOmega}{\mathbf{\Omega}}
\newcommand{\bbH}{\mathbf{H}}
\newcommand{\bcK}{\boldsymbol{\cK}}
\newcommand{\hZ}{\widehat{\bZ}}
\newcommand{\hX}{\widehat{\bX}}
\newcommand{\hs}{\hat{\bs}}
\newcommand{\htheta}{\widehat{\boldsymbol{\theta}}}
\newcommand{\hbeta}{\widehat{\boldsymbol{\beta}}}
\def\T{{\rm T}}
\newtheorem*{assumption*}{Assumption}
\newcommand\blfootnote[1]{%
	\begingroup
	\renewcommand\thefootnote{}\footnote{#1}%
	\addtocounter{footnote}{-1}%
	\endgroup
}
\title{}
\begin{document}
	\pagenumbering{gobble}

	\def\spacingset#1{\renewcommand{\baselinestretch}%
		{#1}\normalsize} \spacingset{1}
	
	
	\if1\blind
	{
		\title{\bf Denoising Data with Measurement Error Using a Reproducing Kernel-based Diffusion Model
		}
	\author[a]{Mingyang Yi}
	\author[b]{Marcos Matabuena}
	\author[c]{Zhi-Ming Ma}
	\author[d$*$]{Ruoyu Wang}

\affil[a]{School of Information, Renmin University of China, Beijing, P.R. China}
\affil[b]{Mohamed bin Zayed University of Artificial Intelligence, Masdar City, Abu Dhabi, UAE}
\affil[c]{Academy of Mathematics and Systems Science, Chinese Academy of Sciences, Beijing, P.R. China}
\affil[d]{Department of Biostatistics, Harvard T.H. 
	Chan School of Public Health, Boston, MA, USA}
\date{}
\maketitle
\blfootnote{$*$ Corresponding author. Email: ruoyuwang@hsph.harvard.edu}
} 
\fi

\if0\blind
{
\bigskip
\bigskip
\bigskip
\begin{center}
	{\bf \Large Denoising Data with Measurement Error Using a Reproducing Kernel-based Diffusion Model}
\end{center}
\medskip
} \fi
\bigskip
\begin{abstract}
	Recent advancements in measurement technologies have facilitated the acquisition of high-resolution data in fields such as engineering, biology, and medicine. However, these observations are frequently corrupted by measurement noise. To address this challenge, we propose a denoising framework that employs diffusion models to recover the underlying error-free data distribution, thereby enabling reliable downstream analysis. Central to our framework is a novel Reproducing Kernel Hilbert Space (RKHS)-based method that trains the diffusion model using only error-contaminated data. This method admits a closed-form solution and achieves a fast convergence rate in terms of estimation error. Theoretically, we validate our approach by deriving an upper bound on the Kullback--Leibler (KL) divergence between the distributions of the generated denoised data and the true error-free data. Extensive simulations demonstrate the superior empirical performance of our method compared to state-of-the-art baselines. Finally, we illustrate the framework's practical utility in digital health by applying it to Continuous Glucose Monitor (CGM) data, highlighting its potential to improve clinical trial outcomes for Diabetes Mellitus.
\end{abstract}
\vspace{-0.2in}
\noindent%
{\it Keywords:} Digital health, Error-contaminated data, Generative model, Machine learning, Score matching.
\pagenumbering{arabic}
\setstretch{1.9}
\section{Introduction}\label{sec:introduction}
In the current era of artificial intelligence, machine learning has become central to applied data analysis. Concurrently, advancements in measurement systems now enable the acquisition of high-resolution, large-scale datasets across diverse scientific domains, including engineering, biology, and medicine. However, these measurements are frequently compromised by significant errors, which can fundamentally undermine scientific discoveries and predictive accuracy \citep{carroll2006measurement}. Obtaining error-free measurements remains a pervasive challenge across modern research landscapes, ranging from epidemiology \citep{doi:10.1126/science.aal3618}, clinical trials \citep{doi:10.1191/1740774504cn057oa}, and agriculture \citep{desiere2018land}, to econometrics \citep{thompson2007overview}, computer vision \citep{brodley1999identifying,xiao2015learning}, and natural language processing \citep{garg2021towards,havrilla2024understanding}. Crucially, many state-of-the-art machine learning methods are not designed to accommodate such measurement errors \citep{li2021learning,song2022learning}; ignoring them can severely affect decision-making processes. For instance, in medical contexts, inaccurate diagnoses or patient misclassifications stemming from noisy data can lead to severe consequences from clinical, social, and economic perspectives.

\vspace{-0.1in}
Methods that account for potential measurement errors and produce valid statistical inferences have predominantly arisen in statistical science applications, such as in medicine, and with a historical focus on signal processing. Examples of these methods include the observed likelihood approach \citep{zucker2005pseudo}, the conditional score method \citep{stefanski1987conditional}, the corrected estimating equation method \citep{nakamura1990corrected, wang2000expected, wang2012corrected}, and simulation-based exploration techniques \citep{carroll1996asymptotics}. These classical methods address the measurement error by constructing asymptotically unbiased estimators from error-contaminated data. However, deriving such estimators is nontrivial, often model-specific, and may require expertise beyond that of the typical investigator. In addition, the above methods are based on asymptotic theories for classical estimators such as $M$-estimators. It is not straightforward to extend these methods to accommodate common machine learning methods whose asymptotic properties can be complicated and understudied.

\vspace{-0.1in}
Another line of research considers the deconvolution problem, which aims to recover the distribution of error-free data from error-contaminated observations. Most existing deconvolution methods focus on the univariate case \citep{zhang1990fourier, fan1991optimal, cordy1997deconvolution, delaigle2008using, meister2010deconvolution, lounici2011global, guan2021fast, belomestny2021density}. A handful of approaches address the multivariate setting, typically via normal mixture models \citep{bovy2011extreme, sarkar2018bayesian} or kernel smoothing \citep{masry1993strong, lepski2019oracle}. However, normal mixture models can be highly sensitive to misspecification, whereas kernel smoothing methods tend to converge slowly, at polynomial rates in $1/\log n$ where $n$ is the sample size for deconvolution problems with normal error \citep{fan1991optimal}, and suffer from the curse of dimensionality in multivariate scenarios \citep{masry1993strong}. Consequently, these limitations hinder the application of classical non-parametric statistical methods to more complex models across diverse research areas and emerging scientific problems. 
For example, modern data collection practices have enabled high-dimensional data, such as medical imaging, and other emerging data structures \citep{loh2012high, jiang2023high}. These developments pose additional challenges for classical methods. As a result, researchers are increasingly turning to machine learning approaches, such as neural networks \citep{lecun2015deep} and kernel-based statistical learning in reproducing kernel Hilbert spaces (RKHS) \citep{de2005learning}, which can empirically outperform classical nonparametric techniques such as kernel smoothing in high-dimensional settings.

In the emerging field of machine learning, learning with error-contaminated data has received significant attention, especially for deep learning models \citep{lecun2015deep}. In contrast to statistics, which focuses on statistical inference based on error-contaminated data, researchers in the AI community are primarily concerned with training models (usually neural networks) on error-contaminated data, aiming for accurate predictions based solely on error-free data. Methods in this field often focus on constructing ``robust loss'' functions, which enhance model robustness when trained on error-contaminated data \citep{song2022learning}. However, these methods mainly apply to scenarios with low-level errors and are restricted to classification problems, often requiring high computational costs. 

\vspace{-0.2in}
The combination of statistical inferential thinking with machine learning offers the opportunity to develop novel methods for tackling the measurement error problem. 
In this paper, we explore this interface through diffusion models, a powerful class of generative models originating from machine learning \citep{hyvarinen2005estimation,vincent2011connection,ho2020denoising,song2020score}, which have been shown to outperform classical generative models in multiple practical applications \citep{song2020denoising,ho2020denoising,rombach2022high,song2021maximum}. Specifically, we focus on normal measurement error, which is commonly adopted in the measurement error literature \citep{yi2017statistical}. In contrast to the aforementioned standard methods, we directly generate denoised data whose distribution is close to the error-free distribution, so that downstream analysis can be conducted based on the denoised data. The idea of a diffusion model is to transform normal noise into a target distribution by gradually removing noise through a diffusion process. In our setting, this trajectory can be viewed schematically as ``error-contaminated data'' $\to$ ``data with less measurement error'' $\to$ ``error-free data'', making diffusion models a natural fit for our problem. However, applying the diffusion technique is not straightforward in this context. Specifically, the diffusion model requires estimating the ``score function'' \citep{hyvarinen2005estimation} of the distribution of intermediate error-contaminated data \citep{song2020denoising,ho2020denoising}, which is the drift term \citep{oksendal2013stochastic} in the SDE characterizing the denoising process. Unfortunately, this estimation is nontrivial because the intermediate error-contaminated data are unobserved in practice. To overcome this challenge, we propose a \textbf{complex-value score-matching} loss function that estimates the score function of the intermediate error-contaminated data based solely on the observed error-contaminated data. Moreover, to address the implementation challenges associated with complex numbers, we propose a novel RKHS-based score-matching method. When implemented with a Gaussian kernel, all the complex numbers involved in the complex-value score-matching loss can be integrated out, enabling the estimated score function to be obtained without directly operating on complex numbers. Furthermore, the resulting score estimator is available in closed form, offering significant computational convenience. Figure \ref{fig: illustration} provides an illustration of the proposed method and a comparison with the standard diffusion generation procedure.

\begin{figure}[t!]
	\centering
	\vspace{-0.1in}
	\subcaptionbox{Standard diffusion model to generate synthetic data}{\includegraphics[width=0.95\linewidth]{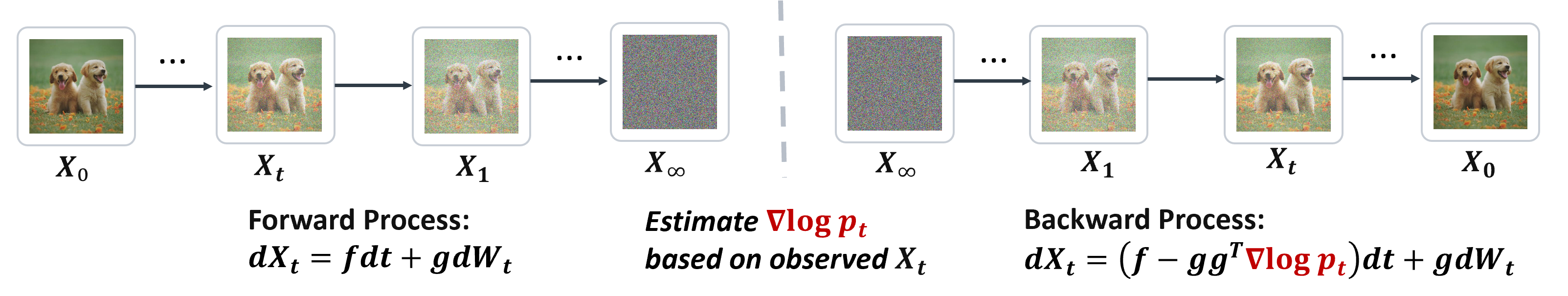}
		\vspace{-10pt}}\\
	\vspace{10pt}
	\subcaptionbox{Proposed diffusion-based method to denoise error-contaminated data}{\includegraphics[width=0.95\linewidth]{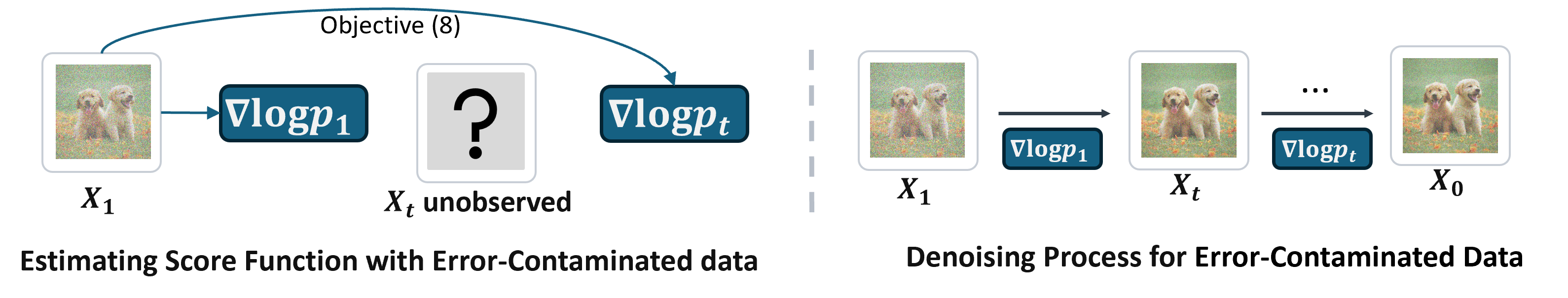}
		\vspace{-10pt}}
	\caption{Comparison between the proposed method and standard diffusion generation procedure.}
	\label{fig: illustration}
	\vspace{-0.2in}
\end{figure}

\par
When using the denoised data to conduct downstream analysis, the accuracy of the downstream analysis depends on the gap between the distributions of the denoised data and the error-free data. We characterize this gap by the Kullback--Leibler (KL) divergence between the two distributions under proper regularity conditions. Concretely, with $n$ observed data, we prove that the error in terms of KL divergence is of order $\cO_{P}((\cB_{\cK}/n)^{1/2})$ where $\cB_{\cK}$ is a constant defined in Assumption \ref{ass:moment norm} that may depend on the dimension $d$ of the observation and can scale with $d$ at the order of $\cO(dc^{d})$ for some $c > 1$ in the worst case. Numerical results in simulation studies show the promising empirical performance of our method. An application to the digital health trial data collected by the Juvenile Diabetes Research Foundation (JDRF) Continuous Glucose Monitoring (CGM) Study Group \citep{doi:10.1056/NEJMoa0805017,juvenile2009effect} illustrates the practical value in clinical applications.

\vspace{-0.1in}

The contributions of this paper are outlined below:

\vspace{-0.1in}
\begin{enumerate} \item \textbf{Flexible and Scalable Diffusion-Based Method:}
	We propose a flexible and scalable diffusion-based denoising framework that synthesizes error-free data. Then, statistical and machine learning methods can be applied to them for downstream tasks, e.g., predictive or unsupervised learning tasks. Our method does not assume that the density of the error-free data follows a given parametric form and is applicable to large-scale multivariate datasets in a non-parametric manner. 
	
	\vspace{-0.1in}
	\item \textbf{Novel Non-parametric RKHS Framework:}  
	To achieve tractable and efficient estimation with favorable convergence properties, we propose a non-parametric diffusion framework in an RKHS setting with a Gaussian kernel, based on a complex score-matching loss. This approach enables us to estimate the score function of intermediate error-contaminated data using only the observed fully error-contaminated data. 
	
	\item \textbf{Provable Convergence Rate}: 
	Under proper regularity conditions, we prove that our diffusion-based denoised data distribution approximates the error-free distribution with the KL divergence between them being of order $\cO\big(\sqrt{\cB_{\cK}/n}\big)$. The result matches the standard results in the existing literature characterizing the gap between real data and data generated by diffusion models \citep{chen2022sampling,lee2022convergence}. Notably, for favorable $\cB_{\cK}$, the KL convergence is faster than that of classical kernel-smoothing-based deconvolution distribution estimators \citep{fan1991optimal}. Such an acceleration comes from the additional smoothness conditions imposed on the score function (Assumption \ref{ass:lip continuity}), which are not included in analyses of traditional deconvolution methods \citep{fan1991optimal}. 
	

	\item \textbf{Application to Real Data in Digital Health:}  
	We demonstrate the flexibility and applicability of the proposed framework through a real-world clinical problem in the digital health domain, specifically examining the impact of measurement error on decision-making for CGM. Our sensitivity analysis provides a novel opportunity to reevaluate diabetes clinical trials by incorporating the measurement error of continuous glucose monitors and drawing valid conclusions about the effects of new interventions and drugs.
\end{enumerate}

The rest of this paper is organized as follows. Section \ref{sec:notations} provides some preliminaries. In Section \ref{sec:DEP}, we propose the RKHS-based diffusion model that can generate error-free data using error-contaminated data. Our theoretical analysis is mainly in Section \ref{sec:theoretical analysis}, where we characterize the sampling error measured by KL divergence between the generated denoised data and the error-free ones. Sections \ref{sec:simulation} and \ref{sec:application} present simulation results and a real-world application of our proposed method. Section \ref{sec:discussion} concludes the paper. The Supplementary Material contains the proofs, additional background, related work, assumptions, and numerical experiments.

\section{Notations and Preliminaries to Diffusion Model}\label{sec:notations}
\paragraph{Notations.} We use $\mE$ to denote expectation, with a subscript indicating the variables with respect to which the expectation is taken when needed. The subscript is omitted when the expectation is taken over all random variables. Constants related to the dimension $d$ of the data and the data-generating process may be hidden in the order notations $\cO$ and $\cO_{P}$, while we preserve important constants that can be critically relevant to the finite-sample performance. 
For a random vector $\bX$, $P_{\bX}$ is its probability distribution.
Let $\cH_{\cK}$ denote an RKHS induced by a positive semi-definite kernel $\cK(\cdot, \cdot)$ defined on the product space $\bbC^{d}$ of the complex plane $\bbC$. Throughout this paper, we use the Gaussian kernel as the kernel function. For any $f\in \cH_{\cK}$ and $\bx\in \bbC^{d}$, the kernel satisfies the reproducing property $\langle\cK(\cdot, x), f\rangle_{\cH_{\cK}} = f(\bx)$ where $\langle\cdot, \cdot\rangle_{\cH_{\cK}}$ is the inner product in $\cH_{\cK}$. We also define $\cH_{\cK}^{d} = \cH_{\cK}\times \cdots\times \cH_{\cK}$ to be a $d$-dimensional RKHS with inner product defined as the summation of component-wise inner products. Let $L_{\bX_{t}}^{2}$ be the function space with components satisfying  $\mE_{\bX_{t}}[f(\bX_{t})^{2}]\leq \infty$.  

\par
In this paper, the goal is to sample denoised data from the target distribution when only error-contaminated data are observed. Let $\{\bX_{0}^{(i)}\}_{i=1}^{n}$ be samples from the target distribution $P_{\bX_{0}}$ of the $d$-dimensional error-free data. We only observe the error-contaminated dataset $\{\bX_{1}^{(i)}\}_{i=1}^{n}$, with $\bX_{1}^{(i)} = \bX_{0}^{(i)} + \bepsilon^{(i)}$ for $i=1,\cdots, n$, where $\bX_{0}^{(i)}\perp\bepsilon^{(i)}$, and $\{\bepsilon^{(i)}\}_{i=1}^{n}$ are independent random noises following $\cN(\bzero, \bbSig)$. To focus on the main idea, we mainly consider the case where the measurement error is normal with zero mean and known variance. For normal measurement error with unknown variance, we provide a discussion in Supplementary Material \ref{app:unknown-sigma}, where we estimate the noise variance from error-contaminated data and then plug the estimate into our method. 

Moreover, we define the level-controlled error-contaminated data $\bX_{t} = \bX_{0} + \bbSig^{\frac{1}{2}}\bW_{t}$, and denote its distribution and density by $P_{\bX_{t}}$ and $p_{t}$, respectively, where $\bW_{t}\in\bbR^{d}$ is the standard Brownian motion. Note that $\bX_{t}$ need not have a direct real-world interpretation, and we do not require an actual Brownian motion connecting the error-free and error-contaminated observations. The level-controlled error-contaminated data are introduced only as a conceptual device for understanding the diffusion model.
\par
\vspace{-0.3in}
\paragraph{Preliminaries to Diffusion Model.} The diffusion model was first proposed in \citep{sohl2015deep} as a data generation technique in the AI community. Theoretically, the technique is related to multi-step variational auto-encoders \citep{kingma2013auto} and Langevin dynamics sampling \citep{hyvarinen2005estimation,vincent2011connection,chewi2023log}. Recently, with the rapid development of deep learning, \citet{ho2020denoising} and \citet{song2020score} revisited this technique and successfully applied it to high-resolution image generation tasks. In this paper, the diffusion model framework is built upon continuous-time SDEs \citep{oksendal2013stochastic} as in \citep{song2020denoising}. That is, the diffusion process $\bX_{t}\in\bbR^{d}$ ($t\in[0, 1]$) satisfies  
\begin{equation}\label{eq:sde}
	d\bX_{t} = \textbf{f}(\bX_{t}, t)dt + \textbf{g}(t)d\bW_{t}, 
\end{equation}
starting from $\bX_{0}\sim P_{\bX_{0}}$ with known $\textbf{f}\in\bbR^{d}$, $\textbf{g}\in\bbR^{d\times d}$, and standard Brownian motion $\bW_{t}$. Under proper regularity conditions \citep{anderson1982reverse}, the following reverse-time SDE $\tX_{t}$
\begin{equation}\label{eq:reverse sde}
	d\tX_{t} = -\left(\textbf{f}(\tX_{t}, t) - \textbf{g}(t)\textbf{g}(t)^{\top}\nabla\log{p_{1 - t}(\tX_{t})}\right)dt + \textbf{g}(t)d\bW_{t},
\end{equation}
starting from $\tX_{0}\sim P_{\bX_{1}}$ satisfies that $P_{\tX_{t}} = P_{\bX_{1 - t}}$ for any $t\in[0, 1]$.  

The goal of a diffusion model is to generate data from $P_{\bX_{0}}$ given observations $\{\bX_{0}^{(i)}\}_{i=1}^{n}$ from it. By properly choosing $\textbf{f}$ and $\textbf{g}$ in \eqref{eq:sde}, the terminal distribution $P_{\bX_{1}}$ is usually easy to sample from, e.g., it is often normal \citep{song2020score,liu2022flow}. Then, starting from an easily sampled $\tX_{0}\sim P_{\bX_{1}}$, solving \eqref{eq:reverse sde} can produce the observation $\tX_{1}$, which follows the target distribution $P_{\bX_{0}}$. To do so, one needs to estimate the ``score function'' $\nabla\log{p_{t}}$ of $\bX_{t}$ by minimizing the following score-matching objective:  
\begin{equation}\label{eq:score matching}
	\min_{\btheta_{t}} \left\{\mE\left[\left\|\bs_{t}(\bX_{t}; \btheta_{t}) - \nabla\log{p_{t}(\bX_{t})}\right\|^{2}\right]\right\}; \qquad t\in[0, 1],
\end{equation}   
where $\bs_{t}(\cdot; \btheta_{t})$ is the specified model for the score function. 
However, the objective function above is infeasible, because the target $\nabla\log{p_{t}(\bx)}$ is unknown. To address this, \citet{hyvarinen2005estimation} and \citet{vincent2011connection} propose the following feasible equivalent objective function: 
\begin{equation}\label{eq:objective}
	\min_{\btheta_{t}} \left\{\mE\left[\left\|\bs_{t}(\bX_{t}; \btheta_{t})\right\|^{2}\right] + 2\tr\left\{\mE\left[\nabla\bs_{t}(\bX_{t}; \btheta_{t})\right]\right\}\right\}; \qquad t\in[0, 1]. 
\end{equation}
In data generation task, observations from $P_{\bX_{t}}$ can be obtained by implementing the forward SDE \eqref{eq:sde} starting from data $\{\bX_{0}^{(i)}\}_{i=1}^{n}$. Unfortunately, this is unsuitable for the measurement error problem studied in this paper, because the error-contaminated data $\bX_{1}$, rather than the error-free $\bX_{0}$ from the target distribution $P_{\bX_{0}}$ is observed. We present our solution to this problem later in Section \ref{sec:DEP}.   

\vspace{-0.1in}
\section{The Estimation and Denoising Procedure}\label{sec:DEP}
\vspace{-0.1in}

We use the diffusion model \citep{song2020score,ho2020denoising} to generate denoised data that closely mirror the distribution of error-free data. Diffusion models have achieved strong empirical performance across a wide range of tasks, especially in generative modeling. As discussed in Section \ref{sec:introduction}, our idea to generate approximately error-free data is to implement the reverse SDE \eqref{eq:reverse sde} starting from the observed error-contaminated data $\{\bX_{1}^{(i)}\}_{i=1}^{n}$. To do so, we need to estimate the ``score function'' \citep{vincent2011connection} by solving problem \eqref{eq:objective}. Unfortunately, solving \eqref{eq:objective} requires data from $P_{\bX_{t}}$, which are unobserved in practice. Next, we present our solution to this problem. 

\vspace{-0.07in}
\subsection{Estimating Score Function}\label{subsec:Estimating Score Function}
\vspace{-0.07in}

Note that due to the construction, the level-controlled error-contaminated data $\bX_{t} = \bX_{0} + \bbSig\bW_{t}$ satisfies  
\vspace{-0.1in}
\begin{equation}\label{eq:sode}
	d\bX_{t} = \bbSig^{\frac{1}{2}}d\bW_{t}, \qquad t\in[0, 1],
\end{equation}
\vspace{-0.1in}
which is a special case of SDE \eqref{eq:sde} with $\textbf{f} = \textbf{0}$ and $\textbf{g} = \bbSig$, i.e., the variance exploding SDE in \citep{song2020score}. As discussed, under proper regularity conditions (Supplementary Material \ref{app:regularity conditions}), $\tX_{t}$ follows the  \emph{reverse SDE}: 

\vspace{-0.2in}
\begin{equation}\label{eq:reverse time sde}
	d\tX_{t} = \bbSig\nabla\log{p_{1 - t}(\tX_{t})}dt + \bbSig^{\frac{1}{2}} d\bW_{t},	
\end{equation}
has the same distribution as $\bX_{1 - t}$ in \eqref{eq:sode} for each $t$. Thus, we can get $\bX_{0} = \tX_{1}$ by implementing the reverse SDE \eqref{eq:reverse time sde} starting from the error-contaminated observation $\tX_{0} = \bX_{1}$. Please refer to  Proposition \ref{pro:density} in Supplementary Material \ref{app:discussion on reverse time sde} or  \cite{anderson1982reverse} for more details.  
\par
To implement the reverse SDE \eqref{eq:reverse time sde} in practice, we must estimate the score function $\nabla\log{p_{t}(\cdot)}$ for each $t$. However, the standard objective \eqref{eq:objective} cannot be estimated directly because we do not have data from $P_{\bX_{t}}$. To resolve this, we use the following useful lemma from \cite{stefanski1995simulation}. With this lemma, we show that the target \eqref{eq:objective} can be properly estimated without samples from $P_{\bX_{t}}$. To simplify presentations, we assume that expectation and series summation are exchangeable for the Taylor series of an analytic function.
\vspace{-0.1in}
\begin{lemma}\label{lem: complex debias}
	Let $\bxi_{1}$ and $\bxi_{2}$ be mean-zero independent and identically distributed normal vectors. For any analytic function $f(\cdot)$ on $\bbC^{d}$, it holds that $\mE_{\bxi_{1}, \bxi_{2}}[f(\bx + \bxi_{1} + \mathrm{i}\bxi_{2})] = f(\bx)$, where $\mathrm{i} = \sqrt{-1}$ is the imaginary unit.
\end{lemma}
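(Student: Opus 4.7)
The plan is to expand $f$ in its Taylor series around $\bx$ and show that after taking expectation, every term with a nonzero multi-index vanishes, leaving only $f(\bx)$. Since the problem statement allows us to exchange expectation with the Taylor series (the analyticity assumption plus the stated convention), the crux of the argument is to establish that all moments of the complex Gaussian vector $\bZ := \bxi_1 + \mathrm{i}\bxi_2$ of order at least one are zero.

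The cleanest way to prove this is via the moment generating function. For any $\bs \in \bbR^d$, by independence of $\bxi_1$ and $\bxi_2$ and the well-known Gaussian characteristic/moment generating function formulas, one has
\begin{equation*}
\mE\!\left[e^{\bs^\top \bZ}\right] \;=\; \mE\!\left[e^{\bs^\top \bxi_1}\right]\mE\!\left[e^{\mathrm{i}\bs^\top \bxi_2}\right] \;=\; e^{\frac{1}{2}\bs^\top \bbSig \bs}\cdot e^{-\frac{1}{2}\bs^\top \bbSig \bs} \;=\; 1.
\end{equation*}
(This can be extended to $\bs \in \bbC^d$ by analytic continuation, but the real case suffices.) Differentiating both sides in $\bs$ and evaluating at $\bs = \bzero$ yields $\mE[\bZ^\alpha] = 0$ for every multi-index $\alpha$ with $|\alpha| \ge 1$, whereas $\mE[\bZ^{\bzero}] = 1$.

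Combining these two ingredients, I would write
\begin{equation*}
\mE_{\bxi_1,\bxi_2}\!\left[f(\bx + \bZ)\right] \;=\; \sum_{\alpha} \frac{1}{\alpha!}\,\partial^{\alpha} f(\bx)\,\mE\!\left[\bZ^{\alpha}\right] \;=\; f(\bx)\cdot\mE[1] \;+\; \sum_{|\alpha|\ge 1}\frac{1}{\alpha!}\,\partial^{\alpha} f(\bx)\cdot 0 \;=\; f(\bx),
\end{equation*}
which is the claim.

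The only mildly delicate point is the vanishing of the mixed moments of $\bZ$. A naive componentwise computation via the binomial theorem $(\xi_{1,j}+\mathrm{i}\xi_{2,j})^{\alpha_j} = \sum_{k} \binom{\alpha_j}{k}\mathrm{i}^{k}\xi_{1,j}^{\alpha_j-k}\xi_{2,j}^{k}$ combined with Isserlis'/Wick's theorem works but is bookkeeping-heavy, whereas the MGF identity above handles arbitrary $\bbSig$ and arbitrary dimension $d$ in one line, so I would prefer the MGF route. No interchange of limits beyond what the paper has already granted is needed, and no further regularity on $f$ is required beyond analyticity.
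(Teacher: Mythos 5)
Your proposal is correct, and it is in fact more complete than what the paper itself provides: the paper does not prove Lemma \ref{lem: complex debias} but cites it from \citet{stefanski1995simulation} and offers only a univariate intuition, namely that $\mE[(\xi_{1}+\mathrm{i}\xi_{2})^{r}]=0$ for every $r\ge 1$ ``by the binomial theorem and straightforward calculation,'' after which the polynomial case and the Taylor-series extension are asserted. Your skeleton is the same as that sketch --- expand $f$ in its Taylor series about $\bx$, invoke the paper's stated convention to exchange expectation and summation, and reduce everything to the vanishing of all positive-order moments of $\bZ=\bxi_{1}+\mathrm{i}\bxi_{2}$ --- but your route to the moment vanishing is genuinely different and cleaner: the factorization $\mE[e^{\bs^{\top}\bZ}]=e^{\frac{1}{2}\bs^{\top}\bbSig\bs}\,e^{-\frac{1}{2}\bs^{\top}\bbSig\bs}=1$, which exploits exactly the i.i.d.\ assumption (equal covariances) and handles arbitrary dimension and arbitrary covariance in one stroke, avoiding the componentwise binomial/Isserlis bookkeeping that the paper's ``straightforward calculation'' implicitly requires. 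The only step you pass over lightly is differentiating under the expectation to read off $\mE[\bZ^{\alpha}]$ from the identically-one function on $\bbR^{d}$; since $|e^{\bs^{\top}\bZ}|=e^{\bs^{\top}\bxi_{1}}$ is locally uniformly integrable for Gaussian $\bxi_{1}$, this is standard and does not constitute a gap. In short: correct proof, same high-level decomposition as the paper's sketch, with a tidier and fully multivariate argument for the key moment identity.
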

\vspace{-0.1in}

Next, we provide some intuition for Lemma \ref{lem: complex debias}. We consider the univariate case where $\xi_{1}$ and $\xi_{2}$ are standard normal variables. Then, for any positive integer $r$,  $\mE[(\xi_{1} + i\xi_{2})^{r}] = 0$ and hence $\mE[(x + \xi_{1} + i\xi_{2})^{r}] = x^{r}$ according to the binomial theorem and straightforward calculation. This suggests that Lemma \ref{lem: complex debias} holds for all polynomials. The generalization of this lemma to general analytic functions is straightforward, since analytic functions can be expressed as Taylor series under regularity conditions. 
The following corollary based on Lemma \ref{lem: complex debias} paves the way to estimate the score function of $\bX_{t}$ based solely on the error-contaminated data $\bX_{1}$. 

\vspace{-0.1in}
\begin{corollary}\label{coro:debias}
	Let $\bxi \sim \cN(\bzero, \bbSig)$ be a random vector independent of $\bX_{0}$ and $\beps$. If $\bs_{t}(\bX_{t}; \btheta_{t})$ and $\nabla\bs_{t}(\bX_{t}; \btheta_{t})$ are both analytic, then, with $\mathrm{i} = \sqrt{-1}$ denoting the imaginary unit, we have  
	\begin{equation}\label{eq:complex objective}
		\begin{aligned}
			& \mE_{\bX_{t}}\left[\left\|\bs_{t}(\bX_{t}; \btheta_{t})\right\|^{2}\right] + 2\tr\left\{\mE_{\bX_{t}}\left[\nabla\bs_{t}(\bX_{t}; \btheta_{t})\right]\right\} = \\
			& \mE_{\bX_{1}, \bxi}\left[\left\|\bs_{t}(\bX_{1} + \mathrm{i}\sqrt{1 - t}\bxi; \btheta_{t})\right\|^{2}\right] + 2\tr\left\{\mE_{\bX_{1}, \bxi}\left[\nabla\bs_{t}(\bX_{1} + \mathrm{i}\sqrt{1 - t}\bxi; \btheta_{t})\right]\right\}, 
		\end{aligned}
	\end{equation}
\end{corollary}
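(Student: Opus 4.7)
The plan is to reduce the corollary to the single distributional identity
\begin{equation*}
\mE_{\bX_t}[g(\bX_t)] \;=\; \mE_{\bX_1, \bxi}\bigl[g\bigl(\bX_1 + \mathrm{i}\sqrt{1-t}\,\bxi\bigr)\bigr]
\end{equation*}
for an arbitrary analytic function $g$ on $\bbC^d$, and then apply it with $g(\bx) = \|\bs_t(\bx;\btheta_t)\|^2$ (interpreted as the analytic extension $\sum_j s_{t,j}(\bx)^2$, which agrees with the Euclidean norm on real inputs) and with $g(\bx) = \tr\{\nabla\bs_t(\bx;\btheta_t)\}$. Both choices are analytic by the hypotheses placed on $\bs_t$ and $\nabla\bs_t$, and summing the two resulting identities with weights $1$ and $2$ yields \eqref{eq:complex objective}.

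The core of the argument is a \emph{splitting} of the measurement error that is tailored to Lemma \ref{lem: complex debias}. Using the reproductive property of the normal distribution, write $\bepsilon \stackrel{d}{=} \sqrt{t}\,\bepsilon_1 + \sqrt{1-t}\,\bepsilon_2$, where $\bepsilon_1, \bepsilon_2 \sim \cN(\bzero, \bbSig)$ are independent and independent of $\bX_0$. Then
\begin{equation*}
\bX_1 + \mathrm{i}\sqrt{1-t}\,\bxi \;\stackrel{d}{=}\; \bX_0 + \sqrt{t}\,\bepsilon_1 + \sqrt{1-t}\bigl(\bepsilon_2 + \mathrm{i}\,\bxi\bigr).
\end{equation*}
Because $\bepsilon_2$ and $\bxi$ are i.i.d.\ $\cN(\bzero, \bbSig)$, applying Lemma \ref{lem: complex debias} to the analytic map $\by \mapsto g\bigl(\bX_0 + \sqrt{t}\,\bepsilon_1 + \sqrt{1-t}\,\by\bigr)$ conditional on $(\bX_0, \bepsilon_1)$ collapses the pair $(\bepsilon_2, \bxi)$ and leaves $g(\bX_0 + \sqrt{t}\,\bepsilon_1)$. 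Finally, $\bX_0 + \sqrt{t}\,\bepsilon_1 \stackrel{d}{=} \bX_0 + \bbSig^{1/2}\bW_t = \bX_t$, so taking a further expectation over $(\bX_0, \bepsilon_1)$ delivers $\mE_{\bX_t}[g(\bX_t)]$ as desired.

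The main obstacle is conceptual rather than computational: one must interpret the squared-norm in the first term as the analytic bilinear extension $\sum_j s_{t,j}(\bx)^2$ rather than the Hermitian norm $\sum_j |s_{t,j}(\bx)|^2$, since only the former is analytic in $\bx$ and hence admits Lemma \ref{lem: complex debias}. Once this identification is made the only other technical item is justifying the exchange of expectation and Taylor series used in Lemma \ref{lem: complex debias}, which is already absorbed into the standing assumption stated just before that lemma. Everything else is an application of Fubini and the decomposition above.
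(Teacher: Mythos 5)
Your proof is correct and mirrors the paper's own argument: both decompose the error $\bepsilon$ into an independent sum of two Gaussian pieces, apply Lemma~\ref{lem: complex debias} conditionally to collapse the matched pair $(\bepsilon_2,\bxi)$ of i.i.d.\ normals, and identify the residual $\bX_0+\sqrt{t}\,\bepsilon_1$ with $\bX_t$ in distribution, before specializing to the two analytic functions that make up the score-matching objective. One small remark: you write the decomposition correctly as $\bepsilon\stackrel{d}{=}\sqrt{t}\,\bepsilon_1+\sqrt{1-t}\,\bepsilon_2$, whereas the paper's displayed decomposition uses the coefficient $t$ rather than $\sqrt{t}$ on $\bxi_1$, which is a typo (the variances would not match $\bbSig$, nor would $\bX_0+t\bxi_1$ equal $\bX_t$ in distribution); your version is the one that actually makes the argument go through.
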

Note that here $\|\bz\|^{2}$ denotes $\bz^{\T}\bz$ for any $\bz\in\bbC^{d}$ instead of its norm. We adopt this convention because $\bz^{\T}\bz$ is analytic while the norm is not.
The proof of this corollary is provided in Supplementary Material \ref{app:proof in estimating score function}. 

According to Corollary \ref{coro:debias}, we can directly estimate the objective \eqref{eq:objective} by minimizing: 
\begin{equation}\label{eq:new complex objective}
	\mE_{\bX_{1}, \bxi}\left[\left\|\bs_{t}(\bX_{1} + \mathrm{i}\sqrt{1 - t}\bxi; \btheta_{t})\right\|^{2}\right] + 2\tr\left\{\mE_{\bX_{1}, \bxi}\left[\nabla\bs_{t}(\bX_{1} + \mathrm{i}\sqrt{1 - t}\bxi; \btheta_{t})\right]\right\},
\end{equation}
without requiring observations from $P_{\bX_{t}}$. Ideally, with objective \eqref{eq:new complex objective}, we can start from $\bX_{1}^{(i)}$ and substitute the ground-truth score $\nabla\log{p_{t}}(\cdot)$ with its approximation $\bs_{t}(\cdot;\btheta_{t})$ in \eqref{eq:reverse sde} to generate the desired denoised data. 


\begin{remark}
	In this paper, we consider the case where the measurement errors $\{\bepsilon_{1}^{(i)}\}_{i=1}^{n}$ follow a Gaussian distribution. This is motivated by the prevalence of Gaussian errors in real data \citep{harchol2023introduction} and especially in empirical Bayes analysis \citep{efron2009empirical}. A natural question is whether the current framework can be extended to more general error distributions. Unfortunately, Lemma \ref{lem: complex debias}, which enables the estimation of the score function based on observed data, relies heavily on the Gaussian assumption. On the other hand, the idea of ``diffusion-based denoising'' can, in principle, be extended to other error distributions by considering flow-based models \citep{liu2022flow,lipman2022flow}. However, to make such procedures feasible, new methods beyond our Corollary \ref{coro:debias} are needed to resolve the problem of estimating the score of the intermediate data distribution.
\end{remark}

\vspace{-0.2in}
\subsection{Kernel-Based Score Function}\label{sec:kernel-based function}
In this section, we consider the model strategy of the score function.
Note that in \eqref{eq:new complex objective}: (i) the model $\bs_{t}(\cdot;\btheta_{t})$ is supposed to be analytic in $\bbC^{d}$, a requirement that is violated by many commonly used nonparametric basis functions such as B-splines; (ii) computing the objective function \eqref{eq:objective} requires the Jacobian of the model and integration with respect to a complex normal vector, which may cause a significant computational burden and numerical instability during minimization. To address these issues, one needs to specify a model $\bs_{t}(\cdot;\btheta_{t})$ that is: (i) analytic; (ii) capable of approximating nonparametric smooth functions; and (iii) permits the explicit computation of the expectation with respect to $\mathrm{i}\sqrt{1 - t}\bxi$. These properties ensure the nonparametric validity of the score estimator and the computational efficiency of the loss-function evaluation at the sample level. 
To this end, we develop a method that satisfies all the above requirements based on the $d$-dimensional RKHS (see Supplementary Material \ref{app:rkhs} for more details) $\cH_{\cK}^{d}$ associated with the positive semi-definite Gaussian kernel: 
\begin{equation}\label{eq:Gaussian kernel}
	\cK(\bx_{1}, \bx_{2}) = \exp\left\{-(\bx_{1} - \bar{\bx}_{2})^{\top}\bbH(\bx_{1} - \bar{\bx}_{2})\right\}
\end{equation}
defined on $\bbC^{d}\times \bbC^{d}$, where $\bar{\bx}_{2}$ is the conjugate of $\bx_{2}$, and $\bbH$ is an invertible real-valued matrix. 

Specifically, we model the score function as a linear combination of Gaussian kernels. Such a model is flexible because the Gaussian kernel is universal in the sense that its linear combination can approximate any continuous function to arbitrary accuracy in any compact set \citep{micchelli2006universal}.
To model the score, we assume there is a real vector-valued function $\bbf_{t}$ satisfying 
$\nabla\log{p_{t}}(\bx) = \mE_{\bX_{t}}[\cK(\bx, \bX_{t})\bbf_{t}(\bX_{t})]$ and $\mE_{\bX_{t}}[\bbf_{t}(\bX_{t})^{2}] < \infty$. Moreover, to make this formulation computable, noting that $\bX_{1} = \bX_{t} + \bbSig^{\frac{1}{2}}(\bW_{1} - \bW_{t})$, $ \bbSig^{\frac{1}{2}}(\bW_{1} - \bW_{t}) \sim \cN(\bzero, (1 - t)\bbSig)$  and $ \bbSig^{\frac{1}{2}}(\bW_{1} - \bW_{t}) \Perp \bX_{t}$, we further have $\nabla\log{p_{t}}(\bx) = \mE_{\bxi}[\cK(\bx, \bX_{1} + \mathrm{i}\sqrt{1 - t}\bxi)\bbf_{t}(\bX_{t})]$ by combining Lemma \ref{lem: complex debias} and the discussion above. Thus, the score function $\nabla\log{p_{t}}$ can be approximated by
\begin{equation}\label{eq: approximate mean}
	\nabla\log{p_{t}}(\bx)
	\approx\frac{1}{m}\sum_{i = 1}^{m}\bbf_{t}(\bX_{t}^{(i)})\mE_{\bxi}[\cK(\bx, \bX_{1}^{(i)} + \mathrm{i}\sqrt{1 - t}\bxi)]
	= \frac{1}{m}\sum_{i = 1}^{m}\bbf_{t}(\bX_{t}^{(i)})\cK_{t}(\bx, \bX_{1}^{(i)}),
\end{equation}
which only depends on a subsample of size $m \ll n$, where 
\begin{equation}\label{eq:basis s}
	\cK_{t}(\bx_{1}, \bx_{2}) = \mE_{\bxi}[\cK(\bx_{1}, \bx_{2} + \mathrm{i}\sqrt{1 - t}\bxi)] = \sqrt{|\bbOmega_{t}|^{-1}|\bbOmega|}
	\exp\left\{ - (\bx_{1} - \bx_{2})^{\top}\bbH_{t}(\bx_{1} - \bx_{2})\right\}
\end{equation}
for any real vectors $\bx_{1}$ and $\bx_{2}$, $\bbOmega = (2\bbSig)^{-1}$, $\bbOmega_{t} = \bbOmega - (1 - t)\bbH$ and $\bbH_{t} = \{\bbH^{-1} - (1 - t)\bbOmega^{-1}\}^{-1}$. The derivation of \eqref{eq:basis s} is in Supplementary Material \ref{app:proofs in kernel based function}. We use $m$ samples instead of all $n$ samples to estimate $\nabla\log{p_{t}}(x)$, as in standard kernel ridge regression \citep{zhang2015divide,zhou2020nonparametric}, because the resulting $nd$ basis functions would make the computation cumbersome in practice. 
This trick can be viewed as an extension of the random sketch technique adopted in kernel ridge regression which improves computational efficiency while maintaining the optimal convergence rate \citep{yang2017randomized}. The choice of $m$ will be discussed in Section \ref{sec:generalization analysis}.

\vspace{-0.1in}\begin{remark}\label{remark:representation}
	The assumption $\nabla\log p_t(\bx)=\mE_{\bX_t}[\cK(\bx,\bX_t)\bbf_t(\bX_t)]$ for some $\bbf_t\in L^2_{\bX_t}$ can be viewed as a smoothness condition, slightly stronger than assuming $\nabla\log p_t\in \cH_{\cK}^d$. For simplicity, consider $d=1$. By Mercer's theorem, the positively definite kernel operator $\cT_{\cK}$ admits eigenpairs $\{(\lambda_j,\phi_j)\}_{j\ge1}$. Since a function $g=\sum_j \alpha_j\phi_j$ belongs to $\cH_{\cK}$ iff $\sum_j \alpha_j^2/\lambda_j<\infty$, which implies $\cH_{\cK}=\mathrm{Range}(\cT_{\cK}^{1/2})$. In contrast, the above assumption requires $g=\cT_{\cK}f_t\in\mathrm{Range}(\cT_{\cK})\subset \mathrm{Range}(\cT_{\cK}^{1/2})$ for some $f_t\in L^2_{\bX_t}$, so $g=\sum_j \lambda_j c_j\phi_j$, or equivalently $\sum_j \alpha_j^2/\lambda_j^2<\infty$. Since $\lambda_j\to0$, this condition is stronger and requires the high-frequency components of the score function to decay faster.
\end{remark}

Next, we study the approximation error of the kernel-based model. According to \eqref{eq: approximate mean}, we propose to approximate $\nabla\log{p_{t}}$ by elements in $\cH_{\cK_{t}, m}^{d} = \cH_{\cK_{t}, m}\times \cdots\times \cH_{\cK_{t}, m}$ where $\cH_{\cK_{t}, m}\subset \cH_{\cK_{t}}$ is the linear span of $\{\cK_{t}(\bx, \bX_{1}^{(i)})\}_{i=1}^{m}$. We use $\btheta_{t, l} = (\theta_{t,l}^{(1)},\dots, \theta_{t, l}^{(m)})^{\top}\in\bbR^{m}$ and $\btheta_{t} = (\btheta_{t, 1}, \dots; \btheta_{t, d})\in\bbR^{m\times d}$ to represent the parameters in our kernel-based model $\bs_{t}(\bx; \btheta_{t})$, which is defined as:
\begin{equation}\label{eq: kernel model}
	\bs_{t}(\bx; \btheta_{t}) = (s_{t1}(\bx; \btheta_{t}), \dots, s_{td}(\bx; \btheta_{t}))^{\top}\text{;}\enspace s_{tl}(\bx; \btheta_{t}) = \sum_{i = 1}^{m}\theta_{t, l}^{(i)}\cK_{t}(\bx, \bX_{1}^{(i)}),\qquad l = 1,\dots, d
\end{equation}

\vspace{-0.1in}
\begin{remark}
	A natural alternative strategy is to model $\bs_{t}(\cdot;\btheta_{t})$ as a U-net neural network \citep{ronneberger2015u}, as is common practice in diffusion model training \citep{huang2023scalelong}. Unfortunately, U-nets and other neural networks are typically \textbf{non-analytic} and hence fail to meet the requirement of Corollary \ref{coro:debias} due to the use of non-analytic activation functions such as ReLU or Sigmoid \citep{lecun2015deep}. 
\end{remark}

\vspace{-0.1in}\begin{proposition}\label{prop:approximation error}
	Under Assumptions \ref{ass:lip continuity} and \ref{ass:subexp} in Supplementary Material, define 
	\begin{equation*}
		\delta_{0m}(\eta) = 2\left(\frac{H_{0}}{m} + \frac{\sigma_{0}}{\sqrt{m}}\right)\log{\left(\frac{2}{\eta}\right)},
	\end{equation*}
	where $\sigma_{0}, H_{0}$ are the constants in the Bernstein-type concentration condition, Assumption \ref{ass:subexp}.
	Then, 
	\begin{equation*}
		\inf_{\bs_{t}(\cdot;\btheta_{t})\in\cH_{\cK_{t}, m}^{d}}\mE_{\bX_{t}}\left[\left\|\bs_{t}(\bX_{t}; \btheta_{t}) - \nabla\log{p_{t}}(\bX_{t})\right\|^{2}\right] \leq 2\delta_{0m}(\eta)^{2} + \frac{2\sigma^{2}_{0}}{m}  
	\end{equation*}
	holds with probability at least $1 - \eta$ ($0 < \eta < 1$), where the randomness is from that of $\cH_{\cK_{t}, m}^{d}$.
\end{proposition}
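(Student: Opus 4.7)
}
The strategy is to exhibit an explicit element of the random subspace $\cH_{\cK_{t},m}^{d}$ whose $L_{2}(P_{\bX_{t}})$-error from $\nabla\log p_{t}$ matches the stated bound; the infimum on the left-hand side is then automatically no larger. A natural candidate, suggested by the derivation leading to \eqref{eq: approximate mean}, is
\begin{equation*}
\hat{\bs}_{t}(\bx) \;=\; \frac{1}{m}\sum_{i=1}^{m}\bbf_{t}(\bX_{t}^{(i)})\,\cK_{t}(\bx,\bX_{1}^{(i)}).
\end{equation*}
Each of its $d$ components is a linear combination of $\{\cK_{t}(\cdot,\bX_{1}^{(i)})\}_{i=1}^{m}$, so $\hat{\bs}_{t}\in\cH_{\cK_{t},m}^{d}$.

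\paragraph{Unbiasedness and reduction to concentration.} By applying the complex debiasing identity of Lemma \ref{lem: complex debias} to the analytic Gaussian kernel $\cK(\bx,\cdot)$ and carrying out the Gaussian integral in the imaginary perturbation $\mathrm{i}\sqrt{1-t}\,\bxi$ (which produces the closed form $\cK_{t}$ displayed in \eqref{eq:basis s}), the argument behind \eqref{eq: approximate mean} gives $\mE[\bbf_{t}(\bX_{t})\,\cK_{t}(\bx,\bX_{1})]=\nabla\log p_{t}(\bx)$. Consequently the summands $V_{i}(\cdot)=\bbf_{t}(\bX_{t}^{(i)})\cK_{t}(\cdot,\bX_{1}^{(i)})$ are i.i.d.\ random elements of the Hilbert space $\cF=L_{2}(P_{\bX_{t}})^{d}$ with mean $\nabla\log p_{t}$, and $\hat{\bs}_{t}$ is their empirical average. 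The target quantity rewrites as
\begin{equation*}
\mE_{\bX_{t}}\|\hat{\bs}_{t}(\bX_{t})-\nabla\log p_{t}(\bX_{t})\|^{2} \;=\; \bigl\|\hat{\bs}_{t}-\nabla\log p_{t}\bigr\|_{\cF}^{2},
\end{equation*}
so bounding the infimum becomes a concentration-of-measure problem for an i.i.d.\ mean in $\cF$.

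\paragraph{Bernstein step and conclusion.} The Bernstein-type moment assumption (Assumption \ref{ass:subexp} with parameters $H_{0},\sigma_{0}$) controls the Hilbert-space moments of the centered summands $V_{i}-\mE V_{i}$, so a Pinelis--Bernstein inequality for i.i.d.\ Hilbert-space-valued sums yields
\begin{equation*}
\bigl\|\hat{\bs}_{t}-\nabla\log p_{t}\bigr\|_{\cF} \;\leq\; \Bigl(\tfrac{H_{0}}{m}+\tfrac{\sigma_{0}}{\sqrt{m}}\Bigr)\log\tfrac{2}{\eta} \;+\; \tfrac{\sigma_{0}}{\sqrt{m}}
\end{equation*}
with probability at least $1-\eta$; the first summand is $\tfrac{1}{2}\delta_{0m}(\eta)$, and the second is an unavoidable mean-deviation term of order $1/\sqrt{m}$. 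Squaring and using $(a+b)^{2}\leq 2a^{2}+2b^{2}$ gives the stated bound $2\delta_{0m}(\eta)^{2}+2\sigma_{0}^{2}/m$. Combining this with the trivial inequality from Step~1 completes the proof.

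\paragraph{Main obstacle.} The delicate point is not the candidate construction (which is essentially forced by \eqref{eq: approximate mean}) but the Bernstein step: one must verify that, under the complex-valued Gaussian kernel machinery, the Hilbert-space moments $\sigma_{0},H_{0}$ appearing in Assumption \ref{ass:subexp} genuinely dominate $\mE\|V_{i}-\mE V_{i}\|_{\cF}^{2}$ and the corresponding sub-exponential Orlicz norm in $\cF$, and that $\bbOmega_{t}=\bbOmega-(1-t)\bbH$ is positive definite uniformly in $t\in[0,1]$ so that the closed-form kernel $\cK_{t}$ is legitimately produced by the integration of the imaginary perturbation. Once these regularity checks are in place, the Pinelis-style inequality and the algebra yielding the prefactor in $\delta_{0m}(\eta)$ are routine.
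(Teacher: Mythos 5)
Your proposal is correct and follows essentially the same route as the paper: the same candidate element (the subsample average $\frac{1}{m}\sum_{i}\bbf_{t}(\bX_{t}^{(i)})\cK_{t}(\cdot,\bX_{1}^{(i)})$, unbiased for $\nabla\log p_{t}$ by Lemma \ref{lem: complex debias}), reduction to concentration of an i.i.d.\ Hilbert-space-valued mean under Assumption \ref{ass:subexp}, and the same final decomposition into a deviation term of size $\delta_{0m}(\eta)$ plus a variance term $\sigma_{0}^{2}/m$ followed by squaring. The only cosmetic difference is that you invoke an off-the-shelf Pinelis--Bernstein inequality, whereas the paper derives the corresponding concentration of the $L^{2}(P_{\bX_{t}})$-norm around its expectation by a martingale-difference argument; the paper itself cites the same Pinelis-type result elsewhere, so the two are interchangeable here.
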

This proposition shows that the optimal kernel-based model $\bs_{t}(\cdot; \btheta_{t})$ approximates the desired score function with error of order at most $\cO(1 / m)$, which guarantees its effectiveness. 
Based on the formulation of $\bs_{t}(\cdot; \btheta_{t})$, the following proposition provides an explicit unbiased estimation of the desired objective \eqref{eq:new complex objective} to be minimized.
\vspace{-0.2in}\begin{proposition}\label{pro:explicit formulation}
	Defining $\bs_{t}(\bx; \btheta_{t})$ as in \eqref{eq: kernel model}, an unbiased estimation for the objective \eqref{eq:new complex objective} is 
	\begin{equation}\label{eq:quadratic programming}
		\begin{aligned}
			\tr\left\{\btheta_{t}^{\top}\BK_{t}^{(1)} + \btheta_{t}^{\top}\BK_{t}^{(2)}\btheta_{t}\right\},
		\end{aligned}
	\end{equation}
	where $\BK_{t}^{(1)}$ is a $m\times d$ matrix whose $(i, l)$-th element is $2(n - m)^{-1}\sum_{k = m + 1}^{n} \cK_{t}^{(1, l)}(\bX_{1}^{(k)}; \bX_{1}^{(i)})$, $\BK_{t}^{(2)}$ is a $m\times m$ matrix whose $(i,j)$-th element is $(n - m)^{-1}\sum_{k = m + 1}^{n}\cK_{t}^{(2)}(\bX_{1}^{(k)}; \bX_{1}^{(i)}, \bX_{1}^{(j)})$,  
	\begin{equation*}
		\begin{aligned}
			\cK_{t}^{(1, l)}(\bx; \bx_{1}) 
			& = -2\frac{|\bbOmega|}{\sqrt{|\bbOmega_{t}||\bbOmega_{t}^{(1)}|}}\be_{l}^{\top}\bbH_{t}^{(1)}(\bx - \bx_{1})\exp\left\{ - (\bx - \bx_{1})^{\top}\bbH_{t}^{(1)}(\bx - \bx_{1})\right\},
		\end{aligned}
	\end{equation*}
	and 
	\begin{equation*}
		\begin{aligned}
			\cK_{t}^{(2)}(\bx; \bx_{1}, \bx_{2}) 
			& = \frac{|\bbOmega|^{\frac{3}{2}}}{|\bbOmega_{t}||\bbOmega_{t}^{(2)}|^{\frac{1}{2}}}\exp\left\{ - (\bx_{1} + \bx_{2} - 2\bx)^{\top}\bbH_{t}^{(2)}(\bx_{1} + \bx_{2} - 2\bx)\right\}\\
			&\quad \times 
			\exp\left\{ - (\bx_{1} - \bx)^{\top}\bbH_{t}(\bx_{1} - \bx)\right\} \times \exp\left\{ - (\bx_{2} - \bx)^{\top}\bbH_{t}(\bx_{2} - \bx)\right\},
		\end{aligned}
	\end{equation*}
	for any $\bx$, $\bx_{1}$ and $\bx_{2}$. Here $\be_{l}$ is the $l$-th basis vector in the $d$-dimensional space, $\bbOmega_{t}^{(1)} = \bbOmega - (1 - t)\bbH_{t}$,
	$\bbH_{t}^{(1)} = (1 - t)\bbH_{t}\bbOmega_{t}^{(1)-1}\bbH_{t} + \bbH_{t}$, $\bbOmega_{t}^{(2)} = \bbOmega - 2(1 - t)\bbH_{t}$ and $\bbH_{t}^{(2)} = (1 - t)\bbH_{t}\bbOmega_{t}^{(2)-1}\bbH_{t}$. 
\end{proposition}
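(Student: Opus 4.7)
The plan is to evaluate the sample-level counterpart of \eqref{eq:new complex objective} by plugging in the closed-form kernel model \eqref{eq: kernel model} and carrying out the inner complex-Gaussian expectations analytically, then forming a sample average over the held-out observations $\{\bX_{1}^{(k)}\}_{k=m+1}^{n}$. Since $\bs_{t}(\bx;\btheta_{t})$ is linear in $\btheta_{t}$ once the basis $\{\cK_{t}(\cdot,\bX_{1}^{(i)})\}_{i=1}^{m}$ is fixed, the quantity $\|\bs_{t}(\cdot;\btheta_{t})\|^{2}=\bs_{t}^{\T}\bs_{t}$ is quadratic in $\btheta_{t}$ while $\tr\{\nabla\bs_{t}(\cdot;\btheta_{t})\}$ is linear, so the sample-level objective must take the form $\tr\{\btheta_{t}^{\top}\BK_{t}^{(1)}+\btheta_{t}^{\top}\BK_{t}^{(2)}\btheta_{t}\}$ and the task reduces to identifying the entries of $\BK_{t}^{(1)}$ and $\BK_{t}^{(2)}$ as sample averages. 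The key analytical identity used throughout is that for $\bxi\sim\cN(\bzero,\bbSig)$ and a symmetric matrix $\bA$ with $\bbOmega-(1-t)\bA\succ\bzero$ (recalling $\bbOmega=(2\bbSig)^{-1}$),
\[
\mE_{\bxi}\!\left[\exp\!\left\{-(\bz+\mathrm{i}\sqrt{1-t}\bxi)^{\top}\bA(\bz+\mathrm{i}\sqrt{1-t}\bxi)\right\}\right]=\sqrt{\tfrac{|\bbOmega|}{|\bbOmega-(1-t)\bA|}}\,\exp\!\left\{-\bz^{\top}\widetilde{\bA}\bz\right\},
\]
with $\widetilde{\bA}=\bA+(1-t)\bA(\bbOmega-(1-t)\bA)^{-1}\bA$; this is precisely the identity already invoked to derive \eqref{eq:basis s}.

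For the trace--gradient term I would differentiate \eqref{eq: kernel model} to obtain $\partial s_{tl}(\bx;\btheta_{t})/\partial x_{l}=-2\sum_{i}\theta_{t,l}^{(i)}\be_{l}^{\top}\bbH_{t}(\bx-\bX_{1}^{(i)})\cK_{t}(\bx,\bX_{1}^{(i)})$, substitute $\bx=\bX_{1}+\mathrm{i}\sqrt{1-t}\bxi$, and apply the identity with $\bA=\bbH_{t}$. The outer linear pre-factor $\be_{l}^{\top}\bbH_{t}(\bz+\mathrm{i}\sqrt{1-t}\bxi)$ integrates against the complex Gaussian to produce $\be_{l}^{\top}\bbH_{t}^{(1)}(\bx-\bX_{1}^{(i)})$, where $\bbH_{t}^{(1)}=\bbH_{t}+(1-t)\bbH_{t}\bbOmega_{t}^{(1)-1}\bbH_{t}$ with $\bbOmega_{t}^{(1)}=\bbOmega-(1-t)\bbH_{t}$; the determinant ratio $|\bbOmega|/(|\bbOmega_{t}||\bbOmega_{t}^{(1)}|)^{1/2}$ appearing in the statement comes from multiplying the normalizing constant of $\cK_{t}$ by the Gaussian-integral factor. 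Averaging over $\{\bX_{1}^{(k)}\}_{k=m+1}^{n}$ and summing over $l$ reassembles $\tr\{\btheta_{t}^{\top}\BK_{t}^{(1)}\}$ with exactly the stated $\cK_{t}^{(1,l)}$.

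For the squared-norm term I would expand $\|\bs_{t}(\bx;\btheta_{t})\|^{2}=\sum_{l}\sum_{i,j}\theta_{t,l}^{(i)}\theta_{t,l}^{(j)}\cK_{t}(\bx,\bX_{1}^{(i)})\cK_{t}(\bx,\bX_{1}^{(j)})$ and consolidate the product of two kernels by completing the square in $\bx$: the product equals an exponential quadratic in $\bx$ with matrix $2\bbH_{t}$ centered at $\tfrac{1}{2}(\bX_{1}^{(i)}+\bX_{1}^{(j)})$, times a constant factor $\exp\{-\tfrac{1}{2}(\bX_{1}^{(i)}-\bX_{1}^{(j)})^{\top}\bbH_{t}(\bX_{1}^{(i)}-\bX_{1}^{(j)})\}$. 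Substituting $\bx=\bX_{1}^{(k)}+\mathrm{i}\sqrt{1-t}\bxi$ and applying the Gaussian identity with $\bA=2\bbH_{t}$ produces the determinant ratio $|\bbOmega|^{3/2}/(|\bbOmega_{t}||\bbOmega_{t}^{(2)}|^{1/2})$ with $\bbOmega_{t}^{(2)}=\bbOmega-2(1-t)\bbH_{t}$, together with $\widetilde{2\bbH_{t}}=2\bbH_{t}+4\bbH_{t}^{(2)}$ where $\bbH_{t}^{(2)}=(1-t)\bbH_{t}\bbOmega_{t}^{(2)-1}\bbH_{t}$. Reversing the earlier square-completion identity on the $2\bbH_{t}$ portion of $\widetilde{2\bbH_{t}}$ splits the resulting single quadratic cleanly into the three factors $(\bX_{1}^{(i)}+\bX_{1}^{(j)}-2\bX_{1}^{(k)})^{\top}\bbH_{t}^{(2)}(\cdot)$, $(\bX_{1}^{(i)}-\bX_{1}^{(k)})^{\top}\bbH_{t}(\cdot)$ and $(\bX_{1}^{(j)}-\bX_{1}^{(k)})^{\top}\bbH_{t}(\cdot)$ displayed in the statement, and the $\sum_{l}\theta_{t,l}^{(i)}\theta_{t,l}^{(j)}$ sum assembles $\tr\{\btheta_{t}^{\top}\BK_{t}^{(2)}\btheta_{t}\}$.

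Unbiasedness is then immediate: because $\{\bX_{1}^{(k)}\}_{k=m+1}^{n}$ is independent of the basis-defining subsample and of $\bxi$, the empirical averages over $k$ have expectation equal to $\mE_{\bX_{1}}$ applied to the analytically-$\bxi$-integrated integrand, which reproduces \eqref{eq:new complex objective} exactly. The principal obstacle will be the bookkeeping in the quadratic term: completing the square in a sum of two kernel exponents, applying the complex Gaussian identity with $\bA=2\bbH_{t}$, and then \emph{re-expanding} the single exponent $\widetilde{2\bbH_{t}}$ back into the three-summand form in the statement requires careful tracking of the induced matrices $\bbH_{t}^{(1)},\bbH_{t}^{(2)},\bbOmega_{t}^{(1)},\bbOmega_{t}^{(2)}$ and their determinants. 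Verifying positive-definiteness of $\bbOmega_{t}^{(1)}$ and $\bbOmega_{t}^{(2)}$ is the main technical prerequisite, and holds provided the kernel bandwidth matrix $\bbH$ is chosen small enough relative to $\bbOmega$ so that all the complex Gaussian integrals encountered converge.
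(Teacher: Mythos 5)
Your proposal is correct and follows essentially the same route as the paper's proof: exploit the linear/quadratic dependence of the objective on $\btheta_{t}$, integrate out the complex Gaussian $\mathrm{i}\sqrt{1-t}\bxi$ in closed form to obtain $\cK_{t}^{(1,l)}$ and $\cK_{t}^{(2)}$, and replace the remaining $\mE_{\bX_{1}}$ by the average over the held-out observations $\{\bX_{1}^{(k)}\}_{k=m+1}^{n}$. The only difference is that you spell out the complex Gaussian integral identity and the square-completion bookkeeping that the paper dismisses as ``straightforward/tedious calculations,'' and your resulting matrices $\bbOmega_{t}^{(1)}, \bbOmega_{t}^{(2)}, \bbH_{t}^{(1)}, \bbH_{t}^{(2)}$ and determinant factors match the statement exactly.
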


The proof of this proposition is in Supplementary Material \ref{app:proofs in Section dep}. Notably, the expected objective function \eqref{eq:new complex objective} involves expectations with respect to $\bX_{1}$ and $\bxi$, whereas due to the formulation of our $\bs_{t}(\cdot; \btheta_{t})$, the expectation with respect to $\bxi$ can be explicitly computed without requiring a Monte Carlo approximation. See the proof of Proposition \ref{pro:explicit formulation} for more details. Moreover, as in the literature \citep{de2005learning,smale2003estimating,wainwright2019high}, we add a regularization term $\lambda\|\bs_{t}(\cdot; \btheta_{t})\|_{\cH_{\cK}^{d}}^{2}$ to our training objective \eqref{eq:quadratic programming}, where $\lambda > 0$ is a regularization parameter, and $\|\bs_{t}(\cdot; \btheta_{t})\|_{\cH_{\cK}^{d}}$ is the norm of $\bs_{t}(\cdot; \btheta_{t})$ in the $d$-dimensional RKHS. As discussed in the existing literature, such a regularization term is critical to avoid overfitting the kernel-based model \citep{smale2003estimating}. Using the $\lambda$ discussed in Theorem \ref{thm:gen error}, we obtain the \emph{final objective function} for each $t$,
\begin{equation}\label{eq:penalized quadratic programming}
	\min_{\btheta_{t}}\tr\left\{\btheta_{t}^{\top}\BK_{t}^{(1)} + \btheta_{t}^{\top}\BK_{t}^{(2)}\btheta_{t} + \lambda \btheta_{t}^{\top}\BK_{t}^{(0)}\btheta_{t}\right\},
\end{equation}
where $\BK_{t}^{(0)}\in \bbR^{m\times m}$ with $(i,j)$-th element as the inner product between $\cK_{t}(\cdot, \bX_{1}^{(i)})$ and $\cK_{t}(\cdot, \bX_{1}^{(j)})$ in $\cH_{\cK}$,
\begin{equation*}
	\begin{aligned}
		\cK_{t}^{(0)}\left(\bX_{1}^{(i)}, \bX_{1}^{(j)}\right)&\eqqcolon\left\langle\cK_{t}(\cdot, \bX_{1}^{(i)}), \cK_{t}(\cdot, \bX_{1}^{(j)})\right\rangle_{\cH_{\cK}}
		=
		\mE_{\bxi, \bxi^{\prime}}\left[\cK\left(\bX_{1}^{(i)} + \mathrm{i}\sqrt{1 - t}\bxi, \bX_{1}^{(j)} + \mathrm{i}\sqrt{1 - t}\bxi^{\prime}\right)\right]\\
		& = \sqrt{|\bbOmega_{t}^{(0)}|^{-1}|\bbOmega^{(0)}|}
		\exp\left\{ - \left(\bX_{1}^{(i)} - \bX_{1}^{(j)}\right)^{\top}\bbH_{t}^{(0)}\left(\bX_{1}^{(i)} - \bX_{1}^{(j)}\right)\right\}.
	\end{aligned}
\end{equation*}
Here $\bbOmega^{(0)} = (4\bbSig)^{-1}$, $\bbOmega_{t}^{(0)} = \bbOmega^{(0)} - (1 - t)\bbH$, $\bbH_{t}^{(0)} = \left\{\bbH^{-1} - (1 - t)\bbOmega^{(0)-1}\right\}^{-1}$, and $\bxi$, $\bxi^{\prime}$ are independent random vectors that follow $\cN(\bzero, \bbSig)$. Notably, one should choose $\bbH$ as a function of the variance $\bbSig$ to ensure that $\bbH_{t}$, $\bbH_{t}^{(0)}$, $\bbH_{t}^{(1)}$, $\bbH_{t}^{(2)}$, $\bbOmega_{t}^{(0)}$, $\bbOmega_{t}^{(1)}$ and $\bbOmega_{t}^{(2)}$ are all positive definite, which imposes restrictions on $\bbH$. A rule-of-thumb choice for $\bbH$ that fulfills the requirement is $\bbH = (8\bbSig)^{-1}$.
\par
Thereafter, since $\BK_{t}^{(2)}$ is positive semi-definite, the training objective \eqref{eq:penalized quadratic programming} is a convex quadratic programming problem, and its minimizer can be represented as  
\begin{equation}\label{eq:closed form solution}
	\htheta_{t} = - \frac{1}{2}(\BK_{t}^{(2)} + \lambda \BK_{t}^{(0)})^{+}\BK_{t}^{(1)},        
\end{equation}
where ``$+$'' denotes the Moore--Penrose generalized inverse.
Leveraging the estimated score model $\bs_{t}(\cdot; \htheta_{t})$, we can generate denoised data by solving the reverse SDE \eqref{eq:sode}, as elaborated below. 

\vspace{-0.2in}
\begin{remark}\label{remark:hermite}
	In proposed method, we estimate score function \eqref{eq:objective} by introducing imaginary noise in Corollary \ref{coro:debias}. Building upon this, we model the score function using functions within a Reproducing Kernel Hilbert Space (RKHS). Generally speaking, by Tweedie's formula \citep{efron2011tweedie}, the central intuition is to represent $\bs_{t}(\bX_{t}, \btheta_{t})$ by conditional expectation $\mE[\bg_{t}(\bX_{1})\mid \bX_{t}]$ for some $\bg_{t}$. For illustration, we consider the univariate case ($d = 1$) with the standard deviation $\sigma$ of the measurement error. Given that $g_{t}(X_{1}) \overset{d}{=} g_{t}(X_{t} + \sigma\sqrt{1 - t}\xi)$,
	\begin{equation*}
		\mE\left[He_{j}\left(\frac{X_{1}}{\sigma\sqrt{1 - t}}\right)\mid X_{t}\right] = \mE_{\xi}\left[He_{j}\left(\frac{X_{t} + \sigma\sqrt{1 - t}\xi}{\sigma\sqrt{1 - t}}\right)\mid X_{t}\right] = \sigma^{-j}(1 - t)^{-\frac{j}{2}}X_{t}^{j},    
	\end{equation*}
	where $\xi\sim \cN(0, 1)$, $He_{j}$ denotes the Hermite polynomial of degree $j$. Leveraging the above formulation and by approximating $\bs_{t}$ with a Hermite polynomial, we can obtain an unbiased estimate for the objective function \eqref{eq:new complex objective}. Notably, this specific approximation does not require the introduction of imaginary noise. However, polynomials suffer from a severe curse of dimensionality issue and are numerically unstable due to the presence of high-order terms \citep{de1978practical}. Moreover, the use of Hermite polynomials would require estimating the high-order moments of $X_{1}$, which is challenging and highly sensitive to potential outliers. These problems motivate us to adopt the RKHS model strategy with an imaginary noise.
\end{remark}

\setlength{\columnsep}{0.35in} 
\begin{wrapfigure}{r}{8.0cm}
	\centering
	\vspace{-0.2in}
	\includegraphics[scale=.19]{./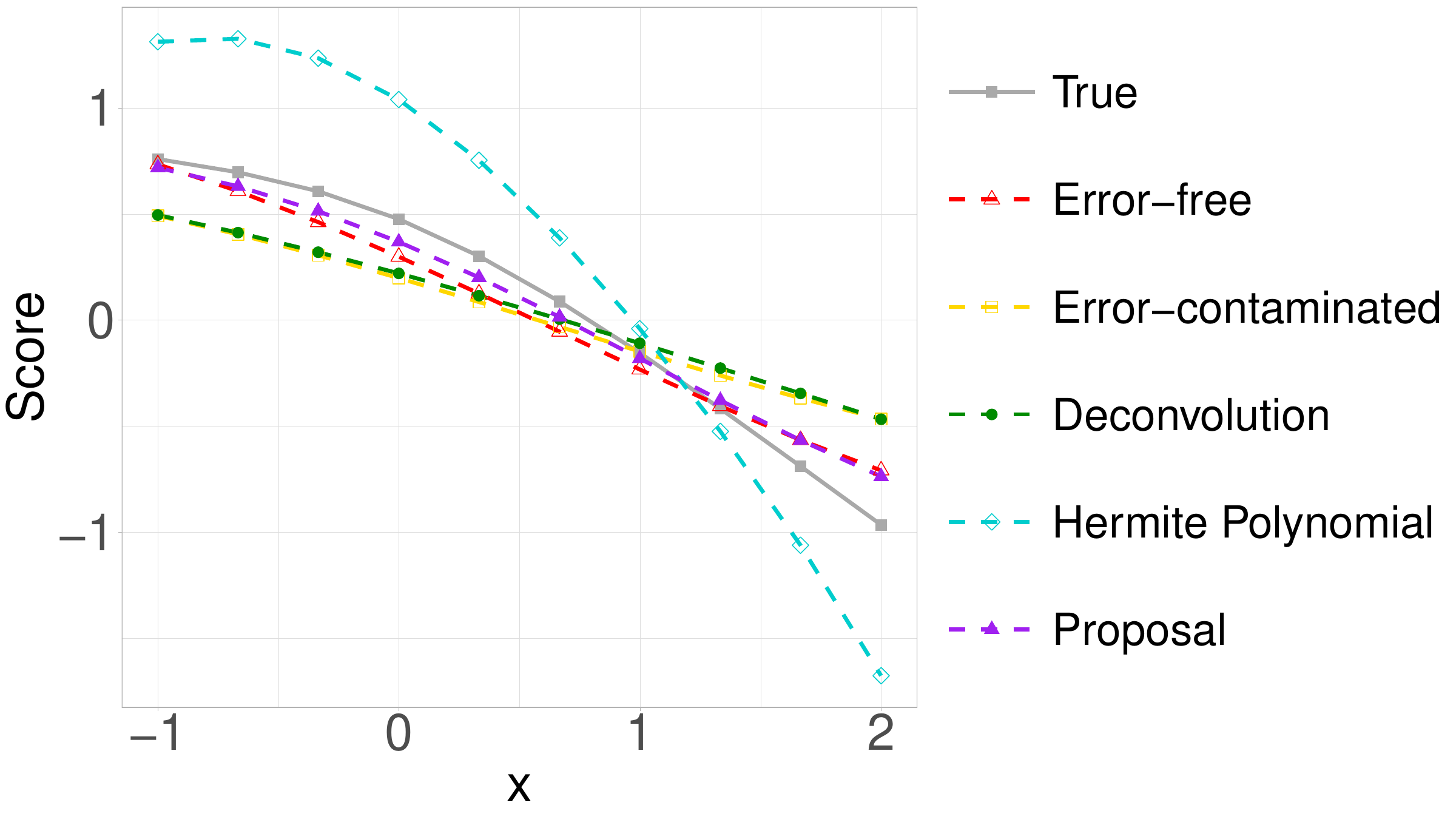}
	\captionsetup{
		margin={0.8cm,0.1cm},
		font=small
	}
	\caption{The score functions fitted using different methods in a simulation example.}
	\vspace{-0.2in}
	\label{fig:score}
\end{wrapfigure}
Finally, we use a simulation to provide an intuitive illustration of the effectiveness of the proposed method. Let $n = 1000$, $\epsilon \sim \cN(0, 2)$, $X_{0} \sim 0.5 \cN(1, 1) + 0.5 \cN(0, 2.25)$, and $X_{1} = X_{0} + \epsilon$. We set $H = 1 / (8\times 2)$, $m = \lfloor n^{1/2}\rfloor = 31$, and $\lambda = n^{-1/2}$ as suggested by Corollary \ref{cor:convergence rate}. Then, we compare the estimated score of our method with estimates obtained under four other baseline methods: 1) and 2) the score functions obtained by RKHS-based score matching \citep{zhou2020nonparametric} under the error-free and error-contaminated data; 3) the deconvolution score function obtained from the density estimated by the R package \texttt{deconv} \citep{wang2011deconvolution} using the classical kernel smoothing-based deconvolution method \citep{fan1991optimal}; and 4) the score estimate leveraging the Hermite polynomial as introduced in Remark \ref{remark:hermite}. As shown in Figure \ref{fig:score}, the proposed method produces a score function close to the ground truth, and is consistent with that obtained from error-free data, which verifies the effectiveness of our method. The mean squared errors over $200$ simulations are $0.11$, $0.41$, $0.48$, $21.47$, and $0.19$ for error-free RKHS, error-contaminated RKHS, the deconvolution-based method, the Hermite polynomial-based method, and the proposed method, respectively. We further discuss applying our method to empirical Bayes estimation in Supplementary Material.

\subsection{Sampling with Empirical Reverse-SDE}\label{sec:sampling with reverse-sde}
As mentioned in the previous section, the ground truth score function $\nabla\log{p_{t}}$ is approximated by $\bs_{t}(\cdot; \htheta_{t})$. Then, starting from an error-contaminated observation $\tX_{0} \sim P_{\bX_{1}}$ and substituting $\nabla\log{p_{1 - t}(\tX_{t})}$ in \eqref{eq:reverse time sde} with $\bs_{1 - t}(\tX_{t}; \htheta_{1 - t})$, we can generate a denoised observation through an empirical version of the reverse SDE \eqref{eq:sode}. To do so, we solve \eqref{eq:reverse time sde} with a numerical method. In this paper, we use $\hX_{t}$ to represent the empirical version of $\tX_{t}$ generated from the following standard first-order Euler--Maruyama \citep{platen1999introduction} sampling method  
\begin{equation}\label{eq:euler solving sde}
	\hX_{(k + 1)/K} = \hX_{k/K} + \frac{1}{K}\bbSig\bs_{1 - k/K}\left(\hX_{k/K}; \htheta_{1 - k/K}\right) + \sqrt{\frac{1}{K}}\bbSig^{\frac{1}{2}}\beps_{k},
\end{equation} 
with $\hX_{0} = \bX_{1}$ as the error-contaminated observation, where $K$ is a positive integer representing the number of steps, $\beps_{k}\sim \cN(\bzero, \bI)$, and $0\leq k \leq K - 1$. 
In practice, we propose to take the error-contaminated observations $\{\bX_{1}^{(i)}\}_{i=1}^{n}$ as the initialization of \eqref{eq:euler solving sde} to generate $n$ denoised observations.

\vspace{-0.1in}
\section{Theoretical Analysis}\label{sec:theoretical analysis}
\vspace{-0.1in}
Above, we present our method for generating denoised data. However, there is a gap between the distribution of the generated $\hX_{1}$ and the desired $P_{\tX_{1}} = P_{\bX_{0}}$ for two reasons: 1) the estimated score $\bs_{t}(\cdot; \htheta_{t})$ differs from the ground truth score $\nabla\log{p_{t}}$ due to estimation error; 2) the reverse SDE is solved numerically by \eqref{eq:euler solving sde}, which leads to discretization error. Next, we theoretically characterize the gap between $P_{\hX_{1}}$ and the desired $P_{\bX_{0}}$.
\subsection{Discrepancy between the Denoised Data and the Error-free Data}\label{sec:kl divergence error}
In this section, we establish an upper bound for the sampling error under some mild assumptions. To do so, we first assume the kernel-based score model $\bs_{t}(\cdot; \htheta_{t})$ approximates the ground truth score $\nabla\log{p_{t}}(\bx)$ well (we analyze this approximation error later in Section \ref{sec:generalization analysis}). Our result is presented in the following theorem, which characterizes the gap between the distribution of the generated denoised data $\hX_{1}$ and the desired $\bX_{0}$.  

\begin{theorem}\label{thm:sampling error}
	Suppose that $\hX_{0} \sim P_{\bX_{1}}$ is independent of $\htheta_{t}$, and that $\hX_{1}$ is generated from the initial point $\hX_{0}$ following the diffusion updates in \eqref{eq:euler solving sde}.
	Under Assumptions \ref{ass:lip continuity} and \ref{ass:subexp} in Supplementary Material, assume that for all $t\in[0, 1]$,  
	\begin{equation}\label{eq:small error}
		\mE\left[\left\|\bs_{t}(\bX_{t}; \htheta_{t}) - \nabla\log{p_{t}}(\bX_{t})\right\|^{2}\right] \leq \delta.
	\end{equation}
	Then,   
	\begin{equation}\label{eq:error bound}
		D_{KL}\left(P_{\bX_{0}} \parallel P_{\hX_{1}}\right) = \cO\left(\frac{\lambda_{\max}(\bbSig)L\tr\{\bbSig\}}{K} + \frac{dL\lambda_{\max}^{2}(\bbSig)}{K^{2}} + \lambda_{\max}(\bbSig)\delta\right).
	\end{equation}
\end{theorem}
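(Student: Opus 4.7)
The plan is to adapt the standard Girsanov-based analysis of diffusion samplers to the present setting. On a common filtered probability space, couple the continuous-time reverse process $\tX_{t}$ from \eqref{eq:reverse time sde} (which satisfies $\tX_{1}\stackrel{d}{=}\bX_{0}$) and the Euler--Maruyama process $\hX_{t}$ from \eqref{eq:euler solving sde} by driving both by the same Brownian motion and initializing both from $P_{\bX_{1}}$. Since both have the common diffusion coefficient $\bbSig^{1/2}$, the data-processing inequality together with Girsanov's theorem yields
\begin{equation*}
D_{KL}\!\left(P_{\bX_{0}}\,\|\,P_{\hX_{1}}\right)\;\le\;\tfrac{1}{2}\,\mE\!\int_{0}^{1}\!\bigl\|\bbSig^{1/2}\bigl(\nabla\log p_{1-t}(\tX_{t})-\bs_{1-\underline{t}}(\tX_{\underline{t}};\htheta_{1-\underline{t}})\bigr)\bigr\|^{2}\,dt,
\end{equation*}
where $\underline{t}=\lfloor Kt\rfloor/K$. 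Bounding $\bbSig^{1/2}$ in operator norm already produces the overall $\lambda_{\max}(\bbSig)$ prefactor appearing in \eqref{eq:error bound}.

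Next I would split the integrand by the triangle inequality into three pieces, each tracking one error source. The \emph{score-estimation} piece $\|\nabla\log p_{1-\underline{t}}(\tX_{\underline{t}})-\bs_{1-\underline{t}}(\tX_{\underline{t}};\htheta_{1-\underline{t}})\|^{2}$ uses the distributional identity $\tX_{\underline{t}}\stackrel{d}{=}\bX_{1-\underline{t}}$, which immediately reduces its expectation to $\delta$ via the hypothesis \eqref{eq:small error} and produces the $\lambda_{\max}(\bbSig)\delta$ term. The \emph{spatial-discretization} piece $\|\nabla\log p_{1-t}(\tX_{t})-\nabla\log p_{1-t}(\tX_{\underline{t}})\|^{2}$ is controlled via the spatial Lipschitz constant of the score from Assumption \ref{ass:lip continuity}, which bounds it by the squared one-step displacement $\|\tX_{t}-\tX_{\underline{t}}\|^{2}$. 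The \emph{time-discretization} piece $\|\nabla\log p_{1-t}(\tX_{\underline{t}})-\nabla\log p_{1-\underline{t}}(\tX_{\underline{t}})\|^{2}$ is controlled via the time regularity of $t\mapsto\nabla\log p_{t}$ that is accessible from the Fokker--Planck equation associated with \eqref{eq:sode} under the same assumption.

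The main calculation is then to bound the one-step displacement $\mE\|\tX_{t}-\tX_{\underline{t}}\|^{2}$ for $t\in[\underline{t},\underline{t}+1/K]$ directly from \eqref{eq:reverse time sde} via It\^o's isometry: the martingale component contributes $\tr\{\bbSig\}(t-\underline{t})$, while the drift component contributes at most $\lambda_{\max}^{2}(\bbSig)(t-\underline{t})^{2}\,\mE\|\nabla\log p_{1-s}(\tX_{s})\|^{2}$, and the score second-moment bound \eqref{eq:bounded score norm} supplies the $\cO(d)$ factor in the latter. Summing these contributions over the $K$ sub-intervals, after the Girsanov prefactor, yields the $\lambda_{\max}(\bbSig)L\tr\{\bbSig\}/K$ and $dL\lambda_{\max}^{2}(\bbSig)/K^{2}$ pieces of \eqref{eq:error bound}. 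The main technical obstacle I anticipate is the $\bbSig$-norm bookkeeping: the leading $1/K$ term should pick up only $\tr\{\bbSig\}$ rather than $d\lambda_{\max}(\bbSig)$, which requires separating the diffusion and drift parts of the displacement rather than bounding $\mE\|\tX_{t}-\tX_{\underline{t}}\|^{2}$ crudely; in addition, Novikov's condition for the Girsanov change of measure must be verified under Assumption \ref{ass:lip continuity}.
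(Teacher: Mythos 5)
Your plan is essentially the paper's own argument: Girsanov's theorem plus the data-processing inequality reduce $D_{KL}(P_{\bX_{0}}\,\|\,P_{\hX_{1}})$ to $\tfrac12\mE\int_0^1\|\bbSig^{1/2}(\nabla\log p_{1-t}(\tX_t)-\hs_t(\tX_t))\|^2dt$, which is then split into exactly the same three pieces (spatial discretization, time discretization, score estimation); the spatial piece is handled, as you propose, by the spatial Lipschitz bound of Assumption \ref{ass:lip continuity} together with an It\^{o}-isometry/Cauchy--Schwarz bound on the one-step displacement using \eqref{eq:bounded score norm}, which is precisely how the paper obtains the $\lambda_{\max}(\bbSig)L\tr\{\bbSig\}/K$ and $dL\lambda_{\max}^2(\bbSig)/K^2$ terms; the estimation piece reduces to $\delta$ via $P_{\tX_{\underline t}}=P_{\bX_{1-\underline t}}$ and the independence of $\hX_0$ and $\htheta_t$; and Novikov's condition is indeed imposed as a regularity assumption rather than derived.

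The one genuine gap is the time-discretization piece. Assumption \ref{ass:lip continuity} gives Lipschitz continuity of the score only in the spatial variable, and ``time regularity of $t\mapsto\nabla\log p_t$ accessible from the Fokker--Planck equation'' is not an argument: differentiating the score in $t$ via Fokker--Planck brings in third derivatives of $\log p_t$, which are not controlled by the stated assumptions. The paper's Lemma \ref{lem:score t bound} instead writes $\nabla\log p_t(\bx)=\mE[\nabla V_0(\bX_0)\mid\bX_t=\bx]$, bounds $\|\nabla\log p_{1-t}(\bx)-\nabla\log p_{1-k/K}(\bx)\|^2$ by $L^2\sW_1^2$ between the two backward conditional laws, and then exploits the $\tfrac{K^2}{4\lambda_{\max}(\bbSig)}$-strong log-concavity of $P_{\bX_0\mid\bX_{1/K}=\bx}$ (valid only under the step-size condition $K^2\geq 4\lambda_{\max}(\bbSig)L$, which your sketch omits) together with Talagrand's transport inequality and a chain-rule/Gaussian KL computation to get the $6dL^2\lambda_{\max}(\bbSig)/K^2$ bound. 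Without this (or an equivalent quantitative time-regularity lemma), your decomposition does not close, so you should either reproduce that argument or supply an explicit bound on $\|\nabla\log p_{1-t}(\bx)-\nabla\log p_{1-\underline t}(\bx)\|$ of order $\sqrt{d}\,L\lambda_{\max}^{1/2}(\bbSig)/K$ by some other means.
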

We prove this theorem in Supplementary Material \ref{app:proofs in section sampling with reverse-sde}. 
Theorem \ref{thm:sampling error} indicates that when the score function is well approximated ($\delta\to 0$), under large sampling steps ($K\to \infty$), the denoised data distribution $P_{\hX_{1}}$ approximates the target $P_{\bX_{0}}$ well. Thus, the generated denoised data are suitable for downstream tasks such as estimation, prediction, and hypothesis testing. Here, the terms related to the step size $1 / K$ and estimation error $\delta$ correspond respectively to the ``discretization error'' and ``approximation error'' discussed above. 
\par
Notably, by invoking Pinsker's inequality and $\tr(\bbSig)=\cO(d)$, we can obtain the following bound for the total variation distance \citep{wasserman2006all,wainwright2019high}
\begin{equation*}
	\begin{aligned}
		\TV\left(P_{\bX_{0}}, P_{\hX_{1}}\right)
		& \leq \sqrt{\frac{1}{2}D_{KL}\left(P_{\bX_{0}} \parallel P_{\hX_{1}}\right)}
		& = \cO\left(\sqrt{\frac{d}{K} + \delta}\right).
	\end{aligned}
\end{equation*} 

In Theorem \ref{thm:sampling error}, we assume independence between the initial point $\hX_{0}$ and the estimate $\htheta_{t}$.\footnote{Without this assumption, the small estimation condition \eqref{eq:small error} should be replaced by $\mE_{\bX_{t}\mid \bX_{1} = \bX_{1}^{(i)}}[\|\bs_{t}(\bX_{t}; \htheta_{t}) - \nabla\log{p_{t}(\bX_{t})}\|^{2}]\leq \delta$, which is hard to verify. For more details, please see the proof in Supplementary Material \ref{app:proofs in section sampling with reverse-sde}.} This can be guaranteed by a leave-one-out training procedure; i.e., for $i = 1,\dots, n$, one trains the $i$-th score model using all data except $\bX_{1}^{(i)}$ and then takes $\bX_{1}^{(i)}$ as the initial point to generate denoised data using the $i$-th score model. However, such training suffers from high computational complexity because it requires training the score models $n$ times. Alternatively, one can split $\{\bX_{1}^{(i)}\}_{i = m + 1}^{n}$ into two disjoint sets $\cD_{\rm train}$ and $\cD_{\rm gen}$ with roughly equal sizes, then train the score model with $\cD_{\rm train}$ and use the data in $\cD_{\rm gen}$ as the initializations for generating denoised data. This also ensures independence in Theorem \ref{thm:sampling error}, at the cost of reducing the denoised-data sample size. For both methods, the imposed small approximation error can be justified by Corollary \ref{cor:convergence rate}.
\par
However, in our experiments, we found that the procedure without leave-one-out or data splitting, i.e., using $\{\bX_{1}^{(i)}\}_{i=1}^{n}$ to train the score model and initialize the SDE, performs reasonably well. We conjecture that the independence condition in Theorem \ref{thm:sampling error} can be relaxed to certain weak dependence conditions. We did not pursue this because the analysis is rather involved. In practice, we recommend implementing the proposed method without leave-one-out or data splitting.
\par


\subsection{Error Analysis for the Score-Function Model}\label{sec:generalization analysis}
From Theorem \ref{thm:sampling error}, part of the sampling error related to the proposed method originates from the gap between the trained model $\bs_{t}(\cdot; \htheta_{t})$ and the ground truth score function $\nabla\log{p_{t}}$. Next, we characterize this gap. 
To this end, we first introduce some assumptions and definitions. Recall that, for $t \in [0, 1]$, $\bbf_{t}$ is the real vector-valued function $\bbf_{t}$ such that
$\nabla\log{p_{t}}(\bx) = \mE_{\bX_{t}}[\cK(\bx, \bX_{t})\bbf_{t}(\bX_{t})]$.
\begin{assumption}\label{ass:moment norm}
	For $t \in [0, 1]$, we have $\mE\left[\exp\left\{|\bbOmega_{t}^{(0)}|^{-1/2}|\bbOmega^{(0)}|^{1/2}\|\bbf_{t}(\bX_{1})\|^{2}\right\}\right] \leq \exp(B_{\cK})$ for some constant $\cB_{\cK} < \infty$.
\end{assumption}
This assumption and the constant $\cB_{\cK}$, which depends on $d$, are critical in our bound for score function estimation error. In the worst case, the constant $\cB_{\cK}$ may scale exponentially with $d$. For example,
if each component of $\bbf_{t}$ is bounded, then we have $\sup_{\bx}|\bbOmega_{t}^{(0)}|^{-1/2}|\bbOmega^{(0)}|^{1/2}\|\bbf_{t}(\bx)\|^{2} = \cO(d |\bbOmega_{t}^{(0)}|^{-1/2}|\bbOmega^{(0)}|^{1/2})$. When taking $\bbOmega^{(0)}$ as in Section \ref{sec:kernel-based function}, $|\bbOmega_{t}^{(0)}|^{-1/2}|\bbOmega^{(0)}|^{1/2}$ becomes $(2 / 1 + t)^{\frac{d}{2}}$. Then, the constant $\cB_{\cK}$ scales with $d$ at the order of $\cO(dc^{d})$ for some $c > 1$. 

Despite the worst case, let us consider a more general claim. Let $\eta > 0$,  
Similar to Proposition \ref{prop:approximation error}, define
\begin{equation*}
	\delta_{1n}(\eta) = 2\left(\frac{H_{1}}{n - m} + \frac{\sigma_{1}}{\sqrt{n - m}}\right)\log\left(\frac{2}{\eta}\right),
	\enspace
	\delta_{2n}(\eta) = 2\left(\frac{H_{2}}{n - m} + \frac{\sigma_{2}}{\sqrt{n - m}}\right)\log\left(\frac{2}{\eta}\right),
\end{equation*}
and
\begin{equation*}
	C_{\eta, m, \lambda} = \lambda^{-1}\left\{\frac{2\sigma^{2}_{0}}{m} + 2\delta_{0m}(\eta)^{2} \right\} + \cB_{\cK}\log\left(\eta^{-1}\right),
\end{equation*}
where $\sigma_{1}$, $\sigma_{2}$, $H_{1}$ and $H_{2}$ are the constants in 
Assumption \ref{ass:bernstein} in Supplementary Material \ref{app:regularity conditions}.
We have the following non-asymptotic upper bound for $\mE_{\bX_{t}}[\|\bs_{t}(\bX_{t}; \htheta_{t}) - \nabla\log{p_{t}}(\bX_{t})\|^{2}]$. 
\begin{theorem}\label{thm:gen error}
	Suppose
	$\lambda > 2\delta_{2n}(\eta)$.
	Under Assumption \ref{ass:moment norm} and Assumptions \ref{ass:lip continuity}, \ref{ass:subexp}, and \ref{ass:bernstein} in Supplementary Material, for any $t\in[0, 1]$,  
	\begin{equation*}
		\begin{aligned}
			\mE_{\bX_{t}}\left[\left\|\bs_{t}(\bX_{t}; \htheta_{t}) - \nabla\log{p_{t}}(\bX_{t})\right\|^{2}\right] \leq 2\lambda\cB_{\cK}\log\left(\eta^{-1}\right) + \frac{4\sigma^{2}_{0}}{m} + 4\delta_{0m}(\eta)^{2} +
			\frac{4(\delta_{1n}(\eta) + 2\delta_{2n}(\eta)\sqrt{C_{\eta, m, \lambda}})^{2}}{\lambda} 
		\end{aligned}
	\end{equation*}
	holds with probability at least $1 - 4\eta$, where the randomness comes from $\htheta_{t}$.
\end{theorem}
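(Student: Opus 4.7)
I would adopt the standard oracle-inequality framework for penalized empirical risk minimization in an RKHS, but adapted to three peculiarities of this problem: (i) the ``risk'' is the score-matching objective rather than a squared loss in $\bs_t$ directly, (ii) the empirical risk is constructed from the complex-value debiasing of Corollary \ref{coro:debias} and the subsample-induced basis, and (iii) $\nabla\log p_t$ need not lie in $\mathcal{H}_{\mathcal{K}_t,m}^d$, so the decomposition must carry both an approximation term (handled by Proposition \ref{prop:approximation error}) and a regularization/variance term controlled by Bernstein-type concentration. I would prove the bound conditionally on the subsample $\{\bX_1^{(i)}\}_{i=1}^m$ that defines the basis, and then note that the probabilistic statement holds unconditionally by a union-bound argument over the $\eta$-events.

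\textbf{Step 1: A penalized oracle inequality.} Write the empirical penalized objective as
\begin{equation*}
\widehat{\mathcal{R}}_\lambda(\btheta) = \tr\left\{\btheta^\top \BK_t^{(1)} + \btheta^\top \BK_t^{(2)} \btheta + \lambda \btheta^\top \BK_t^{(0)} \btheta\right\},
\end{equation*}
and its population counterpart $\mathcal{R}_\lambda(\btheta) = \mathbb{E}[\widehat{\mathcal{R}}_\lambda(\btheta)]$ (taken over $\{\bX_1^{(k)}\}_{k=m+1}^n$ with the subsample fixed). By Corollary \ref{coro:debias} together with the score-matching identity behind \eqref{eq:objective},
\begin{equation*}
\mathcal{R}_\lambda(\btheta) + \mathbb{E}\|\nabla\log p_t(\bX_t)\|^2 = \mathbb{E}\|\bs_t(\bX_t;\btheta) - \nabla\log p_t(\bX_t)\|^2 + \lambda \|\bs_t(\cdot;\btheta)\|^2_{\mathcal{H}_{\mathcal{K}}^d}.
\end{equation*}
Let $\tilde{\btheta}_t$ be the approximator from Proposition \ref{prop:approximation error}. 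Optimality of $\hat{\btheta}_t$ gives $\widehat{\mathcal{R}}_\lambda(\hat{\btheta}_t) \le \widehat{\mathcal{R}}_\lambda(\tilde{\btheta}_t)$, which, after adding and subtracting population versions, yields
\begin{equation*}
\mathbb{E}\|\bs_t(\bX_t;\hat{\btheta}_t) - \nabla\log p_t\|^2 + \lambda \|\bs_t(\cdot;\hat{\btheta}_t)\|^2_{\mathcal{H}} \le \mathbb{E}\|\bs_t(\bX_t;\tilde{\btheta}_t) - \nabla\log p_t\|^2 + \lambda \|\bs_t(\cdot;\tilde{\btheta}_t)\|^2_{\mathcal{H}} + \Delta,
\end{equation*}
where $\Delta$ collects the empirical-process fluctuations $\tr\{(\hat{\btheta}_t - \tilde{\btheta}_t)^\top(\BK_t^{(1)} - \mathbb{E}\BK_t^{(1)})\} + \tr\{(\hat{\btheta}_t^\top \BK_t^{(2)} \hat{\btheta}_t - \tilde{\btheta}_t^\top \BK_t^{(2)} \tilde{\btheta}_t) - \text{expectation}\}$.

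\textbf{Step 2: Bernstein concentration of the Gram-type matrices.} Apply the Bernstein-type bound in Assumption \ref{ass:bernstein} to the entries (or operator norms, in the appropriate intrinsic metric) of $\BK_t^{(1)} - \mathbb{E}\BK_t^{(1)}$ and $\BK_t^{(2)} - \mathbb{E}\BK_t^{(2)}$, averaged over the $n-m$ samples not used for the basis. This yields, with probability at least $1-2\eta$, operator-norm deviations of order $\delta_{1n}(\eta)$ and $\delta_{2n}(\eta)$ respectively. Combined with Cauchy--Schwarz in the $\BK_t^{(0)}$-inner product, the fluctuation $\Delta$ is bounded as
\begin{equation*}
|\Delta| \le 2\delta_{1n}(\eta) \|\bs_t(\cdot;\hat{\btheta}_t) - \bs_t(\cdot;\tilde{\btheta}_t)\|_{\mathcal{H}} + 2\delta_{2n}(\eta)\bigl(\|\bs_t(\cdot;\hat{\btheta}_t)\|^2_{\mathcal{H}} + \|\bs_t(\cdot;\tilde{\btheta}_t)\|^2_{\mathcal{H}}\bigr).
\end{equation*}

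\textbf{Step 3: Norm control via Assumption \ref{ass:moment norm}.} The key quantitative input is an a priori bound $\|\bs_t(\cdot;\hat{\btheta}_t)\|^2_{\mathcal{H}} \le C_{\eta,m,\lambda}$. I would derive it by applying the oracle inequality with $\tilde{\btheta}_t$ chosen as the kernel expansion that represents $\nabla\log p_t$ via the integral representation $\nabla\log p_t(\bx) = \mathbb{E}_{\bX_t}[\mathcal{K}(\bx,\bX_t)\bbf_t(\bX_t)]$. The RKHS norm of this representer is an empirical average whose moment-generating function is controlled by Assumption \ref{ass:moment norm}; a Chernoff/Cramer bound then yields $\|\bs_t(\cdot;\tilde{\btheta}_t)\|^2_{\mathcal{H}} \le \cB_{\mathcal{K}}\log(\eta^{-1})$ with probability at least $1-\eta$. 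Plugging this together with Proposition \ref{prop:approximation error} into the oracle inequality and dividing by $\lambda$ yields the stated $C_{\eta,m,\lambda}$.

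\textbf{Step 4: Assembly and the role of $\lambda > 2\delta_{2n}(\eta)$.} Substitute the norm bound into the $\Delta$ estimate from Step 2, use $2ab \le a^2/\lambda + \lambda b^2$ to split the $\delta_{1n}$ term (absorbing the $\lambda \|\hat{\btheta}_t - \tilde{\btheta}_t\|^2_{\mathcal{H}}$ piece into the left-hand side via $\lambda \|\bs_t(\cdot;\hat{\btheta}_t)\|^2_{\mathcal{H}}$), and use the assumption $\lambda > 2\delta_{2n}(\eta)$ to absorb the $\delta_{2n}(\eta)\|\bs_t(\cdot;\hat{\btheta}_t)\|^2_{\mathcal{H}}$ piece. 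The residual estimation-error term then takes the form $4(\delta_{1n}(\eta) + 2\delta_{2n}(\eta)\sqrt{C_{\eta,m,\lambda}})^2/\lambda$, while the approximation/regularization pieces give $4\sigma_0^2/m + 4\delta_{0m}(\eta)^2 + 2\lambda\cB_{\mathcal{K}}\log(\eta^{-1})$. A final union bound over the four $\eta$-events (norm of $\tilde{\btheta}_t$, concentration of $\BK_t^{(1)}$, concentration of $\BK_t^{(2)}$, approximation error of Proposition \ref{prop:approximation error}) delivers the probability $1-4\eta$.

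\textbf{Main obstacle.} The chief difficulty is the quadratic-in-$\btheta$ empirical-process term $\btheta^\top \BK_t^{(2)}\btheta$: unlike in ordinary kernel ridge regression where the design term is $L^2$-based and bounded, here $\BK_t^{(2)}$ is an unbounded random operator in the non-observed variable $\bX_t$, so one cannot hope to control it uniformly in $\btheta$. The trick is that one must use a self-bounding argument (the norm of $\hat{\btheta}_t$ itself is controlled by the inequality it satisfies), which requires the precise calibration $\lambda > 2\delta_{2n}(\eta)$ to ensure $\BK_t^{(2)} + \lambda \BK_t^{(0)}$ remains dominant and the quadratic absorption closes. Verifying that the Bernstein constants $\sigma_2, H_2$ in Assumption \ref{ass:bernstein} suffice in the operator-norm sense (rather than just entrywise) is the most delicate technical point I would have to check.
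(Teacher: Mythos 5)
Your proposal identifies the correct ingredients (oracle inequality from ERM optimality, Bernstein concentration of the RKHS-operator quantities under Assumption~\ref{ass:bernstein}, the $\exp$-moment bound from Assumption~\ref{ass:moment norm} to control $\|\bs_{\tilde{\btheta}_t}\|_{\cH_\cK^d}$, union bound over four events), but the decomposition differs from the paper in a meaningful way. The paper does not compare $\htheta_t$ directly to the approximator $\bar{\btheta}_t$ from Proposition~\ref{prop:approximation error}; it instead introduces the \emph{population regularized minimizer} $\bar{\btheta}_{t,\lambda} = \argmin_\btheta \cL_t(\btheta)$ as the centering point and exploits its first-order condition
\begin{equation*}
\left\langle \bs_{\htheta_t} - \bs_{\bar{\btheta}_{t,\lambda}},\; 2(\bGamma_t + \lambda I)\bs_{\bar{\btheta}_{t,\lambda}} + \bzeta_t \right\rangle_{\cH_\cK^d} = 0
\end{equation*}
to cancel the cross-terms exactly. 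Combined with the basic inequality $\widehat{\cL}_t(\htheta_t) \le \widehat{\cL}_t(\bar{\btheta}_{t,\lambda})$, this yields an inequality that is \emph{linear} in $\|\bs_{\htheta_t} - \bs_{\bar{\btheta}_{t,\lambda}}\|_{\cH_\cK^d}$ on the right and quadratic on the left, which closes immediately: $(\lambda - \delta_{2n}(\eta))\|\bs_{\htheta_t} - \bs_{\bar{\btheta}_{t,\lambda}}\|^2 \le (\delta_{1n}(\eta) + 2\delta_{2n}(\eta)\sqrt{C_{\eta,m,\lambda}})\|\bs_{\htheta_t} - \bs_{\bar{\btheta}_{t,\lambda}}\|$. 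Your route, by comparing $\htheta_t$ to $\tilde{\btheta}_t$ directly, leaves the fluctuation $\Delta$ containing $\|\bs_{\htheta_t}\|^2$ quadratically, which requires the self-bounding step you flag as the ``main obstacle.'' That step can be pushed through (absorb $2\delta_{2n}\|\bs_{\htheta_t}\|^2$ into $\lambda\|\bs_{\htheta_t}\|^2$, then AM--GM on the $\delta_{1n}\|\bs_{\htheta_t}\|$ term), but it is less clean and ends up with a somewhat different-looking constant. The paper's detour through $\bar{\btheta}_{t,\lambda}$ buys an algebraically tighter bound and avoids the triangle-inequality loss $\|\bs_{\htheta_t}\|^2 \gtrsim \frac12\|\bs_{\htheta_t}-\bs_{\tilde{\btheta}_t}\|^2 - \|\bs_{\tilde{\btheta}_t}\|^2$ that your absorption would incur. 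One further minor point: your Step~2 correctly anticipates that the Bernstein bounds must be applied to $\widehat{\bzeta}_t - \bzeta_t$ in $\|\cdot\|_{\cH_\cK^d}$ and $\widehat{\bGamma}_t - \bGamma_t$ in Hilbert--Schmidt norm (not the raw matrices $\BK_t^{(1)}, \BK_t^{(2)}$); the paper cites Proposition~23 of \citet{bauer2007regularization} for this, which you would also need to invoke.
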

The proof of this theorem is in Supplementary Material \ref{app:generalization proof}. By selecting $\lambda \asymp n^{-1/2}$ and $m \asymp n^{1/2}$, Theorem \ref{thm:gen error} implies that 
$
\mE_{\bX_{t}}\left[\\|\bs_{t}(\bX_{t}; \htheta_{t}) - \nabla\log{p_{t}}(\bX_{t})\|^{2}\right] = \cO_{P}\left(\sqrt{\cB_{\cK}/n}\right)
$,
which guarantees the convergence of our estimated score function. 
	%
Then, by combining Theorems \ref{thm:sampling error} and \ref{thm:gen error}, we know that the $\delta$ in Theorem \ref{thm:sampling error} is of order $\cO(\sqrt{\cB_{\cK}/n})$ with high probability. Thus, our proposed method can generate denoised data whose distribution is close to that of error-free data.  
\begin{corollary}\label{cor:convergence rate}
	Suppose $\lambda \asymp n^{-1/2}$, $m \asymp n^{1/2}$, $\lambda > 2\delta_{2n}(\eta)$, $K$ is sufficiently large and $\hX_{1}$ is generated by  Theorem \ref{thm:sampling error}. Under Assumption \ref{ass:moment norm} and Assumptions in Supplementary Material \ref{app:regularity conditions},
	\begin{equation*}
		\TV\left(P_{\bX_{0}}, P_{\hX_{1}}\right) \leq \sqrt{\frac{1}{2}D_{KL}\left(P_{\bX_{0}} \parallel P_{\hX_{1}}\right)} \leq C\left\{\frac{\cB_{\cK}\log(\eta^{-1})}{n}\right\}^{1/4},
	\end{equation*}
	holds with probability at least $1 - 4\eta$ for some constant $C > 0$. 
\end{corollary}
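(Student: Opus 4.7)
The plan is to chain Theorem~\ref{thm:gen error} (which controls the score estimation error) with Theorem~\ref{thm:sampling error} (which converts this estimation error into a KL bound on the denoised distribution), and then invoke Pinsker's inequality. The scalings $\lambda\asymp n^{-1/2}$ and $m\asymp n^{1/2}$ are tuned precisely so that each error contribution in Theorem~\ref{thm:gen error} balances to the same order.

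First I would substitute $\lambda\asymp n^{-1/2}$ and $m\asymp n^{1/2}$ into the Bernstein-type quantities $\delta_{0m}(\eta),\delta_{1n}(\eta),\delta_{2n}(\eta)$ defined before Theorem~\ref{thm:gen error}. Straightforward bookkeeping yields $\delta_{0m}(\eta)^2 = \cO(n^{-1/2}\log^2\eta^{-1})$, $\delta_{1n}(\eta)^2=\delta_{2n}(\eta)^2 = \cO(n^{-1}\log^2\eta^{-1})$, and $\sigma_0^2/m = \cO(n^{-1/2})$. Consequently $C_{\eta,m,\lambda} = \lambda^{-1}\{2\sigma_0^2/m + 2\delta_{0m}(\eta)^2\} + \cB_{\cK}\log\eta^{-1} = \cO(\cB_{\cK}\log\eta^{-1})$, since the first bracket is only polylogarithmic in $\eta^{-1}$. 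A term-by-term inspection of the four summands in Theorem~\ref{thm:gen error} then shows that the dominant one is the cross term $4(\delta_{1n}+2\delta_{2n}\sqrt{C_{\eta,m,\lambda}})^2/\lambda$, which is of order $\cB_{\cK}\log(\eta^{-1})/\sqrt{n}$ up to further log factors; the remaining three summands are of the same or smaller order. Absorbing these polylogarithmic factors (and the $\log K$ produced by a union bound over the discrete Euler--Maruyama times) into $\log\eta^{-1}$, one obtains on an event of probability at least $1-4\eta$,
$$\delta := \sup_{t\in[0,1]}\mE_{\bX_t}\!\left[\|\bs_t(\bX_t;\htheta_t) - \nabla\log p_t(\bX_t)\|^2\right] \leq C_1\frac{\cB_{\cK}\log(\eta^{-1})}{\sqrt{n}}.$$

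Next I would feed this $\delta$ into Theorem~\ref{thm:sampling error}. The independence between $\hX_0$ and $\htheta_t$ required there is guaranteed by the data-splitting construction described after Theorem~\ref{thm:sampling error}. Taking $K$ sufficiently large renders the discretization terms $\lambda_{\max}(\bbSig) L\tr(\bbSig)/K + dL\lambda_{\max}^2(\bbSig)/K^2$ negligible compared with $\lambda_{\max}(\bbSig)\delta$, so $D_{KL}(P_{\bX_0}\parallel P_{\hX_1}) \leq C_2\, \cB_{\cK}\log(\eta^{-1})/\sqrt{n}$ on the same event. A final application of Pinsker's inequality, $\TV(P_{\bX_0},P_{\hX_1}) \leq \sqrt{(1/2)D_{KL}(P_{\bX_0}\parallel P_{\hX_1})}$, produces the stated rate $C\{\cB_{\cK}\log(\eta^{-1})/n\}^{1/4}$.

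The main obstacle is not conceptual but algebraic: I must ensure that the cross term $(\delta_{1n}+2\delta_{2n}\sqrt{C_{\eta,m,\lambda}})^2/\lambda$ does not blow up. Since $\delta_{2n}\sqrt{C_{\eta,m,\lambda}} = \cO(n^{-1/2}\sqrt{\cB_{\cK}})$ up to log factors, dividing by $\lambda\asymp n^{-1/2}$ leaves exactly the desired $\cO(\cB_{\cK}/\sqrt{n})$ order, which is what forces the particular choice $m\asymp n^{1/2}$. A minor compatibility check is that $\lambda>2\delta_{2n}(\eta)$ must be consistent with $\lambda\asymp n^{-1/2}$; because $\delta_{2n}(\eta) = \cO(n^{-1/2}\log\eta^{-1})$, this simply requires the hidden constant in $\lambda\asymp n^{-1/2}$ to exceed a fixed multiple of $\log\eta^{-1}$, which is harmless for large $n$.
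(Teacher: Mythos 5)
Your proposal matches the paper's intended proof: substitute the tuning parameters $\lambda\asymp n^{-1/2}$, $m\asymp n^{1/2}$ into Theorem~\ref{thm:gen error} to obtain $\delta=\cO_P(\sqrt{\cB_{\cK}\log(\eta^{-1})/n})$, feed this into Theorem~\ref{thm:sampling error} with $K$ large enough to make the discretization terms negligible, and finish with Pinsker's inequality. This is exactly the chaining that the paper sketches in the paragraph between Theorems~\ref{thm:gen error} and the corollary, so your route is the same as the paper's. Two minor remarks on bookkeeping: first, a strict accounting of the four summands in Theorem~\ref{thm:gen error} gives the cross term $4(\delta_{1n}+2\delta_{2n}\sqrt{C_{\eta,m,\lambda}})^2/\lambda\asymp n^{-1/2}\cB_{\cK}(\log\eta^{-1})^3$, i.e.\ two extra powers of $\log\eta^{-1}$ beyond what the corollary displays, so the precise $\log(\eta^{-1})$ factor in the stated bound only survives after absorbing the surplus polylogarithms into $C$ (as you note, and as the paper implicitly does); second, Theorem~\ref{thm:gen error} is a pointwise-in-$t$ high-probability statement while Theorem~\ref{thm:sampling error} requires the bound at the $K$ discretization times simultaneously, so a union bound over those $K$ points is needed to obtain the uniform $\delta$ — you flag the resulting $\log K$ factor, and the paper silently sweeps it into the constant, so your treatment is if anything more transparent than the paper's.
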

Corollary~\ref{cor:convergence rate} implies that the distribution function of $\hX_{1}$ converges to that of $\bX_{0}$ at the rate $\cO_{P}{(\cB_{\cK}/n)^{1/4}}$. This rate is faster than those achieved by classical kernel-smoothing-based deconvolution methods, which are polynomial in $1/\log n$ under H\"older-smooth densities of error-free data \citep{fan1991optimal} and of order $n^{-c/\sqrt{\log n}}$, up to logarithmic factors, for some $c>0$ under analytic densities. Under suitable regularity conditions, convergence of the distribution function further implies convergence of the corresponding quantiles \citep{van2000asymptotic}. Notably, these improved rates are obtained under stronger assumptions on the density of the error-free data. Specifically, Assumption~\ref{ass:subexp} in Supplementary Material implies that the score function is analytic and satisfies additional smoothness conditions. Such conditions are not required by classical deconvolution methods.

\section{Simulations}\label{sec:simulation}
The goal of this section is to evaluate the proposed method's finite-sample performance in generating error-free samples when both covariates and responses are observed with measurement error. The simulation design is deliberately model-free: the data-generating mechanism is known to us, but the denoising methods only use error-contaminated observations. 
\begin{figure}[t!]
	\vspace{-0.1in}
	\centering
	\begin{subfigure}{\textwidth}
		\centering
		\includegraphics[width=0.9\textwidth]{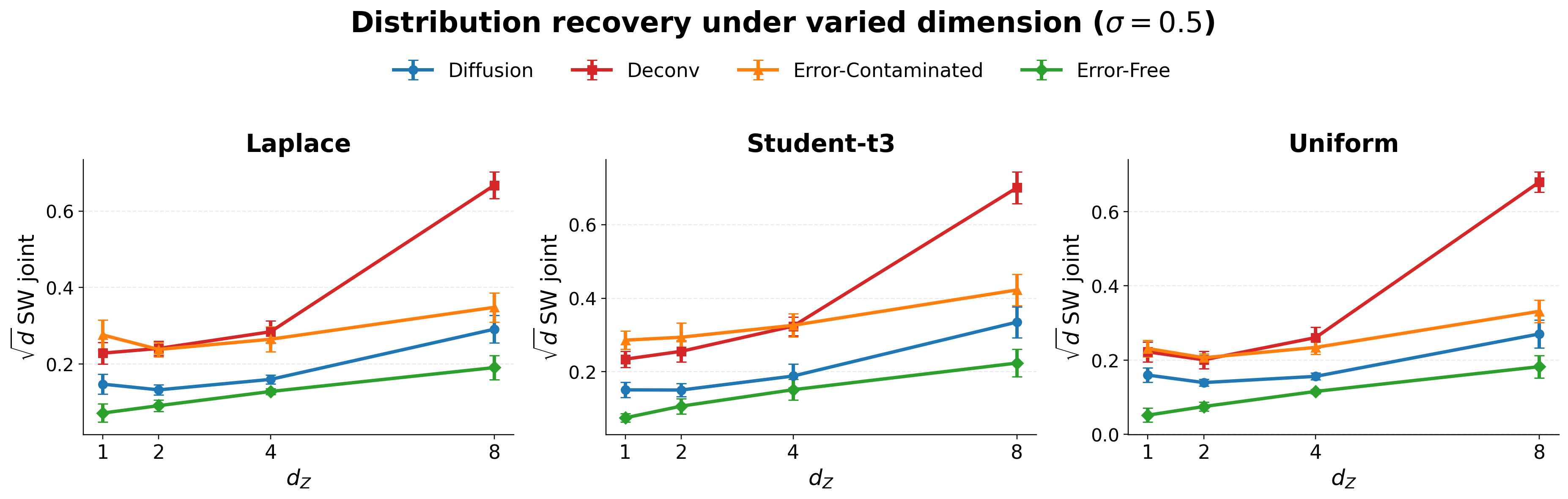}
		\vspace{-0.1in}
		\caption{Sliced Wasserstein distance of denoised data.}
		\label{fig: sim dimension sw}
	\end{subfigure}
	\begin{subfigure}{\textwidth}
		\centering
		\includegraphics[width=0.9\textwidth]{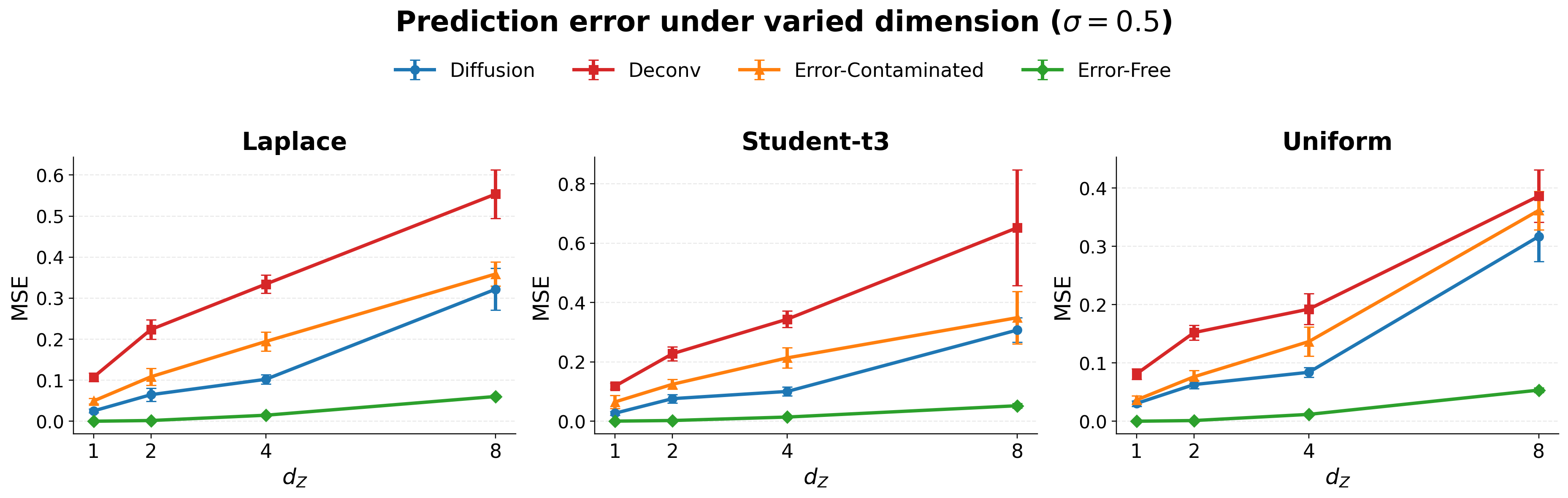}
		\vspace{-0.1in}
		\caption{MSE compared with the ground truth model.}
		\label{fig: sim dimension mse}
	\end{subfigure}
	\caption{Simulations under varied dimension $d_{\bZ}$ with $\sigma=0.5$. The three panels in each subfigure correspond to varied $\bZ_{0}$. Error bars show one standard deviation over $10$ independent runs.}
	\label{fig: sim dimension}
	\vspace{-0.2in}
\end{figure}
First, we generate the error-free $\bZ_{0}$ in dimension $d_{\bZ}$ from three non-Gaussian product distributions whose marginal variances are standardized to one.
\begin{itemize}
	\item \textit{Laplace}: independent Laplace marginals with location zero and scale $1/\sqrt{2}$;
	\item \textit{Student-$t$}: independent Student-$t_{3}$ marginals scaled by $1/\sqrt{3}$;
	\item \textit{Uniform}: independent Uniform marginals on $[-\sqrt{3},\sqrt{3}]$.
\end{itemize}
Next, we generate the response $Y_{0} = f_{\bbeta_{\mathrm{truth}}}(\bZ_{0}),$
where $f_{\bbeta_{\mathrm{truth}}}$ is a three-layer neural network with He initialization \citep{he2015delving}. Finally, we generate the error-contaminated observation $\bX_{1}=(\bZ_{1},Y_{1})$ by adding independent Gaussian measurement error with covariance $\bbSig = \sigma^{2}\bI$ to both $\bZ_{0}$ and $Y_{0}$.
\vspace{-0.2in}
\paragraph{Setup.}
For the constructed error-contaminated observation $\bX_{1}=(\bZ_{1},Y_{1})$, we generate $n=1000$ samples. For the \texttt{Diffusion} denoising method, we use $m=64$ random features to estimate the score function \eqref{eq: kernel model}, set the regularization coefficient to $\lambda=n^{-1/2}$, and use $K = 50$ reverse-time discretization steps. In the Gaussian kernel \eqref{eq:Gaussian kernel}, we find that the anisotropic bandwidth matrix $\bbH = 0.5 \widehat{\bbSig}_{\bX_{1}}^{-1}$ 
works best in practice, where $\widehat{\bbSig}_{\bX_{1}}$ is the empirical covariance matrix of the error-contaminated joint vector. 
\vspace{-0.2in}
\paragraph{Evaluation.}
We generate an independent test sample with $1000$ samples from the same error-free distribution. To measure distributional recovery, we report the sliced Wasserstein distance (SW) scaled by $\sqrt{d}$ (to remove the influence of dimension) \citep{villani2008optimal} between the generated joint sample $(\hZ_{0},\hat Y_{0})$ and the independent error-free test sample $(\tZ_{0},f_{\bbeta_{\mathrm{truth}}}(\tZ_{0}))$; the empirical estimator is described in Supplementary Material \ref{app:sw-distance}. To further evaluate the denoising quality, we report a mean-square error (MSE) $\frac{1}{n}\sum_{i=1}^{n}
\left\{ f_{\bbeta_{\mathrm{truth}}}(\tZ_{0}^{(i)})
- f_{\hbeta}(\tZ_{0}^{(i)}) \right\}^{2},$ 
where $f_{\hbeta}$ is trained using the generated or baseline data. 
\begin{figure}[t!]
	\vspace{-0.1in}
	\centering
	\begin{subfigure}{\textwidth}
		\centering
		\includegraphics[width=0.9\textwidth]{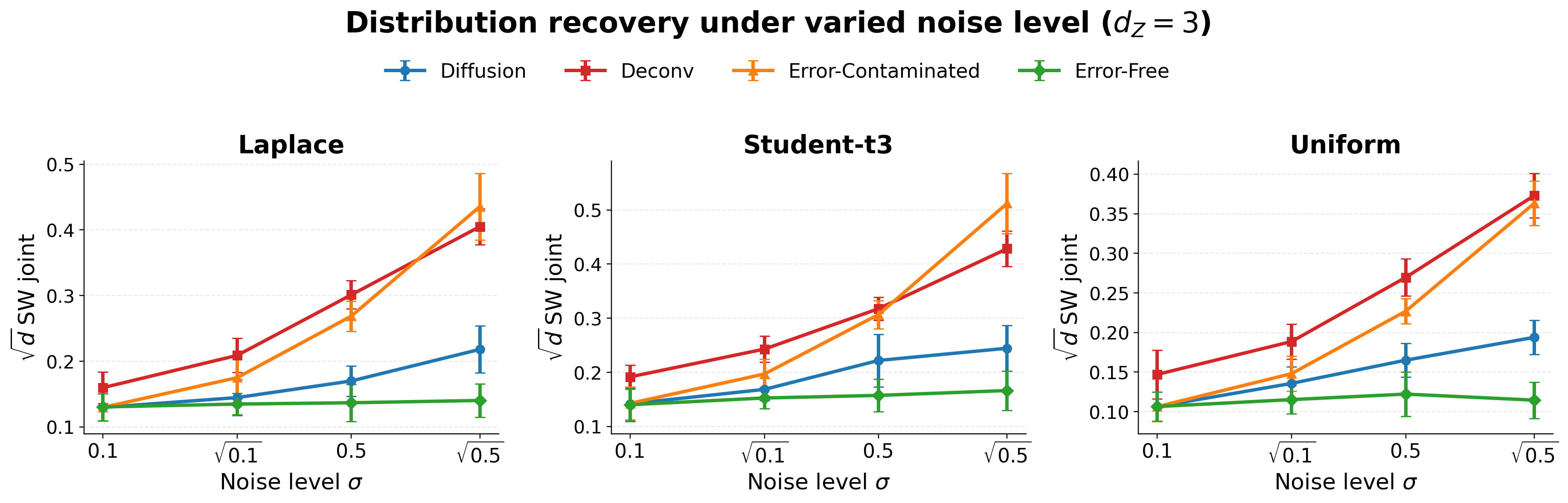}
		\caption{Sliced Wasserstein distance of denoised data.}
		\label{fig: sim sigma sw}
	\end{subfigure}
	\begin{subfigure}{\textwidth}
		\centering
		\includegraphics[width=0.9\textwidth]{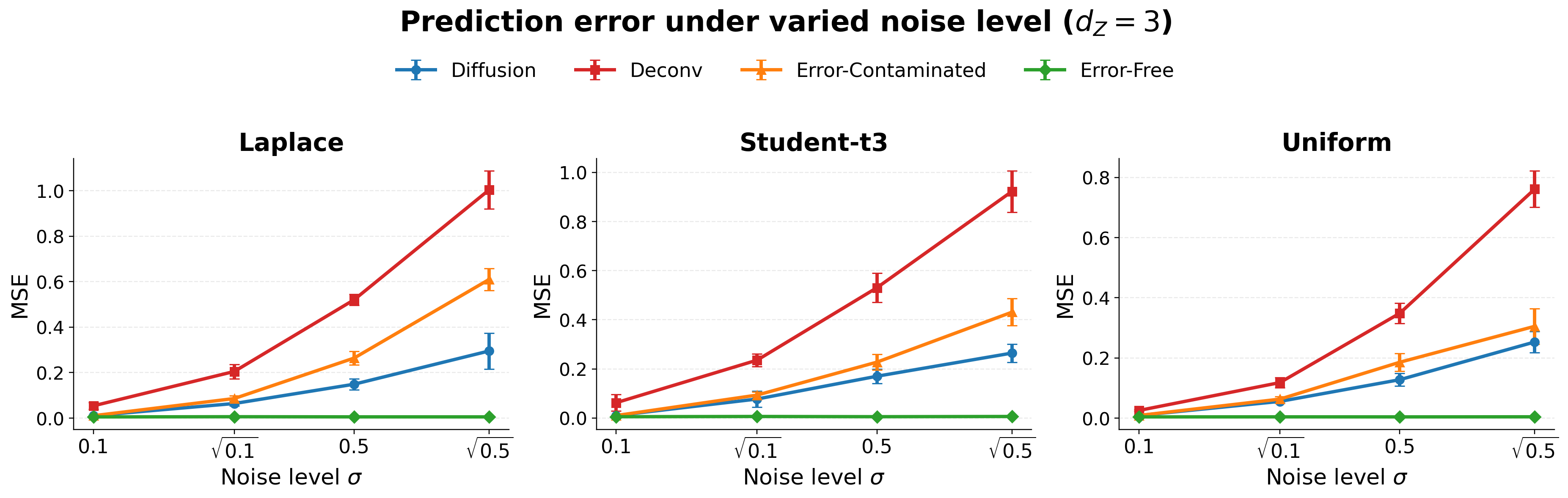}
		\caption{MSE compared with the ground truth model.}
		\label{fig: sim sigma mse}
	\end{subfigure}
	\caption{Results under varied noise level $\sigma$ with $d_{\bZ}=3$. The panels in each subfigure correspond to varied $\bZ_{0}$. Error bars show  standard deviation over $10$ independent runs.}
	\label{fig: sim sigma}
	\vspace{-0.2in}
\end{figure}
\vspace{-0.1in}

We compare our proposed \texttt{Diffusion} method with three baseline methods: \texttt{Error-Free}, an oracle method using the error-free sample; \texttt{Error-Contaminated}, which ignores measurement error and uses the error-contaminated sample directly; and \texttt{Deconv}, a deconvolution baseline based on the kernel deconvolution idea of \cite{fan1991optimal}. The \texttt{Deconv} implementation deconvolves the marginal distribution of each coordinate of $\bZ$ and uses local kernel regression to recover $Y$ conditionally on the deconvolved covariates. This implicitly uses the prior knowledge that the coordinates of $\bZ_{1}$ are mutually independent. By doing so, we alleviate the curse of dimensionality for the \texttt{Deconv} method, since the original version performs poorly in our experiments.

We summarize the results in Figures \ref{fig: sim dimension} and \ref{fig: sim sigma}, where each curve reports the mean over $10$ independent repetitions and the error bars show one standard deviation. Across the three distributions of $\bZ_{0}$, Figures \ref{fig: sim dimension sw}--\ref{fig: sim sigma mse} show that \texttt{Diffusion} consistently improves upon the deconvolution and error-contaminated baselines in both joint SW and MSE. At very low noise levels, the error-contaminated baseline can be competitive because the observed sample is already close to the error-free sample. Moreover, although we decouple the coordinate correlations of $\bZ_{1}$ in the deconvolution baseline, unlike our \texttt{Diffusion} method, it still works poorly in high-dimensional regimes. The oracle error-free method provides the empirical lower benchmark; its SW values are nonzero due to finite-sample variability between independent training and testing samples. The results show that our \texttt{Diffusion} method works well for all light-tailed, heavy-tailed, and bounded distributions $\bZ_{0}$. 

We further evaluate these methods under varying noise levels $\sigma$. The results summarized in Figure \ref{fig: sim sigma} further verify the effectiveness of our proposed \texttt{Diffusion} method. 

\vspace{-0.2in}
\paragraph{\textbf{Ablation Study}} For our \texttt{Diffusion} method, we further investigate the impact of the regularization parameter \( \lambda \), the number of random features \(m\), the bandwidth scale \(h\), and the number of discretization steps $K$ in Supplementary Material \ref{app:ablation study}. The results indicate that our method remains stable across variations of hyperparameters. We also discuss the unknown-$\sigma$ setting in Supplementary Material \ref{app:unknown-sigma}. 

\section{Real Data Analysis}
\vspace{-0.1in}
\subsection{Impact of Measurement Error of  Glucose Monitors in Diabetes Clinical Trials}\label{sec:application}
\vspace{-0.1in}
\textit{Continuous Glucose Monitors} (CGMs) are minimally invasive technologies used to record glucose levels over prolonged periods, such as days and weeks. Since their adoption nearly 20 years ago in clinical practice, the benefits of CGMs in managing type 1 diabetes and evaluating the efficacy of new drugs in diabetes trials have become clear. More recently, CGMs have also found novel applications, for example, in human nutrition to support personalized diet prescriptions. 

Despite their utility, CGMs are subject to significant measurement errors, which can affect the conclusions drawn from the data, such as those in clinical trials. Although the measurement error of CGM technologies has decreased over time, e.g., the mean absolute relative difference has dropped from approximately 20\% to about 10\%, it can be large enough to cause misleading conclusions of a individual's real metabolic status and its evolution.

From a technological perspective, understanding the limitations and potential of CGMs in relation to measurement error through novel statistical methods is a major challenge, particularly for regulatory approvals (e.g., from the FDA) of predictive AI algorithms or for defining novel digital biomarkers. Our denoising diffusion framework provides an effective approach to performing sensitivity analyses on how measurement error affects the estimation of possibly complicated digital health metrics and, ultimately, clinical decision-making. The analysis goals of this section are to:
i)
Use denoising diffusion algorithms to estimate individual distributional representations of CGM data under measurement error;
ii) Incorporate denoising steps to enable sensitivity analyses that evaluate new clinical interventions in clinical trials, taking into account the underlying noise level. We focus on assessing statistical inference when comparing two different interventions.

\vspace{-0.2in}
\paragraph{\textbf{Data Structure and Collection.}} 
Consider \(n\) individuals, each with \(n_i\) glucose observations collected at times \(s^{(i,j)}\), giving  
\(\bigl\{\bigl(s^{(i,j)}, G^{(i,j)}\bigr)\bigr\}_{j=1}^{n_i}\).
Let \(X^{(i)}(s)\) be the unobserved glucose process for the \(i\)-th individual, where \(s \in [0,\mathcal{S}^{(i)}]\). 
The observed measurements satisfy
$ G^{(i,j)}
\;=\;
X^{(i)}\!\bigl(s^{(i,j)}\bigr)
\;+\;
\epsilon^{(i,j)}
$,
where \(\epsilon^{(i,j)} \sim \mathcal{N}(0,\sigma_{0}^2)\) are i.i.d.\ across all \(i,j\). Although this may be simplistic for real continuous glucose monitoring (CGM) data, it provides a tractable framework for assessing measurement error sensitivity.

From a population perspective, we focus on the marginal distribution of \(X^{(i)}\). Define
\[
F^{(i)}(s)
\;=\;
\frac{1}{\mathcal{S}^{(i)}}
\int_{0}^{\mathcal{S}^{(i)}}
\mathbf{1}\!\bigl\{X^{(i)}(u)\le s\bigr\}\,\mathrm{d}u,
\quad
s \in [40,400].
\]
Various metrics, such as time in hypoglycemia, time in range (TIR), and time in hyperglycemia, can be derived from \(F^{(i)}\). 
We often categorize glucose readings as \(G^{(i,j)}\le 70\) (hypoglycemia), \(70< G^{(i,j)}\le 180\) (TIR), or \(G^{(i,j)}> 180\) (hyperglycemia). In particular, the hypoglycemia, TIR, and hyperglycemia proportions are
$\mathrm{Hypo}^{(i)} = F^{(i)}(70)$, $\mathrm{TIR}^{(i)} = F^{(i)}(180) - F^{(i)}(70)$ and $\mathrm{Hyper}^{(i)} = 1 - F^{(i)}(180)$, respectively.
For estimation, denoised observations \(\bigl\{G^{(\sigma,i,j)}\bigr\}_{i, j}\) can be used, where \(\sigma=0\) corresponds to the raw measurements \((G^{(0,i,j)}=G^{(i,j)})\). In addition to discrete summaries, we also consider the functional ``glucodensity'' profile \citep{doi:10.1177/0962280221998064}, a density estimator that, in practice, can be estimated via kernel-based methods.

Our goal is to quantify how measurement error affects these population-level CGM metrics and to develop estimators that minimize the related discrepancies. We assume that the population cdf \(F^{(i)}\) corresponds to a continuous-time stochastic process \(X^{(i)}\) that is ``slowly varying'' or approximately stationary during the observational window. Biologically---especially in diseases like diabetes---it is reasonable to assume that the metabolic evolution of the disease is not rapid so that the distribution \(F^{(i)}\) remains roughly time-invariant over the study period. Summarizing this distributional behavior via the marginal cdf \(F^{(i)}\) is therefore a good proxy for tracking glucose health patterns and a good method to predict long-term diabetes outcomes. 

\vspace{-0.2in}
\paragraph{\textbf{Data Description.}}
Our study is motivated by data from the JDRF CGM Study Group \citep{doi:10.1056/NEJMoa0805017,juvenile2009effect}, an early large-scale clinical trial evaluating CGM for managing type 1 diabetes (T1DM). Data were obtained from the multi-center trial\footnote{JDRF CGM RCT, NCT00406133; \url{https://public.jaeb.org/datasets/diabetes}}, wherein 451 adults and children were randomized to either a CGM-based (treatment) arm or a standard-of-care (control) arm.

For this analysis, we focus on 188 individuals (102 control, 86 treatment) with minimal missing CGM data at the study's start and end of the intervention. We focus on CGM-based measures (time in hypo-, hyper-, or time in range) and how they are affected by measurement error in the underlying glucose process. Note that these CGM metrics are among the current state-of-the-art metrics promoted by the American Diabetes Association.

\vspace{-0.2in}
\paragraph{\textbf{Density Function Estimation with Measurement Error.}}
Given a noise level \(\sigma\), for each individual \(i = 1, \dots, n\), we use the denoising framework to generate the sample \(G^{(\sigma, i,j)}\) for \(j = 1, \dots, n_i\). We estimate the marginal density of \(\{G^{(\sigma, i,j)}\}_{j = 1}^{n_{i}}\) for each individual using a kernel density estimator as in \cite{matabuena2024glucodensity}. 
Figure \ref{fig:combined} shows the density estimators for four individuals at noise levels \(\sigma = 0, 10, 15, 20, 25, 30\). The first two individuals belong to the control group, and the last two belong to the treatment group at the beginning of the study. We estimate the corresponding densities at both the beginning and end of the interventions. We observe that different noise levels result in varying shapes of density functions, which can impact the proportion of time spent in clinically relevant target zones defined previously.

To display these conclusions with higher resolution for noise levels \(\sigma = 0, 10, 20\), Supplementary Material \ref{app:figure for diabetes} illustrates the variations in the control and treatment groups (RT-CGM) before and after interventions. For example, when comparing \(\sigma = 0\) versus \(\sigma = 20\) in the boxplot for low glucose concentration (hypoglycemia) and high glucose concentration (hyperglycemia), the median range of variation increases to \(1.5\) and \(0.75\), respectively. Although there are certain original variations, for TIR (normal glucose concentration), the estimate is reduced by approximately \(3\%\) of the time, indicating that the original analysis overestimated the proportion of time that individuals spend within this range. From a clinical point of view, a variation in the hypo- or hyperglycemia proportion of \(0.5\%\) can be clinically relevant, especially for hypoglycemia, when detecting statistically significant differences in randomized clinical trials.
\begin{figure}[t!]
	\vspace{-0.1in}
	
	\centering
	\begin{subfigure}[b]{0.49\textwidth}
		\centering
		\includegraphics[width=\textwidth]{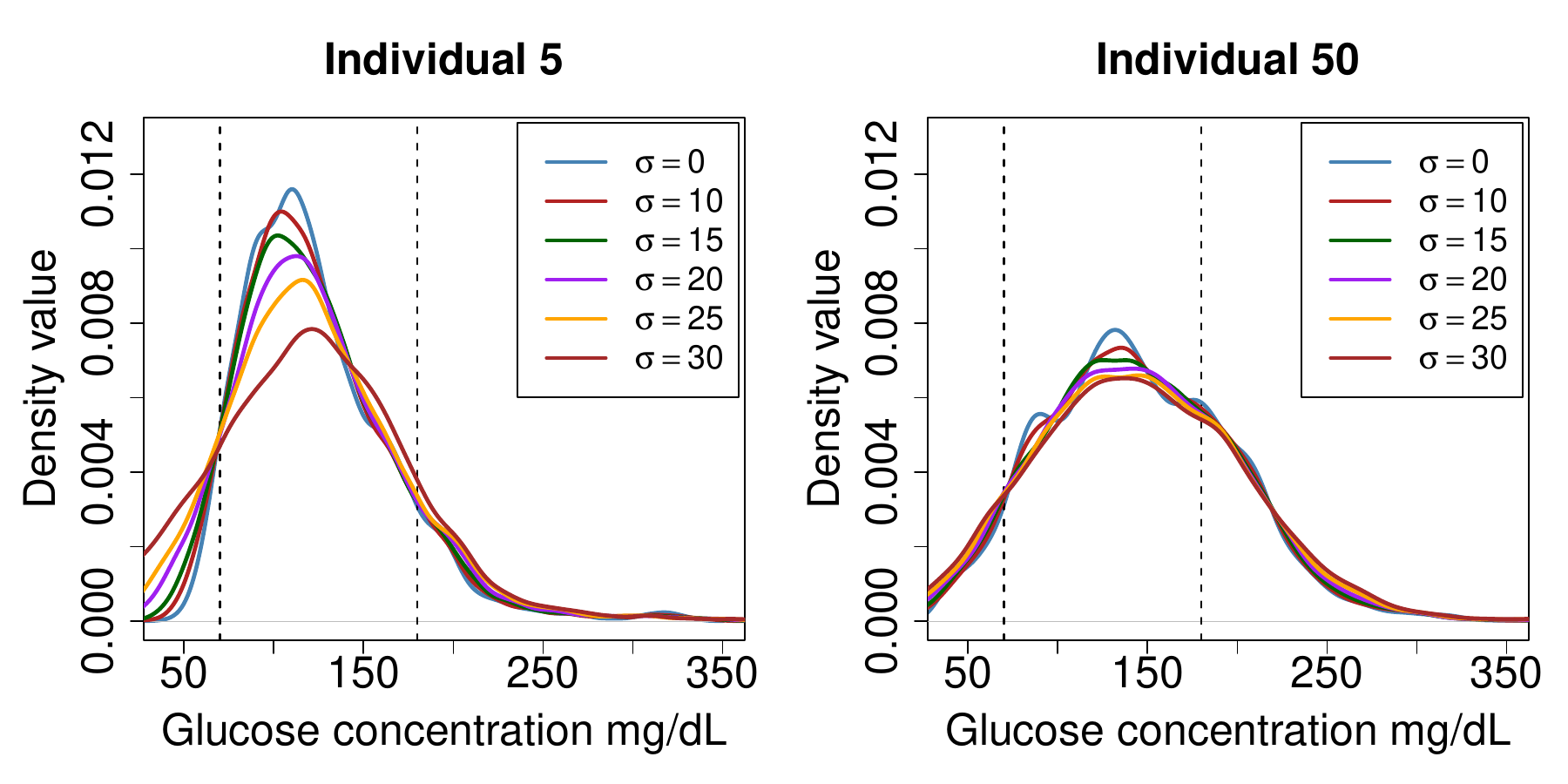}
		\caption{Control group}
		\label{fig:sub1}
	\end{subfigure}
	\hfill
	\begin{subfigure}[b]{0.49\textwidth}
		\centering
		\includegraphics[width=\textwidth]{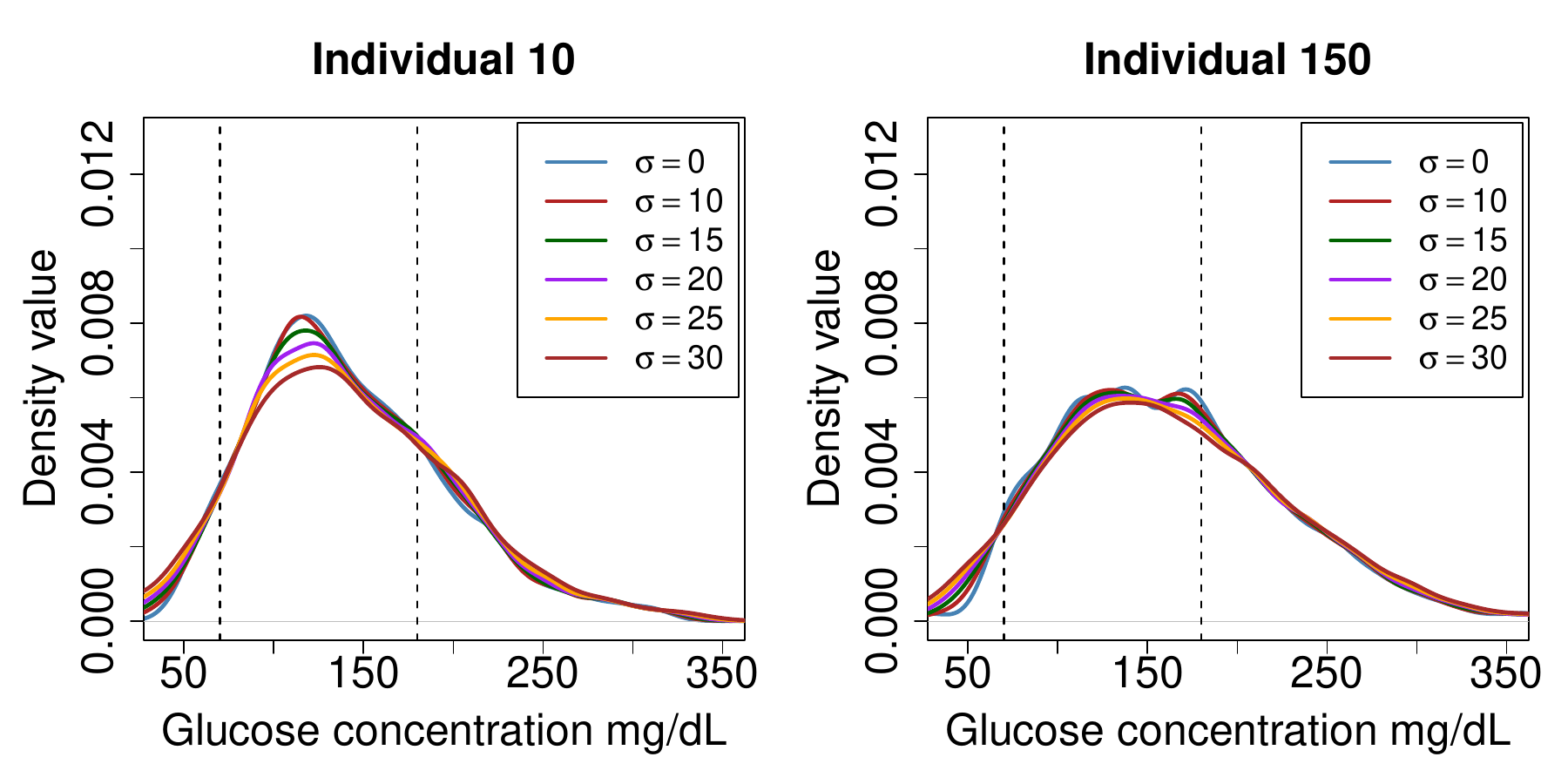}
		\caption{Treatment group}
		\label{fig:sub2}
	\end{subfigure}
	\caption{Density estimators for four individuals at different noise levels \(\sigma = 0, 10, 15, 20, 25, 30\) at the beginning.}
	\label{fig:combined}
	\vspace{-0.2in}
\end{figure}

\vspace{-0.2in}
\paragraph{\textbf{Impact of Measurement Error in Intervention Assessment.}}
Now, we focus on the changes in the control and treatment groups (RT-CGM) after the interventions for the following metrics: hypoglycemia (hypo), hyperglycemia (hyper), and Time in Range (TIR). We consider a boxplot of the differences for each individual between the end and the beginning of the intervention. Figure \ref{fig:ejemplo2} shows these results along with the corresponding p-values derived from a paired t-test for different noise levels \(\sigma = 10, 15, 20, 25, 30\).
In the case of the hypoglycemia metric for different noise levels, there are no statistically significant differences, similar to the TIR metrics. However, the p-values are highly unstable, and the boxplots indicate important changes in the distribution across different noise levels \(\sigma\). In the case of hyperglycemia (hyper), the p-values are statistically significant except for \(\sigma = 25\), indicating that changes in noise level lead to the acceptance or rejection of the null hypothesis at the classical threshold of $0.05$, thereby compromising the validation of positive effect of interventions.

We recognize that there can be an estimation error between the denoised-data distribution and the true error-free distribution even when \(\sigma\) is correctly specified, and the theoretical validity of the t-test p-values needs further justification. However, the p-values and the boxplots under different \(\sigma\) values are still useful for sensitivity analysis of the impact of measurement error in the clinical study.

\begin{figure}[t!]
	\centering
	\vspace{-0.1in}
	\includegraphics[angle=0, width=1\textwidth]{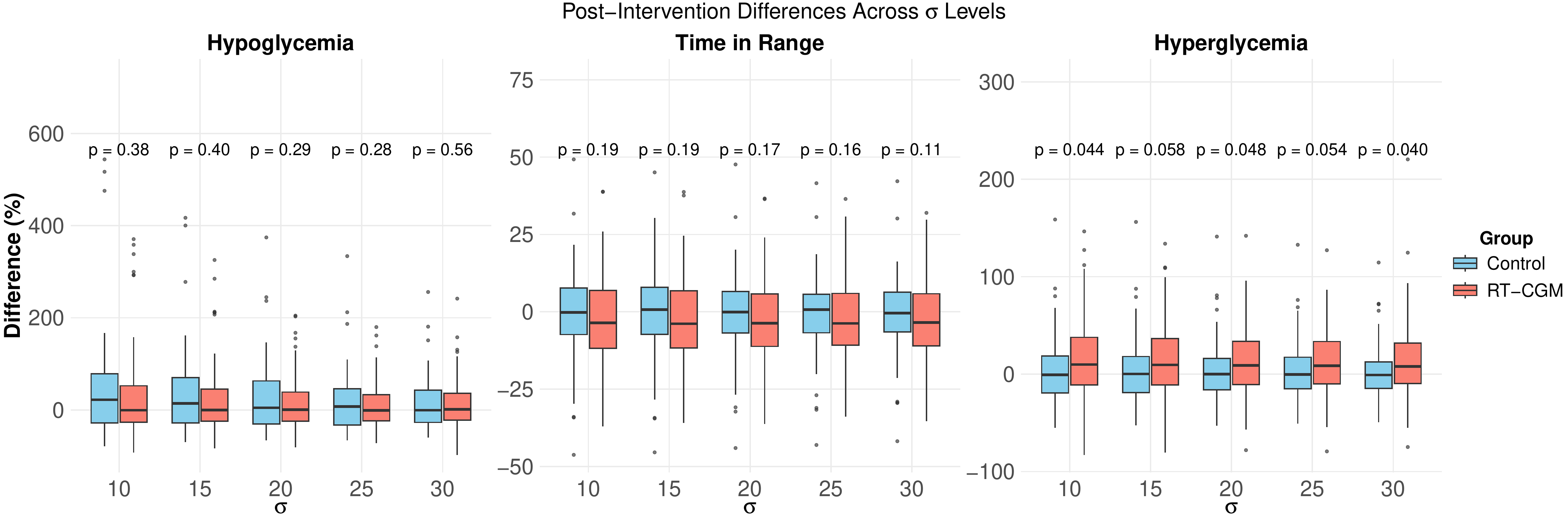}
	\caption{Boxplots of the differences in hypoglycemia (Hypo), hyperglycemia (Hyper), and Time in Range (TIR) metrics for control and treatment groups (RT-CGM) before and after interventions across various noise levels (\(\sigma = 10, 15, 20, 25, 30\)). The figure also displays corresponding p-values from paired t-tests, indicating evidence against the null hypothesis for different noise levels.}
	\vspace{-0.2in}
	\label{fig:ejemplo2}
\end{figure}

\subsection{Association between High-Density Lipoprotein Cholesterol and Coronary Heart Disease Risk}
The preceding analyses focus on CGM-derived digital biomarkers. Next, we consider a classical lipid example to assess how measurement error may affect inference in a standard multivariable regression setting. 
The association between High-Density Lipoprotein Cholesterol (HDL-C), triglyceride, and  Coronary Heart Disease (CHD) risk has long served as a cautionary example in observational epidemiology. HDL-C and triglyceride are correlated lipid measures, and when such lipid exposures are measured imprecisely, conventional adjustment for one exposure while estimating the association of the other can yield unstable and potentially misleading ``independent'' associations. Although observational studies have often reported an inverse association between HDL-C and CHD, evidence from randomized trials and Mendelian randomization analyses has challenged a causal interpretation of this association \citep{daveysmith2020correlation}. A key concern is that the measurement error in HDL-C and triglyceride may distort their mutually adjusted associations with CHD, making it difficult to separate their respective roles using regression \citep{daveysmith2020correlation}. Motivated by this example, we apply the proposed denoising method to examine how sensitive the adjusted HDL-C--CHD association is to possible measurement error in the lipid measurements.

We use data from the National Health and Nutrition Examination Survey (NHANES) for the 2011--2012 and 2013--2014 cycles. After excluding individuals with missing CHD status, HDL-C, or triglyceride measurements, the analysis includes 4975 individuals. We fit a logistic regression model for CHD status with HDL-C and triglyceride as covariates. Both lipid variables are centered and standardized before entering the regression model, so that the corresponding coefficients are interpreted on a standard-deviation scale. To evaluate the sensitivity of the adjusted HDL-C association to possible measurement error, we treat the observed HDL-C and triglyceride values as error-contaminated measurements of the underlying lipid variables and apply the proposed denoising procedure under a sequence of candidate measurement-error variances. Specifically, for the two-dimensional standardized lipid vector, we set $\bbSig = \sigma^{2}\bI$, where $\sigma^{2}=0$ corresponds to the conventional analysis based directly on the observed measurements. For each value of $\sigma^{2}$, we denoise HDL-C and triglyceride jointly and then refit the downstream logistic regression using the denoised lipid variables.

\begin{figure}[!t]
	\centering
	\includegraphics[width=0.7\linewidth]{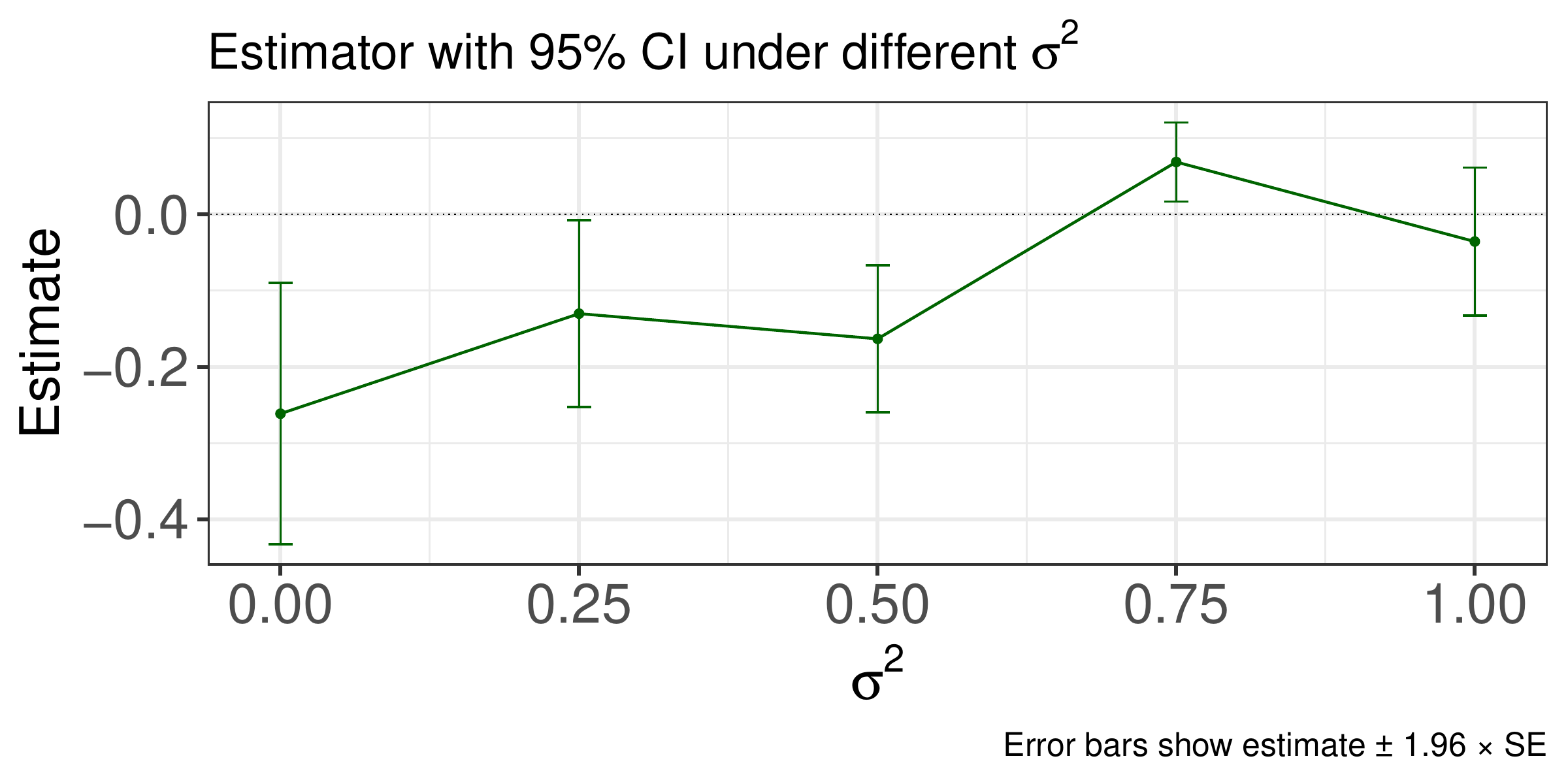}
	\caption{Sensitivity analysis for the adjusted HDL-C coefficient in the logistic regression of CHD status on HDL-C and triglyceride. For each candidate measurement-error variance $\sigma^{2}$ in $\bbSig=\sigma^{2}\bbI$, the lipid variables are denoised using the proposed method and the downstream logistic regression is refitted. Points represent the estimated HDL-C coefficients, and vertical bars represent Wald-type 95\% confidence intervals computed as the estimate $\pm 1.96$ standard errors. The horizontal dotted line marks zero.}
	\label{fig:HDL_ME}
\end{figure}

Figure~\ref{fig:HDL_ME} shows that the conclusion from the conventional analysis is sensitive to the assumed measurement-error level. When $\sigma^{2}=0$, the estimated coefficient of HDL-C is negative and its confidence interval lies below zero, which is consistent with the familiar observational inverse association between HDL-C and CHD risk after adjustment for triglyceride. However, once measurement error is allowed, the estimated association becomes unstable: the magnitude is attenuated for some values of $\sigma^{2}$, changes sign at $\sigma^{2}=0.75$, and becomes statistically indistinguishable from zero at $\sigma^{2}=1$. Thus, the apparent protective association between HDL-C and CHD is not robust to plausible perturbations of the error structure in the correlated lipid measurements.

\section{Discussion}\label{sec:discussion}
This paper proposes a novel statistical machine-learning approach to address the measurement error problem in both supervised and unsupervised data analysis tasks. Our main methodological contribution is an RKHS-based, complex-valued score matching method for training a diffusion model without requiring access to intermediate error-contaminated data. In addition, we provide a theoretical guarantee on the distributional gap between the denoised data and the true error-free data. Empirical results support these theoretical findings and demonstrate the effectiveness of our approach for multivariate data.
We also present a clinical application in digital health, specifically glucose monitoring, to illustrate the impact of measurement-error modeling on final decision-making. This example highlights the need, in this AI era, to address not only predictive performance but also technological limitations regarding measurement reliability \citep{bent2020investigating}. Such considerations are particularly relevant for regulatory agencies (e.g., the FDA) seeking to understand both the potential and the limitations of emerging technologies. Our framework can be applied to various clinical scenarios, including biomarker discovery and the validation of novel therapies. Notably, it can generate denoised data that are readily analyzed with standard statistical techniques, thereby enhancing both interpretability and usability. Furthermore, this flexibility enables practitioners to select the most suitable models for their specific needs.

A natural future research direction for our method is to validate finite-sample performance in supporting statistical inference for parameter estimation. For instance, one can construct confidence intervals for parameter estimates by using denoised data, which can provide asymptotically valid inferences under suitable conditions. Moreover, through additional resampling techniques, such as generalizing smoothing bootstrap methods to our RKHS-based denoising setting, finite-sample performance can be further improved. Extensions along these lines, including specialized predictive approaches (e.g., conformal prediction) that offer non-asymptotic guarantees, are critical for ensuring trustworthiness in machine learning applications.
This paper is an initial step toward tackling measurement error using diffusion models. Specifically, we assume that the measurement error follows a normal distribution with a known variance. As a promising direction for future research, we plan to extend this method to accommodate other known or unknown error distributions using diffusion models.

\newpage
\appendix
\noindent{\bf\Large Supplementary Material}

\renewcommand{\thesection}{S\arabic{section}}
\renewcommand{\theassumption}{S\arabic{assumption}}
\renewcommand{\thetheorem}{S\arabic{theorem}}
\renewcommand{\thetable}{S\arabic{table}}
\renewcommand{\thefigure}{S\arabic{figure}}
\renewcommand{\theexample}{S\arabic{example}}
\renewcommand{\theproposition}{S\arabic{proposition}}
\renewcommand{\thelemma}{S\arabic{lemma}}
\renewcommand{\theequation}{S\arabic{equation}}

\section{Further Related Work}\label{app:related work}
\paragraph{Measurement error.}

Data science methods for error-contaminated data have a long history, dating back at least to  econometrics (e.g., \cite{frisch1934statistical}; see \cite{annurev:/content/journals/10.1146/annurev-economics-080315-015058} for a review). Research in this area has remained active and has even gained momentum in recent years. One reason for this renewed interest is the push to relax many standard assumptions (such as linearity and independence), which has led to more complex methodological developments. Examples include fruitful results in binary regression \citep{chen2011marginal}, generalized linear models \citep{stefanski1987conditional}, instrumental variable models \citep{jiang2020measurement}, and survival analysis \citep{prentice1982covariate, yan2015corrected}, classification problem by neural network \citep{yi2021improved, wang2022out, yi2023breaking}. 
However, many of the existing methods \citep{loh2012high,jiang2023high} are highly model-dependent, makes them hard to use when the form of the ground truth model is unknown.

Another standard model-free method is deconvolution, which has been developed to estimate the density of the error-free variable when observations are error-contaminated \citep{fan1991optimal,hall2009deconvolution,belomestny2021density}, without requiring the structure of the model. Within this framework, kernel smoothing techniques are employed to estimate the density of the error-free data. However, a significant limitation hampers existing deconvolution techniques: the curse of dimensionality. This issue severely restricts their utility for tasks involving moderate- to high-dimensional variables.

\vspace{-0.2in}
\paragraph{Diffusion Model.} As introduced in Section \ref{sec:introduction}, diffusion models are generative models in AI that have achieved great success in image generation \citep{ho2020denoising,song2020score,rombach2022high,yi2023generalization,yi2024towards,wang2025improved,jiang2026fragile,jiang2026ofa}, video generation \citep{ma2024latte,opensora}, and text generation \citep{lin2023text,li2022diffusion}. Theoretically, this method originated from Langevin dynamics \citep{welling2011bayesian} for sampling from a target distribution by invoking score function estimation \citep{hyvarinen2005estimation,vincent2011connection}. Later, it was linked to VAEs \citep{kingma2013auto,sohl2015deep,song2020sliced} and widely applied in AI by modeling the score function with suitable deep neural networks \citep{ronneberger2015u,peebles2023scalable}. 
\par
In a diffusion model, the key idea is to transform an easily sampled distribution into a target distribution. Therefore, diffusion models are suitable for the measurement error problem studied in this paper when the target distribution is set to the denoised distribution. Despite this connection, to the best of our knowledge, diffusion models have not previously been applied to measurement error problems. We speculate that the main obstacle is correctly estimating the score function of the intermediate data. The closest topic in the diffusion-model literature is learning from ``corrupted data'' \citep{kawar2023gsure,daras2024ambient}, where both the error-free data and some corrupted data are observed. Their goal is to restore the corrupted data, which is related to removing the measurement error discussed in this paper. However, methods for learning from corrupted data usually require error-free data so that the score function can be directly estimated. These error-free data are missing in our setting, so those methods cannot be directly applied here.

\paragraph{Learning in Reproducing Kernel Hilbert Spaces.} 
Statistical learning in Reproducing Kernel Hilbert Spaces (RKHS) has a rich history in both statistical science and machine learning. The foundations of this area can be traced back to non-linear prediction research by Wahba \cite{wahba1990spline}, which introduced the idea of learning complex regression functions without explicitly tuning the underlying smoothing hyperparameters. Examples of such methods include kernel ridge regression and support vector machines \citep{hofmann2008kernel}.

Another direction focuses on kernel mean embeddings \citep{muandet2017kernel}, a technique that maps probability distributions into an RKHS as random elements of that space. Kernel mean embeddings are widely used for comparing distributions (e.g., in two-sample testing), clustering \citep{matabuena2022kernel}, evaluating generative models, and, more recently, in conformal prediction    \citep{matabuena2024conformal}. Novel RKHS-based methods have also been proposed for density estimation \citep{zhou2020nonparametric,kim2012robust,schuster2020kernel}.

Despite the many ways in which the properties of RKHS can be leveraged for data analysis, to the best of our knowledge, there are currently no methods for efficiently handling measurement error while preserving a probability distribution perspective that exploits diffusion models. The intersection of these two fields has the potential to enable powerful, model-free approaches for addressing error in statistical analysis.

\section{Regularity Conditions}\label{app:regularity conditions}

\subsection{Standard Regularity Conditions}
In this paper, our results are built on the following mild assumptions. They are standard ones in the existing diffusion-based literature \citep{song2021maximum}. 

\begin{assumption}\label{ass:regularity}
	We make the following assumptions throughout this paper. 
	\begin{enumerate}
		\item For $t\in[0, 1]$, $p_{t}(\bx) \in \cC^{2}$ \footnote{Differentiable function with continuous second-order derivative.}, and $\mE[\|\bX_{t}\|^{2}] < \infty$. 
		\item $\lambda_{\max}(\bbSig)$ and $\lambda_{\max}(\bbH)$ are bounded from above, $\lambda_{\min}(\bbSig)$ and $\lambda_{\min}(\bbH)$ are bounded from below\footnote{Here $\lambda_{\max}(\cdot)$ and $\lambda_{\min}(\cdot)$ are respectively the largest and the smallest eigenvalues of a matrix.}. 
		\item There exists a positive constant $C$ such that for all $\bx$ and $t\in[0, 1]$, $\|\nabla\log{p_{t}}(\bx)\| \leq C(1 + \|\bx\|)$. 
		\item \textbf{Novikov's condition}. $\mE_{\bX_{t}}\left[\frac{1}{2}\int_{0}^{1}\exp\left(\left\|\bbSig^{\frac{1}{2}}\left(\nabla_{\bx}\log{p_{t}}(\bX_{t}) - \hs_{t}(\bX_{t})\right)\right\|^{2}\right)dt\right] < \infty$ 
	\end{enumerate}
\end{assumption}
We now discuss these assumptions. The first is a standard moment and density-continuity assumption for $p_{t}$. The second requires the upper and lower eigenvalues of $\bbSig$ and $\bbH$ to be bounded, which can be satisfied by properly choosing $\bH$ or artificially adding extra Gaussian noise to observed data. The third condition is imposed to ensure that the reverse-time SDE \eqref{eq:reverse time sde} has a solution \citep{oksendal2013stochastic}.\footnote{This assumption is also implied by our Assumption \ref{ass:lip continuity}.} Novikov's condition is standard when analyzing the sampling error of diffusion-based models \citep{song2021maximum,lee2022convergence,chen2022sampling}; it can be verified if we impose a sub-Gaussian-type condition on $\bX_{0}$ (e.g., $\bX_{0}$ has bounded support). This follows from Tweedie's formula \citep{efron2011tweedie},
\[
\nabla_{\bx}\log{p_{t}}(\bX_{t}) = \frac{\bbSig^{-1}\left(\mE[\bX_{0}\mid \bX_{t}] - \bX_{t}\right)}{t^{2}},
\]
and the formulation of our kernel-based model $\hs$.   
\par

\subsection{Continuity Condition}\label{app: continuity}
Next, we give the continuity assumptions used to derive Theorem \ref{thm:sampling error}. 
\begin{assumption}\label{ass:lip continuity}
	For any $\bx$ and $t$, the score function $\nabla_{\bx}\log{p_{t}(\bx)}$ and our obtained kernel-based model $\bs_{t}(\bx; \btheta_{t})$ are both Lipschitz continuous with respect to $\bx$ with coefficient $L$; that is, 
	\begin{equation*}
		\begin{aligned}
			\left\|\nabla_{\bx}\log{p_{t}(\bx)} - \nabla_{\bx}\log{p_{s}(\by)}\right\| \leq  L\|\bx - \by\|; \qquad \left\|\bs_{t}(\bx; \btheta_{t}) - \bs_{t}(\by; \btheta_{t})\right\| \leq  L\|\bx - \by\|. 
		\end{aligned}
	\end{equation*}
\end{assumption}
This condition is widely used in the existing literature on error analysis for diffusion-based methods, e.g., \citep{chen2022sampling,lee2022convergence,chen2023probability,li2024towards}. The second inequality holds under bounded $\hat{\btheta}_{t}$ and a properly chosen $\bH$ in the kernel $\cK(\bx, \by)$, due to the formulation of our score model.  
\subsection{Conditions for Concentration}\label{app:conditions for concentration}
In addition to the aforementioned conditions, we further require the following Bernstein-type conditions to derive our concentration results in Proposition \ref{prop:approximation error} and Theorem \ref{thm:gen error}.   
\par
Before illustrating these conditions, we need some definitions to simplify the notations. For any $\bx \in \bbR^{d}$, let 
\[\bzeta_{x,t} = -2 \mE_{\bxi}\left[\nabla_{\bx}\cK(\cdot, \bx + \mathrm{i}\sqrt{1 - t}\bxi)\right] \in \cH_{\cK}^{d},\] 
and define the linear operator 
\[\bGamma_{\bx,t}(\bs) = \mE_{\bxi}[\cK(\cdot, \bx + \mathrm{i}\sqrt{1 - t}\bxi)\bs(\bx + \mathrm{i}\sqrt{1 - t}\bxi)]\] 
for any $\bs \in \cH_{\cK}^{d}$. We use $\|\cdot\|_{\rm HS}$ to denote the Hilbert-Schmidt norm of an operator. 

\begin{assumption}\label{ass:subexp}
	For $t\in [0, 1]$, we assume there exists $\bbf_{t}$ satisfying $\mE_{\bX_{t}}[\bbf_{t}(\bX_{t})\cK(\bx, \bX_{t})] = \nabla\log{p_{t}}(\bx)$, $\mE_{\bX_{t}}[\|\bbf_{t}(\bX_{t})\|^{2}] < \infty$, and it holds that 
	\begin{equation*}
		\mE_{\bX_{t}^{\prime}, \bX_{1}^{\prime}}\left[\left\| \bbf_{t}(\bX_{t}^{\prime})\cK_{t}(\bX_{t}, \bX_{1}^{\prime}) - \nabla\log{p_{t}}(\bX_{t})\right\|^{k}_{L^{2}_{\bX_{t}}}\right] \leq \frac{k!\sigma_{0}^{2}H_{0}^{k - 2}}{2}, 
	\end{equation*}
	for any $k\geq 2$, and some universal constants $\sigma_{0}$ and $H_{0}$, where $\bX_{t}^{\prime}\sim P_{\bX_{t}}$ for $t \in [0, 1]$ is a process independent of $\bX_{t}$, and $L_{2}(P_{\bX_{t}})$ is the $L_{2}$-norm under probability measure $P_{\bX_{t}}$. 
\end{assumption}
This assumption implicitly assume that the original data $X_{0}$ do not locates in low-dimensional manifold of $\bbR^{d}$, since which results in $\|\nabla\log{p_{t}}(X_{t})\|_{L_{X_{t}}^{2}}\to \infty$ when $t\to 0$ \citep{chen2022sampling,chen2023score,oko2023diffusion}. This can be intuitively explained as the $p_{t}(x)$ will drop from finite value to zero when $t\to 0$ for those $x$ not in the support of $X_{0}$. 
\begin{assumption}\label{ass:bernstein}
	For $t\in [0, 1]$, it holds that 
	\begin{equation*}\mE_{\bX_{1}}\left[\left\|\bzeta_{\bX_{1},t} - \mE_{\bX_{1}}[\bzeta_{\bX_{1},t}]\right\|_{\cH_{\cK}^{d}}^{k}\right] \leq \frac{k!\sigma_{1}^{2}H_{1}^{k - 2}}{2}
	\end{equation*}
	and
	\begin{equation*}
		\mE_{\bX_{1}}\left[\left\|\bGamma_{\bX_{1},t} - \mE_{\bX_{1}}[\bGamma_{\bX_{1},t}]\right\|_{\rm HS}^{k}\right] \leq \frac{k!\sigma_{2}^{2}H_{2}^{k - 2}}{2}
	\end{equation*}
	for any $k\geq 2$, and some universal constants $\sigma_{1}$, $H_{1}$, $\sigma_{2}$ and $H_{2}$. 
\end{assumption}
Assumptions \ref{ass:subexp} and \ref{ass:bernstein} are similar to the conditions required for the concentration results in \citet{de2005learning,smale2003estimating,smale2007learning,zhou2020nonparametric}, they describe the smoothness of our kernel function, and are commonly used in existing literature of RKHS. The $\sigma_{1}^{2}$ and $\sigma^{2}_{2}$ here can be explained as the ``variance'' in Bernstein condition \citep{duchi2016lecture}, which can be exponentially scaled with $d$ due to the structure of kernel function. Thus, as discussed in main part of this paper. Our bound in Theorem \ref{thm:gen error} can have exponential dependence with $d$ in the worst case.

\section{Discussion on Reverse-Time SDE \eqref{eq:reverse time sde}}\label{app:discussion on reverse time sde}
\begin{proposition}\label{pro:density}
	Under Assumption \ref{ass:regularity}, $\tX_{t}$ in \eqref{eq:reverse time sde} has the same distribution with the solution of SDE \eqref{eq:sode}.    
\end{proposition}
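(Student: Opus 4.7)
The plan is to verify that the marginal densities of $\tX_{t}$ and of $\bX_{1-t}$ satisfy the same Fokker--Planck equation with the same initial condition, and then invoke uniqueness of solutions. This is the standard Anderson-type argument, specialized to the simple forward SDE \eqref{eq:sode}.

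First, since \eqref{eq:sode} has $\textbf{f}=\bzero$ and diffusion coefficient $\bbSig^{1/2}$, the marginal density $p_{t}$ of $\bX_{t}$ satisfies the Kolmogorov forward equation
\begin{equation*}
\partial_{t} p_{t}(\bx) \;=\; \tfrac{1}{2}\sum_{i,j}\bbSig_{ij}\,\partial_{i}\partial_{j} p_{t}(\bx),
\end{equation*}
which is well-posed on $[0,1]$ under Assumption \ref{ass:regularity}(1)--(2) (positive lower bound on $\lambda_{\min}(\bbSig)$ gives uniform parabolicity, and the $\cC^{2}$ regularity and second moment of $\bX_{0}$ give the needed regularity/decay). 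Setting $q_{t}(\bx) \colonequals p_{1-t}(\bx)$, a direct time reversal yields
\begin{equation*}
\partial_{t} q_{t}(\bx) \;=\; -\tfrac{1}{2}\sum_{i,j}\bbSig_{ij}\,\partial_{i}\partial_{j} q_{t}(\bx).
\end{equation*}

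Next, I would write down the Kolmogorov forward equation associated with the candidate reverse SDE \eqref{eq:reverse time sde}. With drift $\bb(\bx,t) = \bbSig\,\nabla\log p_{1-t}(\bx)$ and the same diffusion $\bbSig^{1/2}$, any density $\tp_{t}$ of $\tX_{t}$ must satisfy
\begin{equation*}
\partial_{t}\tp_{t}(\bx) \;=\; -\nabla\!\cdot\!\bigl(\bb(\bx,t)\,\tp_{t}(\bx)\bigr) + \tfrac{1}{2}\sum_{i,j}\bbSig_{ij}\,\partial_{i}\partial_{j}\tp_{t}(\bx).
\end{equation*}
Substituting $q_{t}$ for $\tp_{t}$ and using $\bb\,q_{t} = \bbSig\,(\nabla q_{t}/q_{t})\,q_{t} = \bbSig\,\nabla q_{t}$, one has $\nabla\!\cdot(\bb\,q_{t}) = \nabla\!\cdot(\bbSig\,\nabla q_{t}) = \sum_{i,j}\bbSig_{ij}\partial_{i}\partial_{j}q_{t}$. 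Plugging this in shows that $q_{t}$ indeed satisfies the Fokker--Planck equation of the reverse SDE, i.e.\ $q_{t}$ is a valid candidate for $\tp_{t}$.

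It remains to show that this candidate is the actual density of $\tX_{t}$. The initial condition $\tX_{0}\sim P_{\bX_{1}}$ gives $\tp_{0} = p_{1} = q_{0}$, so the two Fokker--Planck problems share initial data. Existence and pathwise uniqueness of the reverse SDE follow from the linear growth bound on the score in Assumption \ref{ass:regularity}(3) together with Novikov's condition in Assumption \ref{ass:regularity}(4) (the latter allows one to realize the reverse dynamics via a Girsanov change of measure from the driftless SDE \eqref{eq:sode}, which is the cleanest route). Uniqueness of the associated Fokker--Planck PDE in the class of $\cC^{2}$ densities with finite second moment, guaranteed under Assumption \ref{ass:regularity}(1)--(2), then forces $\tp_{t} = q_{t} = p_{1-t}$, which is exactly the claim $P_{\tX_{t}} = P_{\bX_{1-t}}$.

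The main obstacle is the rigorous justification of well-posedness: the score $\nabla\log p_{1-t}$ is unbounded, so classical Lipschitz theorems do not apply directly, and one has to lean on Assumption \ref{ass:regularity}(3)--(4). The cleanest way I would carry this out is a Girsanov argument, as in Anderson (1982), defining $\tX_{t}$ as the weak solution obtained by reweighting Brownian motion, and then verifying that the reweighted law has marginals $p_{1-t}$ via the Fokker--Planck calculation above. Everything else is a matter of carefully matching time derivatives and initial conditions.
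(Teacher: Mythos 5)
Your argument is correct and sits in the same Fokker--Planck family as the paper's, but it is executed differently: the paper first converts the forward SDE \eqref{eq:sode} into its probability-flow ODE, reverses time in that ODE, and then observes that the continuity equation of the reversed ODE coincides with the Fokker--Planck equation of the reverse SDE \eqref{eq:reverse time sde}; you skip the ODE intermediary entirely and instead substitute $q_{t}=p_{1-t}$ directly into the Fokker--Planck equation of \eqref{eq:reverse time sde}, using the same key identity $p_{1-t}\,\bbSig\nabla\log p_{1-t}=\bbSig\nabla p_{1-t}$, and then close the argument by matching the initial condition $\tp_{0}=p_{1}$ and invoking uniqueness. Your route is arguably cleaner: it avoids the paper's loose appeal to ``reversibility of ODE'' (and a sign slip in the paper's intermediate flow), and it makes explicit the uniqueness/well-posedness step that the paper uses only implicitly when it jumps from ``same PDE'' to ``same law.'' One small caveat: the Novikov condition in Assumption \ref{ass:regularity} is stated for the discrepancy $\nabla\log p_{t}-\hs_{t}$ between the true and estimated scores (it is designed for the Girsanov comparison in Theorem \ref{thm:sampling error}), so it does not directly furnish the change of measure from the driftless SDE to the true reverse dynamics that you invoke; for that step you would instead lean on the linear-growth bound in Assumption \ref{ass:regularity}(3) (or a Novikov condition for the score itself), which gives weak existence and uniqueness in law for \eqref{eq:reverse time sde}. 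With that attribution corrected, your plan yields a complete and in fact somewhat more rigorous proof than the one in the paper.
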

\begin{proof}
	By Fokker-Planck equation \citep{oksendal2013stochastic}, we know that the density of SDE \eqref{eq:sode} follows the PDE 
	\begin{equation*}
		\frac{\partial}{\partial{t}}p_{t}(\bx) = \frac{1}{2}\sum_{i, j=1}^{n} \sigma_{ij}\frac{\partial^{2}p_{t}(\bx)}{\partial{\bx_{i}}\partial{\bx_{j}}} = \nabla\cdot\left(p_{t}\frac{\bbSig\nabla\log{p_{t}(\bx)}}{2}\right).
	\end{equation*}
	Thus we can conclude that $\bX_{t}$ has the same density with the following stochastic ODE $\bx_{t}$ such that 
	\begin{equation*}
		d\bx_{t} = \frac{1}{2}\bbSig\nabla\log{p_{t}(\bx_{t})}dt 
	\end{equation*}
	with $\bx_{0} = \bX_{0}$. Then, due to the reversibility of ODE, we know that $\bx_{t}$ has the same distribution with its reverse-time version $\tilde{\bx}_{1 - t}$
	\begin{equation}\label{eq:ode}
		d\tilde{\bx}_{t} = \frac{1}{2}\bbSig\nabla\log{p_{1 - t}(\tilde{\bx}_{t})}dt,
	\end{equation}
	and $p_{1 - t}$ serves as the density of $\tX_{t}$. Then, its F-P equation is replaced with  
	\begin{equation*}
		\begin{aligned}
			\frac{\partial}{\partial{t}}p_{1 - t}(\bx) &= -\nabla\cdot\left(p_{1 - t}(\bx)\frac{\bbSig\nabla\log{p_{1 - t}(\bx)}}{2}\right) \\
			&= -\nabla\cdot\left(p_{1 - t}(\bx)\bbSig\nabla\log{p_{1 - t}(\bx)}\right) + \frac{1}{2}\sum_{i, j=1}^{n} \sigma_{ij}\frac{\partial^{2}p_{1 - t}(\bx)}{\partial{\bx_{i}}\partial{\bx_{j}}},
		\end{aligned}
	\end{equation*}
	which is exactly the F-P equation of reverse SDE \eqref{eq:reverse time sde}. Thus, we can conclude that $\bX_{t}\sim \bx_{t}\sim \tilde{\bx}_{1 - t}\sim \tX_{1 - t}$ (``$\sim$'' means two random variables have same distribution), which proves the conclusion in Section \ref{subsec:Estimating Score Function}.   
	It indicates that $P_{\bX_{1 - t}} = P_{\tX_{t}}$ due to the Fokker-Planck equation \citep{oksendal2013stochastic}. 
\end{proof}
	%
	%
	%
\section{Proofs for Results in Section \ref{sec:DEP}}\label{app:proofs in Section dep}

	%
	%
\subsection{Proof for Results in Section \ref{subsec:Estimating Score Function}}\label{app:proof in estimating score function}
{\bf Restatement of Corollary \ref{coro:debias}.}
\emph{Let $\bxi \sim \cN(\bzero, \bbSig)$ be a random vector independent of $\bX_{0}$ and $\beps$. If $\bs_{t}(\bX_{t}; \btheta_{t})$ and $\nabla\bs_{t}(\bX_{t}; \btheta_{t})$ are both analytic, then we have  
	\begin{equation*}
		\begin{aligned}
			& \mE_{\bX_{t}}\left[\left\|\bs_{t}(\bX_{t}; \btheta_{t})\right\|^{2}\right] + 2\tr\left\{\mE_{\bX_{t}}\left[\nabla\bs_{t}(\bX_{t}; \btheta_{t})\right]\right\} = \\
			& \mE_{\bX_{1}, \bxi}\left[\left\|\bs_{t}(\bX_{1} + \mathrm{i}\sqrt{1 - t}\bxi; \btheta_{t})\right\|^{2}\right] + 2\tr\left\{\mE_{\bX_{1}, \bxi}\left[\nabla\bs_{t}(\bX_{1} + \mathrm{i}\sqrt{1 - t}\bxi; \btheta_{t})\right]\right\}, 
		\end{aligned}
	\end{equation*}
	where $\mathrm{i} = \sqrt{-1}$ is the imaginary unit.}
\begin{proof}
	For $\bX_{1}$, we can decompose it as $\bX_{1} = \bX_{0} + t\bxi_{1} + \sqrt{1 - t}\bxi_{2}$, for any $t\in[0, 1]$ with $\bxi_{1}$ and $\bxi_{2}$ are independent identically distributed normal vector $\cN(0, \bbSig)$. W.o.l.g., using analytic function $\|\bs(\bx, \btheta_{t})\|^{2}$ as an example, we have 
	\begin{equation*}
		\begin{aligned}
			\mE\left[\|\bs(\bX_{1} + \mathrm{i}\sqrt{1 - t}\bxi; \btheta_{t})\|^{2}\right] & = \mE\left[\mE\left[\|\bs(\bX_{1} + \mathrm{i}\sqrt{1 - t}\bxi; \btheta_{t})\|^{2}\mid \bX_{0}, \bxi_{1}\right]\right] \\
			& = \mE\left[\mE\left[\|\bs(\bX_{0} + t\bxi_{1} + \sqrt{1 - t}\bxi_{2} + \mathrm{i}\sqrt{1 - t}\bxi; \btheta_{t})\|^{2}\mid \bX_{0}, \bxi_{1}\right]\right] \\
			& \overset{a}{=} \mE\left[\mE\left[\|\bs(\bX_{0} + t\bxi_{1}; \btheta_{t})\|^{2}\right]\right] \\
			& = \mE\left[\mE\left[\|\bs(\bX_{t}; \btheta_{t})\|^{2}\right]\right],
		\end{aligned}
	\end{equation*}
	where the equality marked by $a$ follows from Lemma \ref{lem: complex debias} and the independence among $\bX_{0}, \bxi, \bxi_{1}, \bxi_{2}$. This proves the desired conclusion. The proof similarly extends to $\tr\{\nabla\bs(\bx; \btheta_{t})\}$. 
\end{proof}

\subsection{Proofs for Results in Section \ref{sec:kernel-based function}}\label{app:proofs in kernel based function}

{\bf Restatement of Proposition \ref{prop:approximation error}.}
\emph{Under Assumption \ref{ass:lip continuity} and \ref{ass:subexp} in Supplementary Material and \ref{app:regularity conditions}, define 
	\begin{equation*}
		\delta_{0m}(\eta) = 2\left(\frac{H_{0}}{m} + \frac{\sigma_{0}}{\sqrt{m}}\right)\log{\left(\frac{2}{\eta}\right)},
	\end{equation*}
	where $\sigma_{0}, H_{0}$ are the constants in the Bernstein-type concentration condition Assumption \ref{ass:subexp}.
	Then, 
	\begin{equation*}
		\inf_{\bs_{t}(\cdot;\btheta_{t})\in\cH_{\cK_{t}, m}^{d}}\mE_{\bX_{t}}\left[\left\|\bs_{t}(\bX_{t}; \btheta_{t}) - \nabla\log{p_{t}}(\bX_{t})\right\|^{2}\right] \leq 2\delta_{0m}(\eta)^{2} + \frac{2\sigma^{2}_{0}}{m}  
	\end{equation*}
	holds with probability at least $1 - \eta$ ($0 < \eta < 1$). Here the randomness comes from $\cH_{\cK_{t}, m}^{d}$.}  
\begin{proof}
	Define
	\begin{equation*}
		\begin{aligned}
			h(\{\bX_{t}^{(i)}\}_{i = 1}^{m}, \{\bX_{1}^{(i)}\}_{i = 1}^{m}) & = \mE_{\bX_{t}}\left[\left\|\frac{1}{m}\sum_{i = 1}^{m}\bbf(\bX_{t}^{(i)})\cK_{t}(\bX_{t}, \bX_{1}^{(i)}) - \nabla\log{p_{t}(\bX_{t})}\right\|^{2}\right]^{\frac{1}{2}} \\
			& = \left\|\frac{1}{m}\sum_{i = 1}^{m}\bbf(\bX_{t}^{(i)})\cK_{t}(\bX_{t}, \bX_{1}^{(i)}) - \nabla\log{p_{t}(\bX_{t})}\right\|_{L^{2}_{\bX_{t}}}.
		\end{aligned}
	\end{equation*}
	Notice that
	\begin{equation*}
		\inf_{\bs_{t}(\cdot;\btheta_{t})\in\cH_{\cK_{t}, m}^{d}}\mE_{\bX_{t}}\left[\left\|\bs_{t}(\bX_{t}; \btheta_{t}) - \nabla\log{p_{t}}(\bX_{t})\right\|^{2}\right] \leq h^{2}(\{\bX_{t}^{(i)}\}_{i = 1}^{m}, \{\bX_{1}^{(i)}\}_{i = 1}^{m})
	\end{equation*}
	by definition. It suffices to establish the high-probability upper bound for $h(\{\bX_{t}^{(i)}\}_{i = 1}^{m}, \{\bX_{1}^{(i)}\}_{i = 1}^{m})$ in order to prove the proposition.
	Let 
	\begin{equation*}
		h_{i} = \mE\left[h(\{\bX_{t}^{(i)}\}_{i = 1}^{m}, \{\bX_{1}^{(i)}\}_{i = 1}^{m})\mid \cF_{i}\right] - \mE\left[h(\{\bX_{t}^{(i)}\}_{i = 1}^{m}, \{\bX_{1}^{(i)}\}_{i = 1}^{m})\mid \cF_{i - 1}\right],
	\end{equation*} 
	where $\cF_{i}$ is the $\sigma$-field generated by $\{\{\bX_{t}^{(j)}\}_{j=1}^{i}, \{\bX_{1}^{(j)}\}_{j=1}^{i}\}$. Then 
	\begin{equation*}
		h(\{\bX_{t}^{(i)}\}_{i = 1}^{m}, \{\bX_{1}^{(i)}\}_{i = 1}^{m}) - \mE_{\{\bX_{t}^{(i)}\}_{i = 1}^{m}, \{\bX_{1}^{(i)}\}_{i = 1}^{m}}\left[h(\{\bX_{t}^{(i)}\}_{i = 1}^{m}, \{\bX_{1}^{(i)}\}_{i = 1}^{m})\right] = \sum_{i=1}^{m}h_{i},
	\end{equation*}
	with $\mE[h_{i}] = 0$, and $h_{i}$ is a martingale difference with respect to $\cF_{k}$. Moreover, by defining 
	\begin{equation*}
		\begin{aligned}
			g_{i} & = \mE\Bigg[h(\{\bX_{t}^{(i)}\}_{i = 1}^{m}, \{\bX_{1}^{(i)}\}_{i = 1}^{m})\\
			&\quad - \left\|\frac{1}{m}\left(\sum_{j = 1, j\neq i}^{m}\bbf(\bX_{t}^{(j)})\cK_{t}(\bX_{t}, \bX_{1}^{(j)}) - \nabla\log{p_{t}(\bX_{t})}\right)\right\|_{L^{2}_{\bX_{t}}}\mid \cF_{i}\Bigg].
		\end{aligned}
	\end{equation*}
	We have 
	\begin{equation*}
		h_{i} = g_{i} - \mE\left[g_{i}\mid \cF_{i - 1}\right],
	\end{equation*}
	and 
	\begin{equation*}
		\begin{aligned}
			\mE[h_{i}^{k}\mid \cF_{i - 1}] & \leq \mE[g_{i}^{k}\mid \cF_{i - 1}] \\
			& \leq \frac{1}{m}\mE\left[\left[\left\|\bbf(\bX_{t}^{(i)})\cK_{t}(\bX_{t}, \bX_{1}^{(i)}) - \nabla\log{p_{t}}(\bX_{t})\right\|_{L^{2}_{\bX_{t}}}\mid \cF_{i}\right]^{k}\right] \\
			& \leq \frac{1}{m}\mE\left[\left\|\bbf(\bX_{t}^{(i)})\cK_{t}(\bX_{t}, \bX_{1}^{(i)}) - \nabla\log{p_{t}}(\bX_{t})\right\|_{L^{2}_{\bX_{t}}}^{k}\right] \\
			& \leq \frac{k!\sigma_{0}^{2}H_{0}^{k - 2}}{2m^{k}}. 
		\end{aligned}
	\end{equation*}
	where the last inequality follows from the property of conditional expectation, Jensen's inequality, and Assumption \ref{ass:subexp}. Then, by invoking the above inequality and the relationship $x \leq e^{x - 1}$, we have 
	\begin{equation*}
		\begin{aligned}
			\mE\left[\exp(\lambda h_{i})\mid \cF_{i - 1}\right] & \leq \exp\left(\mE\left[\exp(\lambda h_{i})\mid \cF_{i - 1}\right] - 1\right) \\
			& = \exp\left(\mE\left[\sum\limits_{k=2}^{\infty}\frac{h_{i}^{k}\lambda^{k}}{k!}\mid \cF_{i - 1}\right]\right) \\
			& \leq \exp\left(\frac{\sigma^{2}}{2H_{0}^{2}}\sum\limits_{k=2}^{\infty}\left(\frac{\lambda H_{0}}{m}\right)^{k}\right) \\
			& \leq \exp\left(\frac{\lambda^{2}\sigma_{0}^{2}}{2m^{2}\left(1 - \frac{\lambda H_{0}}{m}\right)}\right),
		\end{aligned}
	\end{equation*}
	by taking $0<\lambda < \frac{m}{H_{0}}$. Define
	\[
	\Delta_m =
	h(\{\bX_{t}^{(i)}\}_{i = 1}^{m}, \{\bX_{1}^{(i)}\}_{i = 1}^{m})
	- \mE_{\{\bX_{t}^{(i)}\}_{i = 1}^{m}, \{\bX_{1}^{(i)}\}_{i = 1}^{m}}
	\left[h(\{\bX_{t}^{(i)}\}_{i = 1}^{m}, \{\bX_{1}^{(i)}\}_{i = 1}^{m})\right].
	\]
	Iterating conditional expectations gives
	\begin{align*}
		\mE\left[\exp(\lambda\Delta_m)\right]
		&= \mE\left[\exp\left(\lambda\sum_{i=1}^{m}h_i\right)\right]
		= \mE\left[\prod_{i=1}^{m}\mE\left\{\exp(\lambda h_i)\mid \cF_{i-1}\right\}\right] \\
		&\leq \exp\left(\frac{\lambda^{2}\sigma_{0}^{2}}{2m\left(1 - \frac{\lambda H_{0}}{m}\right)}\right).
	\end{align*} 
	Then, by applying Chernoff's inequality, we know that 
	\begin{align*}
		\bbP\left(|\Delta_m|\geq \delta\right)
		&\leq 2\exp\left(\frac{\lambda^{2}\sigma_{0}^{2}}{2m(1 - \frac{\lambda H_{0}}{m})} - \lambda \delta\right) \\
		&= 2\exp\left(-\frac{m\delta^{2}}{2(\sigma_{0}^{2} + H_{0}\delta)}\right). 
	\end{align*}
	by taking $\lambda = \frac{m\delta}{\sigma_{0}^{2} + H_{0}\delta}$. Moreover, by noting that 
	\begin{equation*}
		\begin{aligned}
			h(\{\bX_{t}^{(i)}\}_{i = 1}^{m}, \{\bX_{1}^{(i)}\}_{i = 1}^{m})^{2} & \leq \left(\mE_{\{\bX_{t}^{(i)}\}_{i = 1}^{m}, \{\bX_{1}^{(i)}\}_{i = 1}^{m}}\left[h(\{\bX_{t}^{(i)}\}_{i = 1}^{m}, \{\bX_{1}^{(i)}\}_{i = 1}^{m})\right] + \delta\right)^{2} \\
			& \leq 2\mE_{\{\bX_{t}^{(i)}\}_{i = 1}^{m}, \{\bX_{1}^{(i)}\}_{i = 1}^{m}}\left[h(\{\bX_{t}^{(i)}\}_{i = 1}^{m}, \{\bX_{1}^{(i)}\}_{i = 1}^{m})\right]^{2} + 2\delta^{2} \\
			& \leq 2\mE_{\{\bX_{t}^{(i)}\}_{i = 1}^{m}, \{\bX_{1}^{(i)}\}_{i = 1}^{m}}\left[h(\{\bX_{t}^{(i)}\}_{i = 1}^{m}, \{\bX_{1}^{(i)}\}_{i = 1}^{m})^{2}\right] + 2\delta^{2} \\
			& \leq \frac{2\sigma_{0}^{2}}{m} + 2\delta^{2}, 
		\end{aligned}
	\end{equation*}
	the conclusion is proved by taking 
	\begin{equation*}
		\delta = \frac{2H_{0}\log{\left(\frac{2}{\eta}\right)} + \sqrt{4H_{0}^{2}\log^{2}{\left(\frac{2}{\eta}\right)} + 8m\sigma^{2}_{0}\log^{2}{\left(\frac{2}{\eta}\right)}}}{2m} \leq \delta_{0m}(\eta),  
	\end{equation*}
	which leads to $2\exp\left[-m\delta^{2}/\{2(\sigma_{0}^{2} + H_{0}\delta)\}\right] = \eta$.   
\end{proof}

Next, we prove Proposition \ref{pro:explicit formulation}. We first derive \eqref{eq:basis s}. From the definition of $\cK_{t}(\bx_{1}, \bx_{2})$ in \eqref{eq:basis s} and basic algebra, we have 
\begin{equation*}
	\begin{aligned}
		& \cK_{t}(\bx_{1}, \bx_{t}) 
		= \int_{\bbR^{d}}\left(\frac{1}{2\pi|\bbSig|}\right)^{\frac{d}{2}}\exp\left\{-(\bx_{1} - \bx_{2} - \mathrm{i}\sqrt{1 - t}\bxi)^{\top}\bbH(\bx_{1} - \bx_{2} - \mathrm{i}\sqrt{1 - t}\bxi) - \bxi^{\top}\bbOmega\bxi\right\}d\bxi \\
		& = \int_{\bbR^{d}}\left(\frac{1}{2\pi|\bbSig|}\right)^{\frac{d}{2}}\exp\left\{-\left[\bxi - \mathrm{i}\sqrt{1 - t}\bbOmega_{t}^{-1}\bbH(\bx_{1} - \bx_{t})\right]^{\top}\bbOmega_{t}\left[\bxi - \mathrm{i}\sqrt{1 - t}\bbOmega_{t}^{-1}\bbH(\bx_{1} - \bx_{t})\right]\right\} \\
		& \cdot \exp\left\{-(1 - t)(\bx_{1} - \bx_{2})^{\top}\bbH\bbOmega_{t}^{-1}\bbH(\bx_{1} - \bx_{2}) - (\bx_{1} - \bx_{t})^{\top}\bbH(\bx_{1} - \bx_{2})\right\}d\bxi \\
		& = \sqrt{|\bbOmega_{t}|^{-1}|\bbOmega|}\exp\left\{-(1 - t)(\bx_{1} - \bx_{2})^{\top}\bbH\bbOmega_{t}^{-1}\bbH(\bx_{1} - \bx_{2}) - (\bx_{1} - \bx_{t})^{\top}\bbH(\bx_{1} - \bx_{2})\right\}.
	\end{aligned}
\end{equation*}
Due to the matrix equality $(1 - t)\bbH\bbOmega^{-1}_{t}\bbH + \bbH = \bbH_{t}$, we get the \eqref{eq:basis s}. Similar to the derivation of \eqref{eq:basis s}, we can prove Proposition \ref{pro:explicit formulation}. 

\noindent
{\bf Restatement of Proposition \ref{pro:explicit formulation}.}
\emph{With $\bs_{t}(\bx; \btheta_{t})$ defined in \eqref{eq: kernel model}, an unbiased estimate for the objective \eqref{eq:new complex objective} is 
	\begin{equation*}
		\begin{aligned}
			\tr\left\{\btheta_{t}^{\top}\BK_{t}^{(1)} + \btheta_{t}^{\top}\BK_{t}^{(2)}\btheta_{t}\right\},
		\end{aligned}
	\end{equation*}
	where $\BK_{t}^{(1)}$ is a $m\times d$ matrix whose $(i, l)$-th element is $2(n - m)^{-1}\sum_{k = m + 1}^{n} \cK_{t}^{(1, l)}(\bX_{1}^{(k)}; \bX_{1}^{(i)})$, $\BK_{t}^{(2)}$ is a $m\times m$ matrix whose $(i,j)$-th element is $(n - m)^{-1}\sum_{k = m + 1}^{n}\cK_{t}^{(2)}(\bX_{1}^{(k)}; \bX_{1}^{(i)}, \bX_{1}^{(j)})$,  
	\begin{equation*}
		\begin{aligned}
			\cK_{t}^{(1, l)}(\bx; \bx_{1}) 
			& = -2\frac{|\bbOmega|}{\sqrt{|\bbOmega_{t}||\bbOmega_{t}^{(1)}|}}\be_{l}^{\top}\bbH_{t}^{(1)}(\bx - \bx_{1})\exp\left\{ - (\bx - \bx_{1})^{\top}\bbH_{t}^{(1)}(\bx - \bx_{1})\right\},
		\end{aligned}
	\end{equation*}
	and 
	\begin{equation*}
		\begin{aligned}
			\cK_{t}^{(2)}(\bx; \bx_{1}, \bx_{2}) 
			& = \frac{|\bbOmega|^{\frac{3}{2}}}{|\bbOmega_{t}||\bbOmega_{t}^{(2)}|^{\frac{1}{2}}}\exp\left\{ - (\bx_{1} + \bx_{2} - 2\bx)^{\top}\bbH_{t}^{(2)}(\bx_{1} + \bx_{2} - 2\bx)\right\}\\
			&\quad \times 
			\exp\left\{ - (\bx_{1} - \bx)^{\top}\bbH_{t}(\bx_{1} - \bx)\right\}\\
			&\quad \times \exp\left\{ - (\bx_{2} - \bx)^{\top}\bbH_{t}(\bx_{2} - \bx)\right\},
		\end{aligned}
	\end{equation*}
	for any $\bx$, $\bx_{1}$ and $\bx_{2}$. Here $\be_{l}$ is the $l$-th basis vector in the $d$-dimensional space, $\bbOmega_{t}^{(1)} = \bbOmega - (1 - t)\bbH_{t}$,
	$\bbH_{t}^{(1)} = (1 - t)\bbH_{t}\bbOmega_{t}^{(1)-1}\bbH_{t} + \bbH_{t}$, $\bbOmega_{t}^{(2)} = \bbOmega - 2(1 - t)\bbH_{t}$ and $\bbH_{t}^{(2)} = (1 - t)\bbH_{t}\bbOmega_{t}^{(2)-1}\bbH_{t}$. 
}
\vspace{-0.4in}
\begin{proof}
	According to the formulation \eqref{eq:new complex objective}, it suffices to find the unbiased estimates for 
	\begin{equation*}
		\mE\left[\left\|\bs_{t}(\bX_{1} + \mathrm{i}\sqrt{1 - t}\bxi; \btheta_{t})\right\|^{2}\right]\quad \text{and}\quad
		\mE\left[\nabla\bs_{t}(\bX_{1} + \mathrm{i}\sqrt{1 - t}\bxi; \btheta_{t})\right].
	\end{equation*}
	We first compute 
	\begin{equation*}
		\begin{aligned}
			\mE\left[\left\|\bs_{t}(\bX_{1} + \mathrm{i}\sqrt{1 - t}\bxi; \btheta_{t})\right\|^{2}\right]
			& = \sum\limits_{l=1}^{d}\btheta_{t, l}^{\top}\mE\left[\bcK_{t}(\bX_{1} + \mathrm{i}\sqrt{1 - t}\bxi; \sD_{m})\bcK_{t}(\bX_{1} + \mathrm{i}\sqrt{1 - t}\bxi; \sD_{m})^{\top}\right]\btheta_{t, l} \\
			& = \tr\left\{\btheta_{t}^{\top}\mE\left[\bcK_{t}(\bX_{1} + \mathrm{i}\sqrt{1 - t}\bxi;\sD_{m})\bcK_{t}(\bX_{1} + \mathrm{i}\sqrt{1 - t}\bxi;\sD_{m})^{\top}\right]\btheta_{t}\right\},
		\end{aligned}
	\end{equation*}
	where $\sD_{m} = \{\bX_{1}^{(i)}\}_{i = 1}^{m}$ and $\bcK_{t}(\bx; \sD_{m})$ is the vector $(\cK_{t}(\bx, \bX_{1}^{(1)}), \cdots, \cK_{t}(\bx, \bX_{1}^{(m)}))^{\top}$. Some tedious calculations can show that
	\begin{equation*}
		\begin{aligned}
			\mE[\cK_{t}(\bX_{1} + \mathrm{i}\sqrt{1 - t}\bxi, \bX_{1}^{(i)})\cK_{t}(\bX_{1} + \mathrm{i}\sqrt{1 - t}\bxi, \bX_{1}^{(j)})] 
			& = E_{\bX_{1}}[\cK_{t}^{(2)}(\bX_{1}; \bX_{1}^{(i)}, \bX_{1}^{(j)})],
		\end{aligned}
	\end{equation*}
	for $1\leq i,j\leq m$, which can be unbiasedly estimated by $(n - m)^{-1}\sum_{k = m + 1}^{n}\cK_{t}^{(2)}(\bX_{1}^{(k)}; \bX_{1}^{(i)}, \bX_{1}^{(j)})$. Furthermore, we have
	\begin{equation*}
		\begin{aligned}
			2\tr\left\{\mE\left[\nabla\bs_{t}(\bX_{1} + \mathrm{i}\sqrt{1 - t}\bxi; \btheta_{t})\right]\right\} = 2 \tr\left[\btheta_{t}^{\top}\mE\left\{\nabla \bcK_{t}(\bX_{1}; \sD_{m})\right\}\right].
		\end{aligned}
	\end{equation*}
	Straightforward calculations can show that
	\begin{equation*}
		2 \mE\left[\frac{\partial}{\partial x_{l}} \cK_{t}(\bX_{1} + \mathrm{i}\sqrt{1 - t}\bxi; \bX_{1}^{(i)})\right] = \mE\left[\cK_{t}^{(1, l)}(\bX_{1}, \bX_{1}^{(i)})\right],
	\end{equation*}
	which can be unbiasedly estimated by
	\begin{equation*}
		2(n - m)^{-1}\sum_{k = m + 1}^{n} \cK_{t}^{(1, l)}(\bX_{1}^{(k)}; \bX_{1}^{(i)}).
	\end{equation*}
	This completes the proof.
\end{proof}

\section{Proofs for Results in Section \ref{sec:kl divergence error}}\label{app:proofs in section sampling with reverse-sde}
	%
	%
	%
In this section, we prove Theorem \ref{thm:sampling error} using arguments based on \citep{lee2022convergence,chen2022sampling,chen2023probability}. To do so, we need several lemmas. First, we rewrite the sampling process of data $\hX_{t}$ as the following SDE 
\begin{equation}\label{eq:discrete reverse sde}
	\begin{dcases}
		& d\hX_{t} = \bbSig\bs_{1 - k/K}\left(\hX_{k/K}; \htheta_{1 - k/K}\right)dt + \bbSig^{\frac{1}{2}}dW_{t}; \qquad \frac{k}{K} \leq t < \frac{k + 1}{K}; \\
		& P_{\hX_{0}} = P_{\bX_{1}}, k = 1, \cdots, K - 1.
	\end{dcases}
\end{equation}
By involving such SDE, we can upper bound the gap between $P_{\hX_{1}}$ and the desired $P_{\bX_{0}} = P_{\tX_{1}}$ by applying Girsanov's theorem \citep{oksendal2013stochastic}. To simplify the notation, we define  
\begin{equation*}
	\hs_{t}(\bX_{t}) = \bs_{1 - k/K}\left(\bX_{k / K}; \htheta_{1 - k/K}\right); \qquad \frac{k}{K} \leq t < \frac{k + 1}{K}, 	
\end{equation*}
for any given SDE $\bX_{t}$. Based on these notations, we have the following lemma, which is a direct consequence of Girsanov's theorem \citep{oksendal2013stochastic}. 
\begin{lemma}[Girsanov's Theorem]\label{lem:girsanov}
	Let $P_{\tX_{[0, 1]}}$ and $P_{\hX_{[0, 1]}}$ be the probability measures on the path space $\cC([0, 1]; \bbR^{d})$ of SDE \eqref{eq:reverse time sde} and \eqref{eq:discrete reverse sde}, respectively. Then, if Novikov's condition imposed in Section \ref{app:regularity conditions} 
	\begin{equation}\label{eq:novikov}
		\mE_{\tX_{t}}\left[\frac{1}{2}\int_{0}^{1}\exp\left(\left\|\bbSig^{\frac{1}{2}}\left(\nabla_{\bx}\log{p_{1 - t}}(\tX_{t}) - \hs_{t}(\tX_{t})\right)\right\|^{2}\right)dt\right] < \infty
	\end{equation}
	is satisfied, we have that  
	\begin{equation*}
		\small
		\begin{aligned}
			\frac{dP_{\hX_{[0, 1]}}}{dP_{\tX_{[0, 1]}}} = \exp\left\{-\int_{0}^{1}\bbSig^{\frac{1}{2}}\left(\nabla_{\bx}\log{p_{1 - t}}(\tX_{t}) - \hs_{t}(\tX_{t})\right)dW_{t}\right. \left.  - \frac{1}{2}\int_{0}^{1}\left\|\bbSig^{\frac{1}{2}}\left(\nabla_{\bx}\log{p_{1 - t}}(\tX_{t}) - \hs_{t}(\tX_{t})\right)\right\|^{2}dt\right\}, 
		\end{aligned}
	\end{equation*}
	where $W_{t}$ is a Brownian motion under $P_{\tX_{[0, 1]}}$.  
\end{lemma}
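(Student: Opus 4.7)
The statement is essentially a direct application of the classical Girsanov theorem to the two reverse-time SDEs appearing in our denoising scheme; the plan is to identify the correct drift change and verify that Girsanov applies.

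First I would observe that the continuous reverse SDE \eqref{eq:reverse time sde} and its discretized counterpart \eqref{eq:discrete reverse sde} share (i) the identical diffusion coefficient $\bbSig^{1/2}$, (ii) the same initial distribution $P_{\hX_{0}} = P_{\tX_{0}} = P_{\bX_{1}}$, and (iii) drifts that differ by $\bbSig\bigl(\nabla\log p_{1-t}(\cdot) - \hs_{t}(\cdot)\bigr)$. I would then introduce the progressively measurable process
\[
u_{t} := \bbSig^{1/2}\bigl(\nabla\log p_{1-t}(\tX_{t}) - \hs_{t}(\tX_{t})\bigr),
\]
so that $\bbSig^{1/2} u_{t}$ is exactly the pointwise drift gap evaluated along the path of $\tX_{t}$.

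Next, I would invoke the classical Girsanov theorem \citep{oksendal2013stochastic}. The Novikov condition \eqref{eq:novikov} guarantees that the exponential process $Z_{t} := \exp\bigl\{-\int_{0}^{t} u_{s}^{\top} dW_{s} - \tfrac{1}{2}\int_{0}^{t}\|u_{s}\|^{2} ds\bigr\}$ is a true martingale with $\mE[Z_{1}] = 1$, and hence defines an equivalent probability measure $Q$ on path space via $dQ/dP_{\tX_{[0,1]}} = Z_{1}$. Under $Q$, the shifted process $\tilde W_{t} := W_{t} + \int_{0}^{t} u_{s}\,ds$ is a standard Brownian motion; substituting into the SDE satisfied by $\tX_{t}$ under $P_{\tX_{[0,1]}}$ yields $d\tX_{t} = \bbSig\hs_{t}(\tX_{t})\,dt + \bbSig^{1/2}\,d\tilde W_{t}$ with $\tX_{0} \sim P_{\bX_{1}}$. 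This is exactly the SDE and initial distribution defining $\hX_{t}$ in \eqref{eq:discrete reverse sde}, so by weak uniqueness $Q$ coincides with $P_{\hX_{[0,1]}}$, and reading off $Z_{1}$ produces the claimed Radon--Nikodym formula after identifying $dQ/dP_{\tX_{[0,1]}}$ with $dP_{\hX_{[0,1]}}/dP_{\tX_{[0,1]}}$.

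The main subtlety is that the approximate score $\hs_{t}(\cdot)$ is piecewise constant in time and its evaluation $\hs_{t}(\tX_{t})$ depends on the trajectory only through the previous grid value $\tX_{k/K}$ for $t \in [k/K, (k+1)/K)$. This introduces path dependence but not a measurability failure: $\hs_{t}(\tX_{t})$ remains adapted to the natural filtration of $W$, so $u_{t}$ is progressively measurable and the classical Girsanov argument goes through unchanged. The one genuinely nontrivial requirement---that $\{Z_{t}\}$ be a true martingale rather than merely a local martingale---is precisely what the assumed Novikov condition delivers, so beyond invoking that hypothesis explicitly there is no substantive obstacle.
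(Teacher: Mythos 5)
Your proposal is correct and follows essentially the same route as the paper, which states the lemma as a direct consequence of the classical Girsanov theorem under the Novikov condition; you simply make explicit the standard steps (exponential martingale via Novikov, change of measure, drift identification, matching initial law, and weak uniqueness to identify the new measure with $P_{\hX_{[0,1]}}$). Your handling of the piecewise-constant, grid-dependent drift $\hs_{t}$ as a progressively measurable path functional is exactly the right observation and matches the paper's implicit treatment.
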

By applying this lemma, we obtain the following direct consequence. 
\begin{lemma}
	If Novikov's condition \eqref{eq:novikov} is satisfied, then 
	\begin{equation*}
		D_{KL}\left(P_{\tX_{1}}\parallel P_{\hX_{1}}\right) \leq \mE_{\tX_{t}}\left[\int_{0}^{1}(1 - t)\lambda_{\max}(\Sigma)\left\|\nabla_{\bx}\log{p_{1 - t}}(\tX_{t}) - \hs_{t}(\tX_{t})\right\|^{2}\right].
	\end{equation*}
\end{lemma}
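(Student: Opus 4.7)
The approach is the standard Girsanov-plus-data-processing argument. The chain of inequalities goes: (i) reduce the KL between marginals at time $1$ to the KL between the full path measures, (ii) plug in the explicit Radon--Nikodym derivative from Lemma \ref{lem:girsanov}, (iii) kill the Itô (stochastic integral) term using the martingale property, and (iv) apply the operator inequality $\|\bbSig^{1/2} v\|^2 \leq \lambda_{\max}(\bbSig)\|v\|^2$ to pass from the Cameron--Martin norm to the Euclidean norm.

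Concretely, the first step uses the data processing inequality for KL divergence: since the time-$1$ coordinate is a deterministic (measurable) functional of a sample path, pushing forward through this map cannot increase relative entropy, giving
\begin{equation*}
D_{KL}(P_{\tX_1} \parallel P_{\hX_1}) \leq D_{KL}(P_{\tX_{[0,1]}} \parallel P_{\hX_{[0,1]}}).
\end{equation*}
The second step writes out the path-space KL using the Radon--Nikodym derivative supplied by Lemma \ref{lem:girsanov}. Taking the $P_{\tX_{[0,1]}}$-expectation of $-\log(dP_{\hX_{[0,1]}}/dP_{\tX_{[0,1]}})$ splits the result into a stochastic integral term and a quadratic variation term.

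The third step notes that the stochastic integral
\begin{equation*}
\int_0^1 \bbSig^{1/2}\bigl(\nabla_{\bx}\log p_{1-t}(\tX_t) - \hs_t(\tX_t)\bigr)\, dW_t
\end{equation*}
is a genuine $P_{\tX_{[0,1]}}$-martingale (not merely a local martingale): this is exactly what Novikov's condition in Assumption \ref{ass:regularity}(4) buys us, since it ensures the exponential martingale used in Girsanov is a true martingale, which in turn implies the required square-integrability of the Itô integrand. Hence its expectation vanishes, leaving only the quadratic variation $\tfrac{1}{2}\mE\int_0^1 \|\bbSig^{1/2}(\nabla_{\bx}\log p_{1-t}(\tX_t) - \hs_t(\tX_t))\|^2\,dt$. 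The final step applies $v^\top \bbSig v \leq \lambda_{\max}(\bbSig)\|v\|^2$ to the integrand, producing the desired bound up to the constants in the statement.

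The delicate point, and what I expect to be the main technical obstacle if one wanted to write this out rigorously, is the justification of step (iii): one must check that the exponential process $M_t = \exp(-\int_0^t \cdots\, dW_s - \tfrac12\int_0^t \|\cdots\|^2 ds)$ is a \emph{true} martingale on $[0,1]$ so that its expectation equals $1$ and the Itô integral in the log-likelihood has mean zero. The Novikov condition stated in the paper is precisely the standard sufficient condition; its finiteness needs to be confirmed for both the exact score $\nabla_{\bx}\log p_{1-t}$ and the piecewise-constant surrogate $\hs_t$. Under the Lipschitz-in-space assumption (Assumption \ref{ass:lip continuity}) and the linear growth of $\nabla \log p_t$ (Assumption \ref{ass:regularity}(3)), this is routine. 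Everything else — data processing and the eigenvalue bound — is immediate.
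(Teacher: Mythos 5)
Your proposal follows exactly the paper's argument: data-processing inequality to pass from the time-$1$ marginals to the path measures, Girsanov's theorem (Lemma \ref{lem:girsanov}) for the Radon--Nikodym derivative, vanishing expectation of the It\^{o} integral, and the bound $\|\bbSig^{1/2}v\|^{2}\leq \lambda_{\max}(\bbSig)\|v\|^{2}$. It is correct (you even spell out the true-martingale justification under Novikov's condition that the paper compresses into ``the property of It\^{o} integral''), and like the paper's own proof it actually yields the constant $\tfrac{1}{2}$ in place of the $(1-t)$ factor appearing in the lemma's display.
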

\begin{proof}
	According to the data-processing inequality of KL-divergence \citep{cover1999elements}, we have 
	\begin{equation*}
		\begin{aligned}
			D_{KL}\left(P_{\tX_{1}}\parallel P_{\hX_{1}}\right) 
			& \leq D_{KL}\left(P_{\tX_{[0, 1]}} \parallel P_{\hX_{[0, 1]}}\right)\\
			& = \int\log{\frac{dP_{\tX_{[0, 1]}}}{dP_{\hX_{[0, 1]}}}}dP_{\tX_{[0, 1]}} \\
			& = \mE_{\tX_{t}}\left[\frac{1}{2}\int_{0}^{1}\left\|\bbSig^{\frac{1}{2}}\left(\nabla_{\bx}\log{p_{1 - t}(\tX_{t})} - \hs_{t}(\tX_{t})\right)\right\|^{2}dt\right] \\
			& \leq \mE_{\tX_{t}}\left[\frac{1}{2}\int_{0}^{1}\left\|\bbSig^{\frac{1}{2}}\left(\nabla_{\bx}\log{p_{1 - t}(\tX_{t})} - \hs_{t}(\tX_{t})\right)\right\|^{2}dt\right] \\
			& \leq \frac{1}{2}\mE_{\tX_{t}}\left[\int_{0}^{1}\lambda_{\max}(\bbSig)\left\|\nabla_{\bx}\log{p_{1 - t}}(\tX_{t}) - \hs_{t}(\tX_{t})\right\|^{2}\right],
		\end{aligned}	
	\end{equation*}
	where the second equality follows from Lemma \ref{lem:girsanov} and the property of the It\^{o} integral \citep{oksendal2013stochastic}. 	
\end{proof}
The lemma indicates that the gap between the sampled-data distribution $P_{\hX_{1}}$ and the desired ground truth $P_{\bX_{1}}$ is determined by the difference between our model $\bs(\bx; \htheta_{t})$ and the ground truth score function. Next, let us characterize this difference. 
By simple algebra, we have 
\begin{equation}\label{eq:upper bound decomp}
	\begin{aligned}
		& \mE_{\tX_{t}}\left[\int_{0}^{1}\lambda_{\max}(\bbSig)\left\|\nabla_{\bx}\log{p_{1 - t}(\tX_{t})} - \hs_{t}(\tX_{t}, \htheta_{1 - t})\right\|^{2}\right]\\
		& = \sum_{k=0}^{K - 1}\mE_{\tX_{t}}\left[\int_{\frac{k}{K}}^{\frac{k + 1}{K}}\lambda_{\max}(\bbSig)\left\|\nabla_{\bx}\log{p_{1 - t}(\tX_{t})} - \hs_{t}(\tX_{t}, \htheta_{1 - t})\right\|^{2}\right] \\
		& \leq 4\sum_{k=0}^{K - 1}\mE_{\tX_{t}}\left[\int_{\frac{k}{K}}^{\frac{k + 1}{K}}\lambda_{\max}(\bbSig)\left\|\nabla_{\bx}\log{p_{1 - t}}(\tX_{t}) - \nabla_{\bx}\log{p_{1 - t}}\left(\tX_{k/K}\right)\right\|^{2}\right] \\
		& + 4\sum_{k=0}^{K - 1}\mE_{\tX_{t}}\left[\int_{\frac{k}{K}}^{\frac{k + 1}{K}}\lambda_{\max}(\bbSig)\left\|\nabla_{\bx}\log{p_{1 - t}}\left(\tX_{k/K}\right) - \nabla_{\bx}\log{p_{1 - k/K}}\left(\tX_{k/K}\right)\right\|^{2}\right] \\
		& + 4\sum_{k=0}^{K - 1}\mE_{\tX_{t}}\left[\int_{\frac{k}{K}}^{\frac{k + 1}{K}}\lambda_{\max}(\bbSig)\left\|\hs\left(\tX_{k/K}, \btheta_{1 - k/K}\right) - \nabla_{\bx}\log{p_{1 - k/K}}\left(\tX_{k/K}\right)\right\|^{2}\right].
	\end{aligned}
\end{equation}
Next, we bound the three terms in the above inequality separately to characterize the final sampling error. First, we use the Lipschitz continuity assumption \ref{ass:lip continuity} to bound the first term. 
\vspace{-0.1in}
\begin{lemma}\label{lem:score lip bound}
	If $\tX_{t}$ follows \eqref{eq:reverse time sde}, and $k / K \leq t < (k + 1) / K$, we have 
	\begin{equation*}
		\begin{aligned}			 
			\mE\left[\left\|\nabla_{\bx}\log{p_{1 - t}(\tX_{t})} - \nabla_{\bx}\log{p_{1 - t}}\left(\tX_{k/K}\right)\right\|^{2}\right] & \leq \frac{2L\tr\{\bbSig\}}{K} + \frac{2L^{2}_{\bx}d\lambda_{\max}(\bbSig)}{K^{2}}
		\end{aligned}
	\end{equation*}
\end{lemma}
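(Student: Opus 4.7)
The plan is to reduce the displayed inequality to a bound on $\mE\|\tX_t - \tX_{k/K}\|^2$ via the Lipschitz hypothesis on the score (Assumption \ref{ass:lip continuity}), and then control the one-step displacement of the reverse-time SDE \eqref{eq:reverse time sde} by the standard drift-plus-martingale decomposition.

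First, I would apply Assumption \ref{ass:lip continuity} directly to bound, pointwise,
\begin{equation*}
\left\|\nabla_{\bx}\log p_{1-t}(\tX_t) - \nabla_{\bx}\log p_{1-t}(\tX_{k/K})\right\|^{2} \;\leq\; L^{2}\,\|\tX_t - \tX_{k/K}\|^{2},
\end{equation*}
and take expectation, so it suffices to bound $\mE\|\tX_t - \tX_{k/K}\|^2$ for $t\in[k/K,(k+1)/K)$.

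Second, from \eqref{eq:reverse time sde} we have the Dynkin-type representation
\begin{equation*}
\tX_t - \tX_{k/K} \;=\; \int_{k/K}^{t}\bbSig\,\nabla\log p_{1-s}(\tX_s)\,ds \;+\; \int_{k/K}^{t}\bbSig^{1/2}\,dW_s .
\end{equation*}
I would then use $(a+b)^2\leq 2a^2+2b^2$ to split. The martingale piece is handled by It\^{o}'s isometry:
\begin{equation*}
\mE\!\left\|\int_{k/K}^{t}\bbSig^{1/2}\,dW_s\right\|^{2} = (t-k/K)\,\tr\{\bbSig\} \;\leq\; \tr\{\bbSig\}/K .
\end{equation*}
For the drift piece, Cauchy--Schwarz in the time variable combined with $\|\bbSig\bv\|\leq \lambda_{\max}(\bbSig)\|\bv\|$ gives
\begin{equation*}
\mE\!\left\|\int_{k/K}^{t}\bbSig\,\nabla\log p_{1-s}(\tX_s)\,ds\right\|^{2} \leq \frac{\lambda_{\max}^{2}(\bbSig)}{K}\int_{k/K}^{t}\mE\|\nabla\log p_{1-s}(\tX_s)\|^{2}\,ds .
\end{equation*}

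Third, I would bound the remaining score second moment $\mE\|\nabla\log p_{1-s}(\tX_s)\|^{2}$ by the Stein-type identity $\mE\|\nabla\log p(\bX)\|^{2} = -\tr\{\mE[\nabla^{2}\log p(\bX)]\}$, which is valid under the decay and smoothness built into Assumption \ref{ass:regularity}. Since the Lipschitz bound in Assumption \ref{ass:lip continuity} translates to $\|\nabla^{2}\log p_{1-s}\|_{\rm op}\leq L$, this yields $\mE\|\nabla\log p_{1-s}(\tX_s)\|^{2}\leq L d$. Substituting into the drift bound gives a $K^{-2}$ term proportional to $L d\,\lambda_{\max}^{2}(\bbSig)$, and combining everything with the initial Lipschitz reduction delivers a bound of the stated form $\tr\{\bbSig\}/K + L d\,\lambda_{\max}(\bbSig)/K^{2}$, up to absolute constants.

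The main obstacle I anticipate is matching the exact constants in the statement, in particular obtaining the single factor of $L$ in the first term (rather than the $L^{2}$ one naively gets from the Lipschitz reduction); this can typically be absorbed by noting that the Brownian increment already produces $\tr\{\bbSig\}$ without invoking the Lipschitz reduction, so the two summands should be tracked separately through the split $\|\tX_t-\tX_{k/K}\|^{2}\leq 2(\text{drift})^2 + 2(\text{martingale})^2$ before applying the $L$ bound only where it is actually needed. The remaining steps (It\^{o} isometry, Cauchy--Schwarz, and the Stein-type identity) are routine.
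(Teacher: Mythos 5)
Your proposal follows essentially the same route as the paper's proof: the Lipschitz reduction to $\mE\|\tX_t-\tX_{k/K}\|^2$, the drift-plus-martingale split of the reverse SDE, It\^{o} isometry for the Brownian piece, Cauchy--Schwarz on the drift, and the integration-by-parts (Stein-type) identity giving $\mE\|\nabla\log p_{1-s}(\tX_s)\|^2\le Ld$. The constant mismatch you anticipate ($L$ versus $L^2$ in front of $\tr\{\bbSig\}/K$) is actually present in the paper's own proof, which applies the Lipschitz hypothesis as $\le L\,\mE\|\tX_t-\tX_{k/K}\|^2$ rather than $L^2$, so your accounting is, if anything, the more careful one and no separate tracking trick is needed.
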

\begin{proof}
	By the Lipschitz continuity assumption \ref{ass:lip continuity}, we have 
	\begin{equation}\label{eq:score difference bound}
		\begin{aligned}
			& \mE_{\tX_{t}}\left[\left\|\nabla_{\bx}\log{p_{1 - t}(\tX_{t})} - \nabla_{\bx}\log{p_{1 - t}}\left(\tX_{k/K}\right)\right\|^{2}\right] \leq L\mE\left[\left\|\tX_{t} - \tX_{k/K}\right\|^{2}\right] \\ 
			& \leq 2L\mE\left[\left\|\int_{\frac{k}{K}}^{t}\bbSig\nabla_{\bx}\log{p_{1 - s}}(\tX_{s})ds\right\|^{2} + \left\|\int_{\frac{k}{K}}^{t}\bbSig^{\frac{1}{2}}dW_{s}\right\|^{2}\right].
		\end{aligned}
	\end{equation}
	Then we have 
	\begin{equation}\label{eq:ito norm bound}
		\mE\left[\left\|\int_{\frac{k}{K}}^{t}\bbSig^{\frac{1}{2}}dW_{s}\right\|^{2}\right] = \tr\{\bbSig\}\int_{\frac{k}{K}}^{t}ds = \tr\{\bbSig\}\left(t - \frac{k}{K}\right).  
	\end{equation}
	On the other hand, by Schwarz's inequality  
	\begin{equation}\label{eq:score norm bound}
		\begin{aligned}
			\mE& \left[\left\|\int_{\frac{k}{K}}^{t}\bbSig\nabla_{\bx}\log{p_{1 - s}}(\tX_{s})ds\right\|^{2}\right] \leq \lambda_{\max}^{2}(\bbSig)\left(t - \frac{k}{K}\right)\mE\left[\int_{\frac{k}{K}}^{t}\left\|\nabla_{\bx}\log{p_{1 - s}}(\tX_{s})\right\|^{2}ds\right]. 
		\end{aligned}
	\end{equation}
	By integration by parts, 
	\begin{equation}\label{eq:bounded score norm}
		\begin{aligned}
			\mE\left[\left\|\nabla_{\bx}\log{p_{1 - s}}(\tX_{s})\right\|^{2}\right] &= \int_{\bbR^{d}}p_{1 - s}(\bx)\nabla_{\bx}\log{p_{1 - s}}(\bx)^{\top}\nabla_{\bx}\log{p_{1 - s}}(\bx)d\bx \\
			& = -\int_{\bbR^{d}}p_{1 - s}(\bx)\tr\{\nabla_{\bx}^{2}\log{p_{1 - s}}(\bx)\}d\bx \leq Ld, 
		\end{aligned}
	\end{equation}
	where the last inequality follows from the Lipschitz continuity assumption \ref{ass:lip continuity}. Plugging this into \eqref{eq:score norm bound} and combining it with \eqref{eq:ito norm bound} and \eqref{eq:score difference bound}, we obtain the conclusion. 
\end{proof}
We then use the following lemma to bound the second term on the right-hand side of inequality \eqref{eq:upper bound decomp}. The result is similar to Lemma 1 in \citep{chen2023probability} and Lemma C.12 in \citep{lee2022convergence}. 
\begin{lemma}\label{lem:score t bound}
	Let $\tX_{t}$ follow \eqref{eq:reverse time sde}. If $K^{2}\geq 4\lambda_{\max}(\bbSig)L$ and $k / K \leq t < (k + 1) / K$, then we have 
	\begin{equation*}
		\mE\left[\left\|\nabla_{\bx}\log{p_{1 - t}\left(\tX_{k/K}\right)} - \nabla_{\bx}\log{p_{1 - k/K}}\left(\tX_{k/K}\right)\right\|^{2}\right] \leq \frac{6dL^{2}\lambda_{\max}(\bbSig)}{K^{2}}
	\end{equation*}
\end{lemma}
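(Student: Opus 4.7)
The plan exploits the fact that both scores are evaluated at the same spatial point $\tX_{k/K}$, so only the temporal index differs and the difference is purely a time increment of length at most $1/K$. I would express the difference as a time integral and apply Cauchy--Schwarz to convert the short window into a $1/K^{2}$ factor on the squared norm. For any fixed $\bx$,
\begin{equation*}
\nabla \log p_{1-t}(\bx) - \nabla \log p_{1-k/K}(\bx) = -\int_{1-t}^{1-k/K}\partial_{s}\nabla \log p_{s}(\bx)\, ds,
\end{equation*}
and Cauchy--Schwarz together with $t-k/K\le 1/K$ and expectation against $\tX_{k/K}\sim P_{\bX_{1-k/K}}$ gives
\begin{equation*}
\mE\!\left\|\nabla \log p_{1-t}(\tX_{k/K}) - \nabla \log p_{1-k/K}(\tX_{k/K})\right\|^{2} \le \frac{1}{K}\int_{1-t}^{1-k/K}\mE_{\bX\sim p_{1-k/K}}\!\left[\|\partial_{s}\nabla \log p_{s}(\bX)\|^{2}\right] ds.
\end{equation*}
The task therefore reduces to a uniform bound of the form $\mE_{\bX\sim p_{1-k/K}}[\|\partial_{s}\nabla \log p_{s}(\bX)\|^{2}]\le 6 d L^{2}\lambda_{\max}(\bbSig)$ for $s$ in the relevant interval, which yields the claimed $1/K^{2}$ rate.

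To compute $\partial_{s}\nabla \log p_{s}$, I would differentiate the Fokker--Planck equation $\partial_{s}p_{s} = \tfrac{1}{2}\tr(\bbSig \nabla^{2}p_{s})$ associated with \eqref{eq:sode}. A direct calculation (using $\nabla^{2}p_{s}/p_{s} = \nabla^{2}\log p_{s} + \nabla \log p_{s}(\nabla \log p_{s})^{\top}$) gives
\begin{equation*}
\partial_{s}\log p_{s}(\bx) = \tfrac{1}{2}\tr(\bbSig \nabla^{2}\log p_{s}(\bx)) + \tfrac{1}{2}\|\bbSig^{1/2}\nabla \log p_{s}(\bx)\|^{2},
\end{equation*}
and taking the gradient in $\bx$:
\begin{equation*}
\partial_{s}\nabla \log p_{s}(\bx) = \nabla^{2}\log p_{s}(\bx)\,\bbSig\, \nabla \log p_{s}(\bx) + \tfrac{1}{2}\nabla \tr(\bbSig \nabla^{2}\log p_{s}(\bx)).
\end{equation*}
The first summand is controlled directly by Assumption \ref{ass:lip continuity}, which gives $\|\nabla^{2}\log p_{s}\|_{\mathrm{op}}\le L$, combined with the Stein-type bound $\mE[\|\nabla \log p_{s}(\bX)\|^{2}]\le Ld$ already established in \eqref{eq:bounded score norm}; the hypothesis $K^{2}\ge 4\lambda_{\max}(\bbSig)L$ is what lets one absorb the extra factor of $L\lambda_{\max}(\bbSig)$ into the target constant.

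The delicate piece is the second summand $\tfrac{1}{2}\nabla\tr(\bbSig \nabla^{2}\log p_{s})$, which is a third-derivative quantity in $\log p_{s}$ not controlled directly by the Lipschitz assumption. I would set $h(\bx) := \tr(\bbSig \nabla^{2}\log p_{s}(\bx))$, note the pointwise bound $|h|\le dL\lambda_{\max}(\bbSig)$ from Assumption \ref{ass:lip continuity}, and control $\mE_{p_{s}}[\|\nabla h(\bX)\|^{2}]$ through integration by parts against $p_{s}$: using $\nabla p_{s} = p_{s}\nabla \log p_{s}$,
\begin{equation*}
\mE_{p_{s}}[\|\nabla h\|^{2}] = -\mE_{p_{s}}\!\left[h\,\nabla \log p_{s}^{\top}\nabla h\right] - \mE_{p_{s}}[h\,\Delta h],
\end{equation*}
and then apply Cauchy--Schwarz together with the $L^{\infty}$ bound on $h$ and the Lipschitz bound on $\nabla^{2}\log p_{s}$ to absorb the $\|\nabla h\|$ terms back to the left-hand side, yielding a bound depending only on $d$, $L$, and $\lambda_{\max}(\bbSig)$. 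A final change-of-measure step transfers the expectation from $p_{s}$ to $p_{1-k/K}$ using $|s-(1-k/K)|\le 1/K$ and the regularity of the forward semigroup. The main technical obstacle is precisely this third-derivative step: Assumption \ref{ass:lip continuity} offers no direct handle on $\nabla^{3}\log p_{s}$, so the integration-by-parts argument must be executed carefully (possibly extracting additional smoothness from the Gaussian smoothing $p_{s} = p_{0}*\phi_{s\bbSig}$ to bound $\Delta h$) to keep the resulting constant independent of $K$ and thereby deliver the required $1/K^{2}$ rate.
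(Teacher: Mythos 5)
Your proposal takes a genuinely different route from the paper, and the route does not go through. You time-integrate $\partial_{s}\nabla\log p_{s}$ and then try to bound $\mE[\|\partial_{s}\nabla\log p_{s}\|^{2}]$ pointwise in $s$. There are two distinct problems.

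First, the third-derivative term $\tfrac{1}{2}\nabla\tr(\bbSig\nabla^{2}\log p_{s})$ is simply not controlled by Assumption~\ref{ass:lip continuity}, which only bounds $\nabla^{2}\log p_{s}$ in operator norm. You identify this correctly as the ``delicate piece,'' but the integration-by-parts step you propose does not close the gap: it trades $\|\nabla h\|^{2}$ for $h\,\Delta h$, where $\Delta h$ now involves \emph{fourth} derivatives of $\log p_{s}$, so you have not removed the dependence on higher derivatives, only pushed it up one order. Appealing to the Gaussian smoothing $p_{s}=p_{0}*\phi_{s\bbSig}$ does not repair this within the stated assumptions either, since the smoothing scale is $\cO(1)$ and the paper does not assume any regularity of $\nabla^{3}\log p_{0}$. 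Second, even the term you do claim to control, $\nabla^{2}\log p_{s}\,\bbSig\,\nabla\log p_{s}$, produces after your Cauchy--Schwarz step a bound of order $L^{3}d\lambda_{\max}^{2}(\bbSig)/K^{2}$, an extra factor of $L\lambda_{\max}(\bbSig)$ beyond the target $6dL^{2}\lambda_{\max}(\bbSig)/K^{2}$. The hypothesis $K^{2}\ge 4\lambda_{\max}(\bbSig)L$ is a lower bound on $K$; it does not bound $L\lambda_{\max}(\bbSig)$ from above, so it cannot absorb that factor as you claim.

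The paper avoids both issues by never differentiating in time at all. It writes $\nabla\log p_{t}(\bx)=\mE_{\bX_{0}\mid\bX_{t}=\bx}[\nabla V_{0}(\bX_{0})]$ with $V_{0}=\log p_{0}$, so that the time increment becomes a difference of two conditional expectations against different posteriors; Lipschitz continuity of $\nabla V_{0}$ (which \emph{is} covered by Assumption~\ref{ass:lip continuity}) then bounds the difference by $L\,\sW_{1}$ between the posteriors. The role of the hypothesis $K^{2}\ge 4\lambda_{\max}(\bbSig)L$ is to make the posterior $P_{\bX_{0}\mid\bX_{1/K}}$ strongly log-concave, so Talagrand's transport inequality converts $\sW_{1}^{2}$ into a KL divergence which is then computed for Gaussians. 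In short, the Tweedie--Wasserstein route only ever needs bounded \emph{second} derivatives, whereas your time-derivative route needs third (and, after integration by parts, fourth) derivatives that the assumptions do not supply, together with a spurious extra power of $L\lambda_{\max}(\bbSig)$. If you want to salvage the direct approach you would need to add an explicit assumption controlling $\nabla^{3}\log p_{0}$ and accept the worse constant, but as written the argument has real gaps.
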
 
\begin{proof}
	Let $V_{0}(\bx)$ denote $\log{p_{0}(\bx)}$ and $p_{t\mid 0}(\cdot\mid \cdot)$ the conditional density of $P_{\bX_{t}\mid \bX_{0}}$. Then, for any $t$
	\begin{equation*}
		\begin{aligned}
			\log{p_{t}}(\bx) & = \log{\int_{\bbR^{d}}p_{t\mid 0}(\bx\mid \bx_{0})p_{0}(\bx_{0})d\bx_{0}}\\
			& \propto \log{\int_{\bbR^{d}}\exp\left(\log{p_{t\mid 0}(\bx\mid \bx_{0})} + V_{0}(\bx_{0})\right)d\bx_{0}} \\
			& = \log{\int_{\bbR^{d}}\exp\left(-\frac{1}{2t^{2}}\left(\bx_{0}  - \bx\right)^{\top}\bbSig^{-1}\left(\bx_{0} - \bx\right) + V_{0}(\bx_{0})\right)d\bx_{0}} \\
			& = \log{\int_{\bbR^{d}}\exp\left(-\frac{1}{2t^{2}}\bx_{0}^{\top}\bbSig^{-1}\bx_{0} + V_{0}(\bx_{0} - \bx)\right)d\bx_{0}}
		\end{aligned}
	\end{equation*}
	Taking gradients with respect to $\bx$ on both sides of the equality, and using the formula of \eqref{eq:sode}, we get 
	\begin{equation*}
		\begin{aligned}
			\nabla\log{p_{t}}(\bx) & = \frac{\int_{\bbR^{d}}-\exp\left(-\frac{1}{2t^{2}}\bx_{0}^{\top}\bbSig^{-1}\bx_{0} + V_{0}(\bx_{0} - \bx)\right)\nabla_{\bx}V_{0}(\bx_{0} - \bx)d\bx_{0}}{\int_{\bbR^{d}}\exp\left(-\frac{1}{2t^{2}}\bx_{0}^{\top}\bbSig^{-1}\bx_{0} + V_{0}(\bx_{0} - \bx)\right)d\bx_{0}} \\
			& = \int_{\bbR^{d}}\frac{p_{t\mid 0}(\bx\mid \bx_{0})P_{0}(\bx_{0})}{P_{t}(\bx)}\nabla_{\bx}V_{0}(\bx_{0})d\bx_{0} \\
			& = \mE_{\bX_{0}\mid \bX_{t} = \bx}[\nabla_{\bx}V_{0}(\bX_{0})]. 
		\end{aligned}
	\end{equation*}
	Thus, for any $\bx$, we have 
	\begin{equation}\label{eq:score difference bound wasserstein}
		\begin{aligned}
			\left\|\nabla_{\bx}\log{p_{1 - t}\left(\bx\right)} - \nabla_{\bx}\log{p_{1 - k/K}}\left(\bx\right)\right\|^{2} & = \left\|\mE_{\bX_{0}\mid \bX_{1 - t} = \bx}\left[\nabla_{\bx}V_{0}(\bX_{0})\right] - \mE_{\bX_{0}\mid \bX_{1 - k/K} = \bx}\left[\nabla_{\bx}V_{0}(\bX_{0})\right]\right\|^{2} \\
			& \leq L^{2}\sW_{1}^{2}\left(P_{\bX_{0}\mid \bX_{1 - t} = \bx}, P_{\bX_{0}\mid \bX_{1 - k/K} = \bx}\right),
		\end{aligned}
	\end{equation}
	where the inequality follows from the $L$-continuity of $\nabla_{\bx}V_{0}(\bx)$ and the definition of the first-order Wasserstein distance. The above result relies only on the Lipschitz continuity of $\nabla_{\bx}V_{0}(\bx)$ in Assumption \ref{ass:lip continuity}. Thus, $\nabla\log{p_{t}(\bx)} = \mE_{\bX_{0}\mid \bX_{t} = \bx}[\bX_{0}]$ can be replaced by $\nabla\log{p_{t}(\bx)} = \mE_{\bX_{s}\mid \bX_{t} = \bx}[\bX_{s}]$ for any $0\leq s < t$. Without loss of generality, we take $k = K - 1$ and $0 \leq 1 - t< 1 / K$; the following analysis can be generalized to any other $t$. Note that
	\begin{equation*}
		P_{\bX_{0}\mid \bX_{1/ K}=\bx} \propto \exp\left(-\frac{K^{2}}{2}\left(\bX_{0} - \bx\right)^{\top}\bbSig^{-1}\left(\bX_{0} - \bx\right) + V_{0}(\bx)\right).
	\end{equation*}
	Since $\lambda_{\max}\nabla^{2}(V_{0}(\bx)) \leq L$ and $K^{2}\geq 4\lambda_{\max}(\bbSig)L$, the conditional density $P_{\bX_{0}\mid \bX_{1/K}=\bx}$ is $\frac{K^{2}}{4\lambda_{\max}(\bbSig)}$-strongly log-concave. By Talagrand's transport cost inequality \citep{wainwright2019high}, we have 
	\begin{equation}\label{eq:wasserstein bound}
		\sW_{1}^{2}\left(P_{\bX_{0}\mid \bX_{1 - t} = \bx}, P_{\bX_{0}\mid \bX_{k/K} = \bx}\right) \leq \frac{4\lambda_{\max}(\bbSig)}{K^{2}}D_{KL}\left(P_{\bX_{0}\mid \bX_{1 - t} = \bx} \parallel P_{\bX_{0}\mid \bX_{1/K} = \bx}\right).
	\end{equation}
	Taking expectation over $\bx$ with $\bx = \bX_{1/K}$ and using the chain rule for KL divergence, we get 
	\begin{equation*}
		\begin{aligned}
			\mE_{\bx\sim\bX_{1 - t}}\left[D_{KL}\left(P_{\bX_{0}\mid \bX_{1 - t}=\bx} \parallel P_{\bX_{0}\mid \bX_{1/K}=\bx}\right) \right] &= D_{KL}\left(P_{\bX_{0}\, \bX_{1 - t}} \parallel P_{\bX_{0}, \bX_{1/K}}\right) - D_{KL}(P_{\bX_{1 - t}}\parallel P_{\bX_{0}}) \\
			& \leq  D_{KL}\left(P_{\bX_{0}\mid \bX_{1 - t}} \parallel P_{\bX_{0}\mid \bX_{1/K}}\right),
		\end{aligned}
	\end{equation*}
	where we use the chain rule of conditional KL divergence. 
	\begin{equation*}
		\begin{aligned}
			D_{KL}\left(P_{\bX_{0}, \bX_{1 - t}} \parallel P_{\bX_{0}, \bX_{1 - k/K}}\right) & = \mE_{\bx_{0}}\left[D_{KL}\left(P_{\bX_{1 - t}\mid \bX_{0}} \parallel P_{\bX_{k/K}\mid \bX_{0}}\right)\right] \\
			& = \mE\left[D_{KL}\left(\cN(\bx_{0}, (1 - t)^{2}\bbSig) \parallel \cN\left(\bx_{0}, \frac{1}{K^{2}}\bbSig\right)\right)\right] \\
			& = \frac{1}{2}\left[\log\left(\frac{1}{(1 - t)^{2}K^{2}}\right)^{\frac{d}{2}} - d + \tr\left\{K^{2}(1 - t)^{2}\bI\right\}\right] \\
			& \leq \frac{3d}{2},
		\end{aligned}
	\end{equation*}
	where the last inequality follows from the value of $1 - t$. Combining it with \eqref{eq:score difference bound wasserstein} and \eqref{eq:wasserstein bound} obtains the conclusion. 
\end{proof}
With all these lemmas, we are ready to prove Theorem \ref{thm:sampling error}.

\noindent{\bf Restatement of Theorem \ref{thm:sampling error}.}
\emph{Suppose that $\hX_{0} \sim P_{\bX_{1}}$ is independent of $\htheta_{t}$, and that $\hX_{1}$ is generated from the initial point $\hX_{0}$ following the diffusion updates in \eqref{eq:euler solving sde}.
	Under Assumption \ref{ass:lip continuity} in Supplementary Material \ref{app: continuity}, assume that for all $t\in[0, 1]$,  
	\begin{equation*}
		\mE\left[\left\|\bs_{t}(\bX_{t}; \htheta_{t}) - \nabla\log{p_{t}}(\bX_{t})\right\|^{2}\right] \leq \delta.
	\end{equation*}
	Then,   
	\begin{equation*}
		D_{KL}\left(P_{\bX_{0}} \parallel P_{\hX_{1}}\right) \leq \cO\left(\frac{\lambda_{\max}(\bbSig)L\tr\{\bbSig\}}{K} + \frac{dL\lambda_{\max}^{2}(\bbSig)}{K^{2}} + \lambda_{\max}(\bbSig)\delta\right).
\end{equation*}}
\vspace{-0.3in}
\begin{proof}
	The theorem is directly obtained by combining \eqref{eq:upper bound decomp}, Lemmas \ref{lem:score lip bound}, and \ref{lem:score t bound}. Since $P_{\tX_{t}} = P_{\bX_{1 - t}}$, plugging Lemmas \ref{lem:score lip bound}, \ref{lem:score t bound}, and the condition $\mE_{\bx_{t}}\left[\left\|\bs_{\btheta}(\bx_{t}, t) - \nabla_{\bx}\log{p_{t}}(\bx_{t})\right\|^{2}\right] \leq \delta$ into \eqref{eq:upper bound decomp} gives
	\begin{equation*}
		\begin{aligned}
			D_{KL}\left(P_{\tX_{1}}\parallel P_{\hX_{1}}\right) & \leq 4\sum_{k=0}^{K - 1}\int_{\frac{k}{K}}^{\frac{k + 1}{K}}\lambda_{\max}(\bbSig)\left(\frac{2L\tr\{\bbSig\}}{K} + \frac{2L^{2}_{\bx}d\lambda_{\max}(\bbSig)}{K^{2}}\right)dt \\
			& + 4\sum_{k=0}^{K - 1}\int_{\frac{k}{K}}^{\frac{k + 1}{K}}\lambda_{\max}(\bbSig)\frac{6dL^{2}\lambda_{\max}(\bbSig)}{K^{2}}dt \\
			& + 4\sum_{k=0}^{K - 1}\int_{\frac{k}{K}}^{\frac{k + 1}{K}}\lambda_{\max}(\bbSig)\delta dt \\
			& = \cO\left(\frac{\lambda_{\max}(\bbSig)L\tr\{\bbSig\}}{K} + \frac{dL\lambda_{\max}^{2}(\bbSig)}{K^{2}} + \lambda_{\max}(\bbSig)\delta\right), 
		\end{aligned}			
	\end{equation*}
	which completes the proof by noting that $\tX_{1} \sim P_{\bX_{0}}$.  
\end{proof}
\section{Proof for Results in Section \ref{sec:generalization analysis}}\label{app:generalization proof}
{\bf Restatement of Theorem \ref{thm:gen error}.}
\emph{Suppose 
	$\lambda > 2\delta_{2n}(\eta)$.
	Under Assumption \ref{ass:moment norm} and Assumption \ref{ass:bernstein} in Supplementary Material \ref{app:conditions for concentration}, for any $t\in[0, 1]$,  
	\begin{equation*}
		\begin{aligned}
			& \mE_{\bX_{t}}\left[\left\|\bs_{t}(\bX_{t}; \htheta_{t}) - \nabla\log{p_{t}}(\bX_{t})\right\|^{2}\right] \\
			& \leq 2\lambda\cB_{\cK}\log\left(\eta^{-1}\right) + \frac{4\sigma^{2}_{0}}{m} + 4\delta_{0m}(\eta)^{2} +
			\frac{4(\delta_{1n}(\eta) + 2\delta_{2n}(\eta)\sqrt{C_{\eta, m, \lambda}})^{2}}{\lambda} 
		\end{aligned}
	\end{equation*}
	holds with probability at least $1 - 4\eta$, where the randomness is from that of $\htheta_{t}$.}
\begin{proof}
	Let $\bar{\btheta}_{t} = \frac{1}{m}(\bbf(\bX_{t}^{(1)}), \dots, \bbf(\bX_{t}^{(m)}))^{\top}$ where $\bbf$ is defined above \eqref{eq: approximate mean}. Then, we have
	\begin{equation}\label{eq:bound bar}
		\mE_{\bX_{t}}\left[\left\|\bs(\bX_{t}; \bar{\btheta}_{t}) - \nabla\log{p_{t}}(\bX_{t})\right\|^{2}\right] \leq \frac{2\sigma^{2}_{0}}{m} + 2\delta_{0m}(\eta)^{2} 
	\end{equation}
	with probability at least $1 - \eta$ according to the proof of Proposition \ref{prop:approximation error}.
	Let $\widehat{\bzeta}_{t} = (n - m)^{-1}\sum_{i = m + 1}^{n}\bzeta_{\bX_{1}^{(i)}, t}$, $\bzeta_{t} = \mE[\bzeta_{\bX_{1}, t}]$, $\widehat{\bGamma}_{t} = (n - m)^{-1}\sum_{i = m + 1}^{n}\bGamma_{\bX_{1}^{(i)}, t}$ and $\bGamma_{t} = \mE[\bGamma_{\bX_{1}, t}]$ (see definitions in Appendix \ref{app:conditions for concentration}). Under Assumption \ref{ass:bernstein}, we have
	\begin{equation}\label{eq: bound zeta}
		\left\|\widehat{\bzeta}_{t} - \bzeta_{t}\right\|_{\cH_{\cK}^{d}} \leq 
		2\left(\frac{H_{1}}{n - m} + \frac{\sigma_{1}}{\sqrt{n - m}}\right)\log\frac{2}{\eta} = \delta_{1n}(\eta)
	\end{equation}
	and
	\begin{equation}\label{eq: bound Gamma}
		\left\|\widehat{\bGamma}_{t} - \bGamma_{t}\right\|_{\rm HS} \leq 
		2\left(\frac{H_{2}}{n - m} + \frac{\sigma_{2}}{\sqrt{n - m}}\right)\log\frac{2}{\eta} = \delta_{2n}(\eta)
	\end{equation}
	with probability at least $1 - 2\eta$ according to Proposition 23 in \cite{bauer2007regularization} (similar to the proof of our Proposition \ref{prop:approximation error}). Subsequently, we write $\bs(\cdot, \btheta_{t})$ as $\bs_{\btheta_{t}}$ for simplicity. Notice that the objective function can be written as 
	\begin{equation*}
		\widehat{\cL}_{t}(\btheta_{t}) = \left\langle \bs_{\btheta_{t}}, (\widehat{\bGamma}_{t} + \lambda \bI) \bs_{\btheta_{t}}   \right\rangle_{\cH_{\cK}^{d}} + \left\langle \widehat{\bzeta}_{t}, \bs_{\btheta_{t}}   \right\rangle_{\cH_{\cK}^{d}},
	\end{equation*}
	where $\bI$ is the identity mapping. Let
	\begin{equation*}
		\cL_{t}(\btheta_{t}) = \left\langle \bs_{\btheta_{t}}, (\bGamma_{t} + \lambda \bI) \bs_{\btheta_{t}}   \right\rangle_{\cH_{\cK}^{d}} + \left\langle \bzeta_{t}, \bs_{\btheta_{t}}   \right\rangle_{\cH_{\cK}^{d}},
	\end{equation*}
	and $\bar{\btheta}_{t, \lambda}$ be the minimizer of $\cL_{t}(\btheta_{t})$. Then,
	\begin{equation*}
		\cL_{t}(\btheta_{t}) \propto \mE_{\bX_{t}}\left[\left\|\bs(\bX_{t}; \btheta_{t}) - \nabla\log{p_{t}}(\bX_{t})\right\|^{2}\right] + \lambda\|\bs_{\btheta_{t}}\|_{\cH_{\cK}^{d}}^{2}.
	\end{equation*}
	Notice that $\sup_{\bx_{1}, \bx_{2}}\cK_{t}^{(0)}(\bx_{1}, \bx_{2}) \leq |\bbOmega_{t}^{(0)}|^{-1/2}|\bbOmega^{(0)}|^{1/2}$ and $\bbf_{t}(\bx_{1})^{\top}\bbf_{t}(\bx_{2}) \leq 1 / 2(\|\bbf_{t}(\bx_{1})\|^{2} + \|\bbf_{t}(\bx_{2})\|^{2})$ for any $\bx_{1}, \bx_{2} \in \bbR^{d}$. We have $\mE\left[\exp\left\{m^{-2}\sum_{i, j = 1}^{m}\bbf_{t}(\bX_{1}^{(i)})^{\top}\bbf_{t}(\bX_{1}^{(j)})\cK_{t}^{(0)}(\bX_{1}^{(i)}, \bX_{1}^{(j)})\right\}\right] \leq \exp(B_{\cK})$ according to Assumption \ref{ass:moment norm} and Jensen's inequality. Then, by Chebyshev's inequality, we have
	\begin{equation*}
		\begin{aligned}
			\left\|\bs_{\bar{\btheta}_{t}}\right\|_{\cH_{\cK}^{d}}^{2} 
			&\leq \cB_{\cK}\log\left(\eta^{-1}\right), 
		\end{aligned}
	\end{equation*} 
	with probability at least $1 - \eta$. Combining this with \eqref{eq:bound bar},
	we have
	\begin{equation}\label{eq: error s_bar}
		\begin{aligned}
			\mE_{\bX_{t}}\left[\left\|\bs(\bX_{t}; \bar{\btheta}_{t,\lambda}) - \nabla\log{p_{t}}(\bX_{t})\right\|^{2}\right] 
			&\leq \mE_{\bX_{t}}\left[\left\|\bs(\bX_{t}; \bar{\btheta}_{t}) - \nabla\log{p_{t}}(\bX_{t})\right\|^{2}\right] + \lambda\left\|\bs_{\bar{\btheta}_{t}}\right\|_{\cH_{\cK}^{d}}^{2}\\
			&\leq \frac{2\sigma^{2}_{0}}{m} + 2\delta_{0m}(\eta)^{2} + \lambda\cB_{\cK}\log\left(\eta^{-1}\right)     
		\end{aligned}
	\end{equation}
	and
	\begin{equation}\label{eq: bound norm s_bar}
		\begin{aligned}
			\left\|\bs_{\bar{\btheta}_{t,\lambda}}\right\|_{\cH_{\cK}^{d}}^{2}
			&\leq \lambda^{-1}\mE_{\bX_{t}}\left[\left\|\bs(\bX_{t}; \bar{\btheta}_{t}) - \nabla\log{p_{t}}(\bX_{t})\right\|^{2}\right] + \left\|\bs_{\bar{\btheta}_{t}}\right\|_{\cH_{\cK}^{d}}^{2}\\
			&\leq \lambda^{-1}\left\{\frac{2\sigma^{2}_{0}}{m} + 2\delta_{0m}(\eta)^{2} \right\} + \cB_{\cK}\log\left(\eta^{-1}\right)\\
			&=  C_{\eta, m, \lambda}
		\end{aligned}
	\end{equation}
	with probability at least $1 - \eta$
	by the definition of $\bar{\btheta}_{t, \lambda}$. Next, we bound the gap between $\bs_{\htheta_{t}}$ and $\bs_{\bar{\btheta}_{t, \lambda}}$.
	Recall that $\htheta_{t}$ is the minimizer of $\widehat{\cL}_{t}(\btheta_{t})$. Thus, 
	\begin{equation}\label{eq:basic inequality}
		\left\langle \bs_{\htheta_{t}}, (\widehat{\bGamma}_{t} + \lambda \bI) \bs_{\htheta_{t}}   \right\rangle_{\cH_{\cK}^{d}} + \left\langle \widehat{\bzeta}_{t}, \bs_{\htheta_{t}}   \right\rangle_{\cH_{\cK}^{d}} \leq  \left\langle \bs_{\bar{\btheta}_{t, \lambda}}, (\widehat{\bGamma}_{t} + \lambda \bI) \bs_{\bar{\btheta}_{t, \lambda}}   \right\rangle_{\cH_{\cK}^{d}} + \left\langle \widehat{\bzeta}_{t}, \bs_{\bar{\btheta}_{t, \lambda}}   \right\rangle_{\cH_{\cK}^{d}}.
	\end{equation}
	Taking the derivative of $\widehat{\cL}_{t}(\btheta_{t})$ and $\cL_{t}(\btheta_{t})$, we have 
	\begin{equation*}
		\left\langle \bs_{\htheta_{t}} - \bs_{\bar{\btheta}_{t,\lambda}}, 2(\bGamma_{t} + \lambda \bI) \bs_{\bar{\btheta}_{t, \lambda}} + \bzeta_{t}\right\rangle_{\cH_{\cK}^{d}} = 0
	\end{equation*}
	because $\htheta_{t}$ and $\bar{\btheta}_{t, \lambda}$ are the minimizers of $\widehat{\cL}_{t}(\btheta_{t})$ and $\cL_{t}(\btheta_{t})$, respectively. Combining this with \eqref{eq:basic inequality}, we have
	\begin{equation*}
		\begin{aligned}
			&2\left\langle \bs_{\htheta_{t}} 
			- \bs_{\bar{\btheta}_{t, \lambda}}, (\widehat{\bGamma}_{t} - \bGamma_{t})\bs_{\bar{\btheta}_{t, \lambda}}   \right\rangle_{\cH_{\cK}^{d}} - \left\langle \bzeta_{t},  \bs_{\htheta_{t}} 
			- \bs_{\bar{\btheta}_{t, \lambda}}   \right\rangle_{\cH_{\cK}^{d}} +
			\left\langle \bs_{\htheta_{t}} 
			- \bs_{\bar{\btheta}_{t, \lambda}}, (\widehat{\bGamma}_{t} + \lambda \bI)(\bs_{\htheta_{t}} 
			- \bs_{\bar{\btheta}_{t, \lambda}})   \right\rangle_{\cH_{\cK}^{d}}\\
			& =
			\left\langle \bs_{\htheta_{t}}, (\widehat{\bGamma}_{t} + \lambda \bI) \bs_{\htheta_{t}}   \right\rangle_{\cH_{\cK}^{d}} - \langle \bs_{\bar{\btheta}_{t, \lambda}}, (\widehat{\bGamma}_{t} + \lambda \bI) \bs_{\bar{\btheta}_{t, \lambda}}   \rangle_{\cH_{\cK}^{d}} \\
			&\leq         
			- \left\langle \widehat{\bzeta}_{t}, \bs_{\htheta_{t}} 
			- \bs_{\bar{\btheta}_{t, \lambda}}   \right\rangle_{\cH_{\cK}^{d}}.
		\end{aligned}
	\end{equation*}
	This implies that
	\begin{equation}\label{eq:quadratic bound}
		\begin{aligned}
			&\left\langle \bs_{\htheta_{t}} 
			- \bs_{\bar{\btheta}_{t, \lambda}}, \bGamma_{t}(\bs_{\htheta_{t}} 
			- \bs_{\bar{\btheta}_{t, \lambda}})   \right\rangle_{\cH_{\cK}^{d}}
			+ \{\lambda - \delta_{2n}(\eta)\} \|\bs_{\htheta_{t}} 
			- \bs_{\bar{\btheta}_{t, \lambda}}\|_{\cH_{\cK}^{d}}^{2}\\
			&\leq
			\left\langle \bs_{\htheta_{t}} 
			- \bs_{\bar{\btheta}_{t, \lambda}}, \widehat{\bGamma}_{t}(\bs_{\htheta_{t}} 
			- \bs_{\bar{\btheta}_{t, \lambda}})   \right\rangle_{\cH_{\cK}^{d}}
			\\
			& \leq 2\left\langle \bs_{\htheta_{t}} 
			- \bs_{\bar{\btheta}_{t, \lambda}}, (\widehat{\bGamma}_{t} - \bGamma_{t})\bs_{\bar{\btheta}_{t, \lambda}}   \right\rangle_{\cH_{\cK}^{d}} - \left\langle \widehat{\bzeta}_{t} - \bzeta_{t}, \bs_{\htheta_{t}} 
			- \bs_{\bar{\btheta}_{t, \lambda}}   \right\rangle_{\cH_{\cK}^{d}}\\
			&\leq \{\delta_{1n}(\eta) + 2\delta_{2n}(\eta)\sqrt{C_{\eta, m, \lambda}}\}\|\bs_{\htheta_{t}} - \bs_{\bar{\btheta}_{t, \lambda}}\|_{\cH_{\cK}^{d}}
		\end{aligned}
	\end{equation}
	according to \eqref{eq: bound zeta}, \eqref{eq: bound Gamma} and \eqref{eq: bound norm s_bar} with probability at least $1 - 4\eta$. Notice that 
	\begin{equation*}
		\left\langle \bs_{\htheta_{t}} 
		- \bs_{\bar{\btheta}_{t, \lambda}}, \bGamma_{t}(\bs_{\htheta_{t}} 
		- \bs_{\bar{\btheta}_{t, \lambda}})   \right\rangle_{\cH_{\cK}^{d}} = \mE_{\bX_{t}}\left[\left\|\bs(\bX_{t}; \htheta_{t}) - \bs(\bX_{t}; \bar{\btheta}_{t,\lambda})\right\|^{2}\right] \geq 0.
	\end{equation*}
	Thus,
	\begin{equation*}
		\begin{aligned}
			\left\|\bs_{\htheta_{t}} 
			- \bs_{\bar{\btheta}_{t, \lambda}}\right\|_{\cH_{\cK}^{d}}
			&\leq \frac{\delta_{1n}(\eta) + 2\delta_{2n}(\eta)\sqrt{C_{\eta, m, \lambda}}}{\lambda - \delta_{2n}(\eta)}\leq 
			\frac{2\{\delta_{1n}(\eta) + 2\delta_{2n}(\eta)\sqrt{C_{\eta, m, \lambda}}\}}{\lambda}.
		\end{aligned}
	\end{equation*}
	
	Then, \eqref{eq:quadratic bound} implies that
	\begin{equation*}
		\mE_{\bX_{t}}\left[\left\|\bs(\bX_{t}; \htheta_{t}) - \bs(\bX_{t}; \bar{\btheta}_{t,\lambda})\right\|^{2}\right] \leq \frac{2\{\delta_{1n}(\eta) + 2\delta_{2n}(\eta)\sqrt{C_{\eta, m, \lambda}}\}^{2}}{\lambda}.
	\end{equation*}
	This together with \eqref{eq: error s_bar} and \eqref{eq:quadratic bound} implies that
	\begin{equation*}
		\begin{aligned}
			&\mE_{\bX_{t}}\left[\left\|\bs(\bX_{t}; \htheta_{t}) - \nabla\log{p_{t}}(\bX_{t})\right\|^{2}\right]\\
			& \leq \mE_{\bX_{t}}\left[\left(\left\|\bs(\bX_{t}; \bar{\btheta}_{t,\lambda}) - \nabla\log{p_{t}}(\bX_{t})\right\| + \left\|\bs(\bX_{t}; \htheta_{t}) - \bs(\bX_{t}; \bar{\btheta}_{t,\lambda})\right\|\right)^{2}\right]\\
			&\leq 2\mE_{\bX_{t}}\left[\left\|\bs(\bX_{t}; \bar{\btheta}_{t,\lambda}) - \nabla\log{p_{t}}(\bX_{t})\right\|^{2}\right] +
			2\mE_{\bX_{t}}\left[\left\|\bs(\bX_{t}; \htheta_{t}) - \bs(\bX_{t}; \bar{\btheta}_{t,\lambda})\right\|^{2}\right]
			\\
			&\leq  2\lambda\cB_{\cK}\log\left(\eta^{-1}\right) + \frac{4\sigma^{2}_{0}}{m} + 4\delta_{0m}(\eta)^{2} +
			\frac{4\{\delta_{1n}(\eta) + 2\delta_{2n}(\eta)\sqrt{C_{\eta, m, \lambda}}\}^{2}}{\lambda}
		\end{aligned}
	\end{equation*}
	with probability at least $1 - 4\eta$,
	which completes the proof.
\end{proof}

\subsection{Application to Empirical Bayes Estimation}
\label{subsec: empirical bayes}

Our denoising framework also provides a natural tool for empirical Bayes estimation. 
The connection follows from the observation that the measurement-error model considered in this paper is mathematically equivalent to a nonparametric normal-means empirical Bayes model. 
Concretely, suppose that for $i=1,\ldots,n$ we observe a noisy estimate $\bY_i\in\mathbb R^d$ of an unobserved latent effect $\bmu_i\in\mathbb R^d$:
\[
\bY_i=\bmu_i+\beps_i,\qquad 
\bepsilon_i\sim \cN(0,\bbSig),\qquad 
\bmu_i\sim G,
\]
where $G$ is an unknown prior distribution. 
In this formulation, $\bY_i$ corresponds to the error-contaminated observation $\bX_1^{(i)}$, $\bmu_i$ corresponds to the error-free latent variable $\bX_0^{(i)}$, and $G=P_{X_0}$ is the unknown distribution to be learned from the noisy observations. 
This model arises in many large-scale inference problems, including estimation of many treatment effects \citep{azevedo2019empirical}, gene-level \citep{smyth2004linear} or variant-level effects \citep{thompson2015empirical}, hospital profiling \citep{thomas1994empirical}, and small-area estimation \citep{fay1979estimates}. 
In these applications, the covariance matrix $\bbSig$ is often known or consistently estimated from individual-level data, making the normal-means formulation particularly compatible with the setting studied in this paper.

The proposed RKHS-based score estimation method can therefore be used to estimate the empirical Bayes prior distribution $G$ without specifying a parametric mixture model. 
After training the score functions $\{\bs_t(\cdot;\widehat\btheta_t):0< t\le 1\}$ from the noisy observations $\{\bY_i\}_{i=1}^n$, one can generate samples from the estimated prior distribution by running the empirical reverse SDE in Section~\ref{sec:sampling with reverse-sde} with the initial values drawn from the observed empirical distribution of $\{\bY_i\}_{i=1}^n$. 
The resulting denoised samples approximate draws from $G$, and hence can be used to estimate prior-level quantities such as
\[
\int h(\bmu)\,dG(\bmu)
\]
for a user-specified functional $h$. 
For example, $h$ may represent a coordinate-wise moment, a tail probability, a sign probability, or a nonlinear summary of the latent effects.

More importantly, the same reverse-SDE construction yields posterior simulation for individual empirical Bayes estimation. 
For a fixed observation $\bY_i= \by_i$, the exact reverse-time diffusion initialized at $\by_i$ has terminal distribution equal to the posterior distribution of the latent effect,
\[
\cL(\bmu_i\mid \bY_i=\by_i) =  \cL(\bX_0\mid \bX_1=\by_i).
\]
Thus, by running the empirical reverse SDE independently $R$ times with the same initialization $\by_i$, we obtain approximate posterior draws (due to $\widehat \bX_{1}\approx \bX_{0}$) 
\[
\widehat\bmu_i^{(r)}=\widehat \bX_{i,1}^{(r)},\qquad r=1,\ldots,R,
\]
where $\widehat \bX_{i,0}^{(r)}$ are initialized from $\by_{i}$ and, for $k=0,\ldots,K-1$,
\[
\widehat \bX_{i,(k+1)/K}^{(r)}
=
\widehat \bX_{i,k/K}^{(r)}
+
\frac{1}{K}\bbSig 
\bs_{1 - k/K}\!\left(\widehat \bX_{i,k/K}^{(r)};\widehat\btheta_{1-k/K}\right)
+
\sqrt{\frac{1}{K}}\bbSig^{1/2}\bxi_{ik}^{(r)},
\qquad 
\bxi_{ik}^{(r)}\sim \cN(0, \bI_d).
\]
These posterior draws can be summarized to obtain empirical Bayes point estimates and uncertainty measures:
\[
\widehat\bmu_i^{\rm EB}
=
\frac{1}{R}\sum_{r=1}^R \widehat\bmu_i^{(r)},
\qquad
\widehat \bV_i^{\rm EB}
=
\frac{1}{R-1}
\sum_{r=1}^R
\left(\widehat\bmu_i^{(r)}-\widehat\bmu_i^{\rm EB}\right)
\left(\widehat\bmu_i^{(r)}-\widehat\bmu_i^{\rm EB}\right)^\top.
\]
For any decision set $A\subset\mathbb R^d$, the posterior probability can be estimated by
\[
\bbP(\bmu_i\in A\mid \bY_i)
\approx
\frac{1}{R}\sum_{r=1}^R 
\mathbf 1\left\{\widehat\bmu_i^{(r)}\in A\right\}.
\]

In addition to our denoising-based method, the posterior mean or variance also admits a direct score-based representation through Tweedie's formula \citep{efron2011tweedie}. 
To see this, let $p_1$ denote the marginal density of $\bY = \bX_1 = \bmu + \beps$. 
Because    
\[p_1(\by) = \int \phi_{\bbSig}(\by-\bmu) \,dG(\bmu),\]
where $\phi_{\bbSig}$ is the density of $\cN(0,\bbSig)$, we have
\[
\nabla \log p_1(\by)
=
\int -\bbSig^{-1}(\by - \bmu)\phi_{\bbSig}(\by - \bmu)\,dG(\bmu)/
\int \phi_{\bbSig}(\by - \bmu)\,dG(\bmu)
=
\bbSig^{-1}\left\{\mE(\bmu\mid \bY = \by) - \by\right\}.\]
Therefore,
\[\mE(\bmu_i\mid \bY_i)
=
\bY_i + \bbSig\nabla\log p_1(\by_i).\]
Replacing the unknown score $\nabla\log p_1$ with the proposed RKHS score estimator gives the closed-form empirical Bayes shrinkage estimator
\[\widehat\bmu_i^{\rm Tw}
=
\bY_i + \bbSig \bs_1(\bY_i;\widehat\btheta_1).\]
This estimator is especially convenient when only posterior means are required (the conditional variance can be similarly obtained). 
In contrast, the reverse-SDE posterior sampler provides a more flexible route for estimating credible regions, nonlinear posterior summaries, and decision-specific posterior probabilities.

Compared with classical empirical Bayes methods based on parametric normal mixtures or explicit deconvolution of the marginal density, the proposed approach directly estimates the score functions needed for posterior denoising and does not require specifying a finite-dimensional prior family. 
It is therefore well suited for multivariate empirical Bayes problems where the latent effects may have a complex non-Gaussian distribution. 
The theoretical results in Section~\ref{sec:theoretical analysis} imply that the marginal distribution of the denoised samples approximates the latent prior distribution $G$ under the stated regularity conditions. 
A full theory for conditional posterior approximation is beyond the scope of this paper, but the above construction provides a practical empirical Bayes interpretation of the proposed denoising diffusion procedure.

\section{Discussion on the Sampling Process}
\begin{algorithm}[t!]
	\caption{Denoising with Diffusion Model}
	\label{alg:alg1}
	\textbf{Input:} Training samples $\{\bX_{1}^{(i)}\}_{i=1}^{n}$, number of sampling steps $K$, tuning parameters $m$ and $\lambda$.
	\begin{algorithmic}[1]
		\State  {\textbf{\emph{Training}}: 
			\State  {Obtain parameters using the closed form solution \eqref{eq:closed form solution}}
			\State  {\textbf{\emph{Sampling Denoised Data}}}:
			\qquad \For    {$i=1,\cdots, n$}
			\State  {Initialize $\hX_{0}^{(i)} = \bX_{1}^{(i)}$}
			\For    {$k=0,\cdots, K - 1$}
			\State  {Update $\hX_{(k + 1) / K}^{(i)}$ based on $\hX_{k / K}^{(i)}$ as in \eqref{eq:euler solving sde}}
			\EndFor 
			\EndFor\\
			\Return {Samples $\{\hX_{1}^{(i)}\}_{i=1}^{n}$.}
		}
	\end{algorithmic}
\end{algorithm}
In this section, we first present the complete procedure of our diffusion-based method for data with measurement error in Algorithm \ref{alg:alg1}. As discussed in Section \ref{sec:sampling with reverse-sde}, the sampling process numerically solves a given SDE. Solving the reverse-time SDE \eqref{eq:reverse time sde} has been well studied in the existing diffusion-model literature. One standard pipeline \citep{song2020score,song2020denoising} transforms the target reverse SDE \eqref{eq:reverse sde} into a distributionally equivalent ordinary differential equation (ODE) \eqref{eq:ode} via the Fokker--Planck equation \citep{oksendal2013stochastic}, as in Proposition \ref{pro:density}. Theoretically, the samples obtained by solving the derived ODE may have smaller discretization error than those obtained by \eqref{eq:euler solving sde}, owing to the numerical advantages of ODE solvers compared with SDE solvers \citep{platen1999introduction}. 
\par
A broad literature \citep{bao2022analytic,lu2022dpm,xue2023sa,zhao2024unipc} improves the Euler--Maruyama method \eqref{eq:euler solving sde} by using enhanced numerical solvers for the reverse-time SDE \eqref{eq:reverse time sde} or its equivalent ODE. These improved numerical methods are useful in the deep learning context because the score function model $\bs(\cdot; \btheta_{t})$ is usually a deep neural network \citep{lecun2015deep,ronneberger2015u}. Improved numerical methods significantly reduce the number of function evaluations (NFEs) and the required computational cost. However, in this paper, we use a kernel-based model $\bs(\cdot; \btheta_{t})$ to approximate the score function. Thus, model evaluation is not expensive, so decreasing the NFEs does not bring substantial improvement in practice.  

\section{Ablation Study}\label{app:ablation study}
In this section, we follow the settings in Section \ref{sec:simulation} to explore the effect of hyperparameters in our proposed method. Concretely, we study the regularization parameter $\lambda$, denoising steps $K$, number of sub-samples $m$, anisotropic bandwidth scale $h$, and number of samples $n$. We report the $\sqrt{d}$-scaled joint sliced Wasserstein distance (SW), with $d=d_{\bZ}+1$, and MSE under the three distributions of $\bZ_{0}$ used in Section \ref{sec:simulation}. Unless otherwise specified, we set $d_{\bZ}=3$, $\sigma = 0.5$, $n=1000$, $m=64$, $\lambda=n^{-1/2}$, $K=50$, and $h=0.5$.
\par
The results are summarized as follows: 
\begin{enumerate}
	\item For the regularization parameter $\lambda$ in Figure \ref{fig:lam}, we find that a proper $\lambda$ guarantees the best performance of our method. This is a standard phenomenon for regularization parameters \citep{smale2003estimating} when applying RKHS-based methods. Notice that SW does not necessarily have the same trend as MSE, since the former focuses on distributional recovery while the latter depends on the quality of $\{(\hZ_{0}^{(i)}, \hat Y_{0}^{(i)})\}_{i=1}^{n}$ for downstream prediction. 
	\item For the denoising steps $K$ in Figure \ref{fig:K}, we find that increasing the number of denoising steps does not necessarily result in improved performance. We speculate that this occurs because the estimation error of the score function $\nabla\log{p_{\bZ_{t}}}$ varies with $t$. Thus, taking more steps with an inaccurately estimated score function may induce larger error. A similar phenomenon is also observed for diffusion models in AI \citep{lu2022dpm,bao2022analytic,xue2023sa}.  
	\item For the number of sub-samples $m$ in Figure \ref{fig:m}, we find that increasing $m$ from a small value improves performance and that the method remains stable for a wide range of $m$. This observation is consistent with our theoretical justification after Theorem \ref{thm:gen error}. 
	\item For the bandwidth scale $h$ in Figure \ref{fig:h}, we observe that the proposed anisotropic bandwidth choice is stable over a broad range of values around the default $h=0.5$.
	\item For the number of samples $n$ in Figure \ref{fig:n}, we observe that increasing the number of samples generally improves the performance of our method, as suggested in Theorem \ref{thm:gen error}. 
\end{enumerate}

\begingroup
\captionsetup{type=figure,hypcap=false,aboveskip=3pt,belowskip=0pt}
\setlength{\parskip}{0pt}

\noindent\begin{minipage}{\textwidth}
	\centering
	\includegraphics[width=0.95\textwidth]{./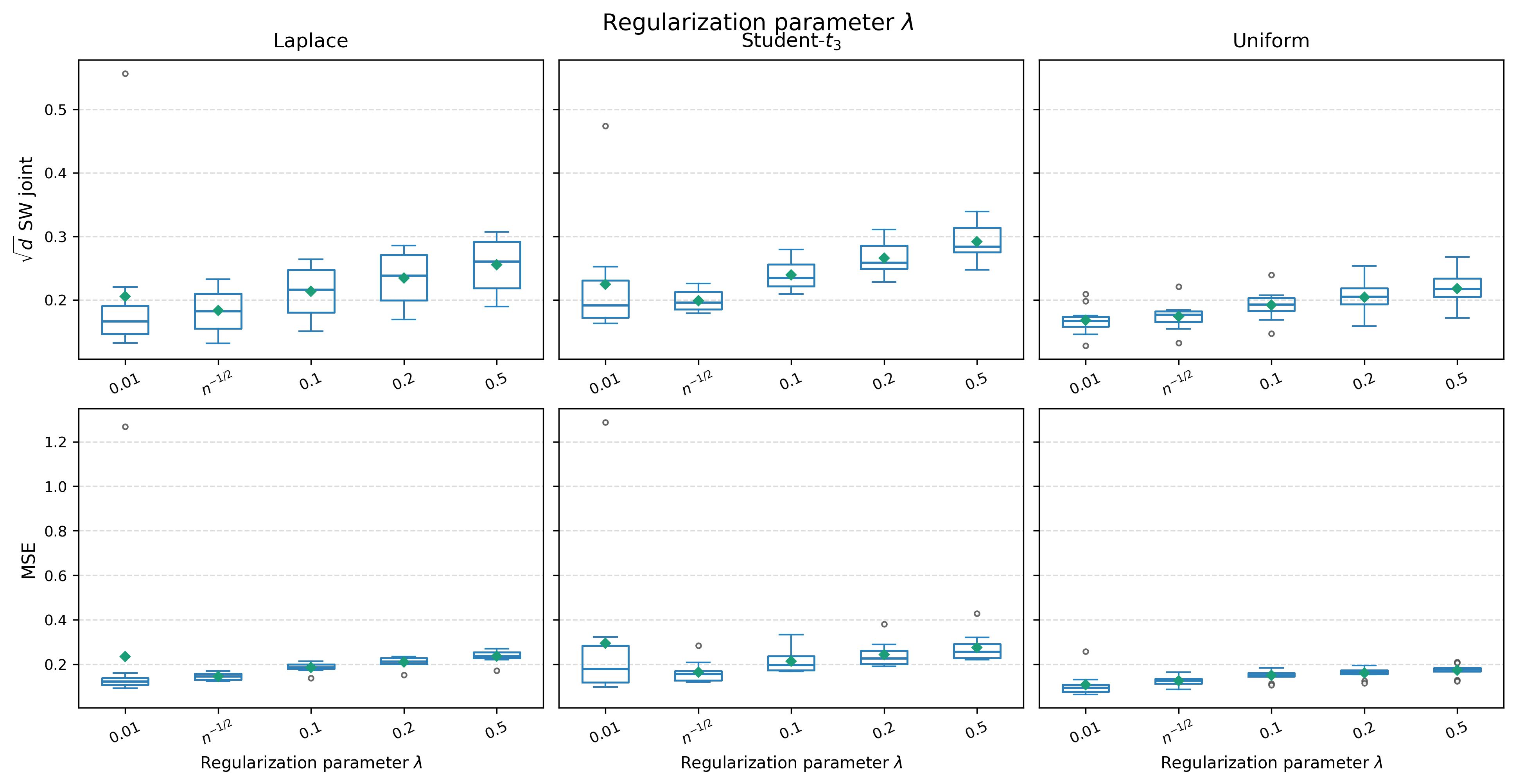}
	\captionof{figure}{The scaled joint sliced Wasserstein distance and MSE over different regularization parameters $\lambda$.}
	\label{fig:lam}
\end{minipage}\par\vspace{0.35\baselineskip}

\noindent\begin{minipage}{\textwidth}
	\centering
	\includegraphics[width=0.95\textwidth]{./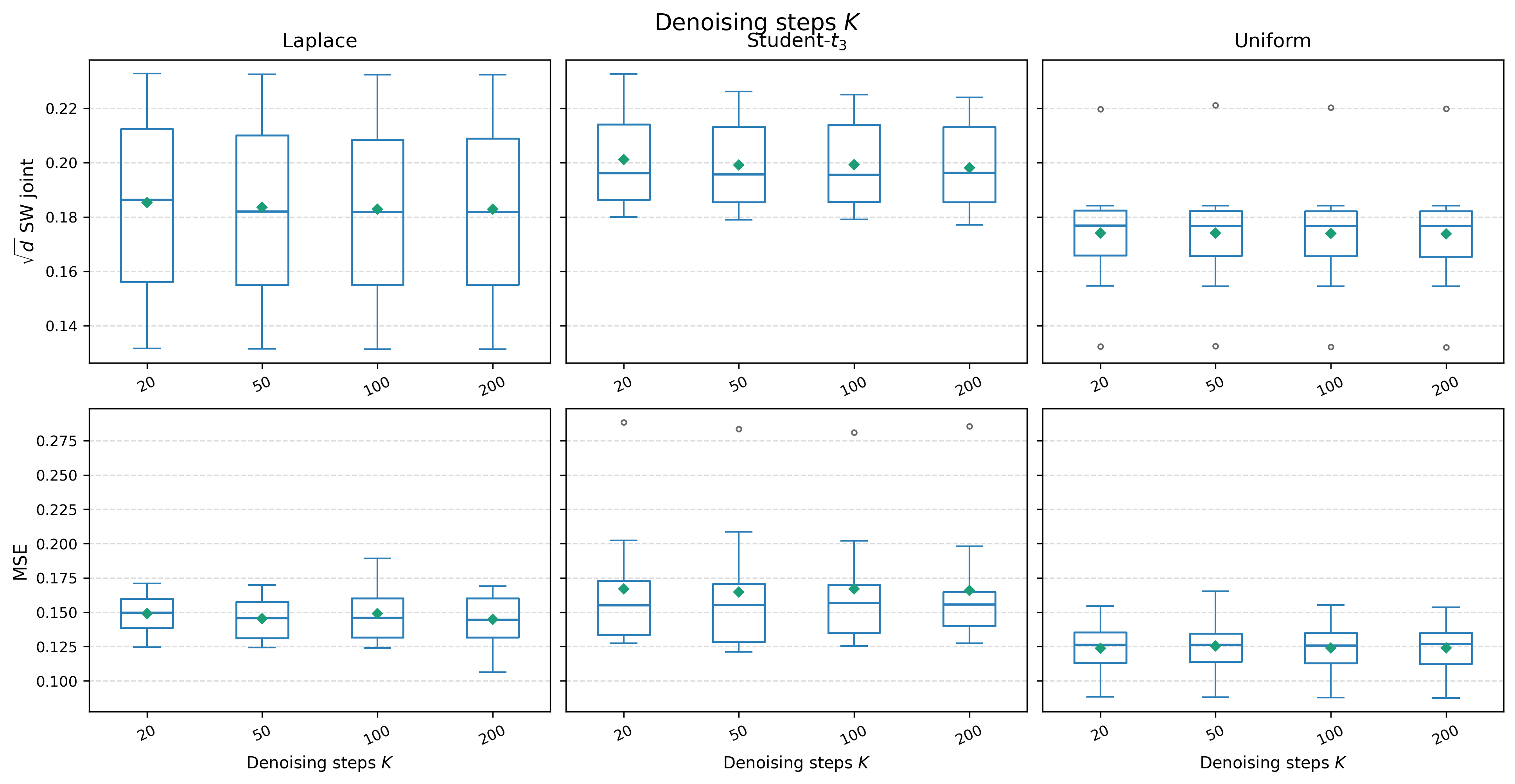}
	\captionof{figure}{The scaled joint sliced Wasserstein distance and MSE over different denoising steps $K$.}
	\label{fig:K}
\end{minipage}\par\vspace{0.35\baselineskip}

\noindent\begin{minipage}{\textwidth}
	\centering
	\includegraphics[width=0.83\textwidth]{./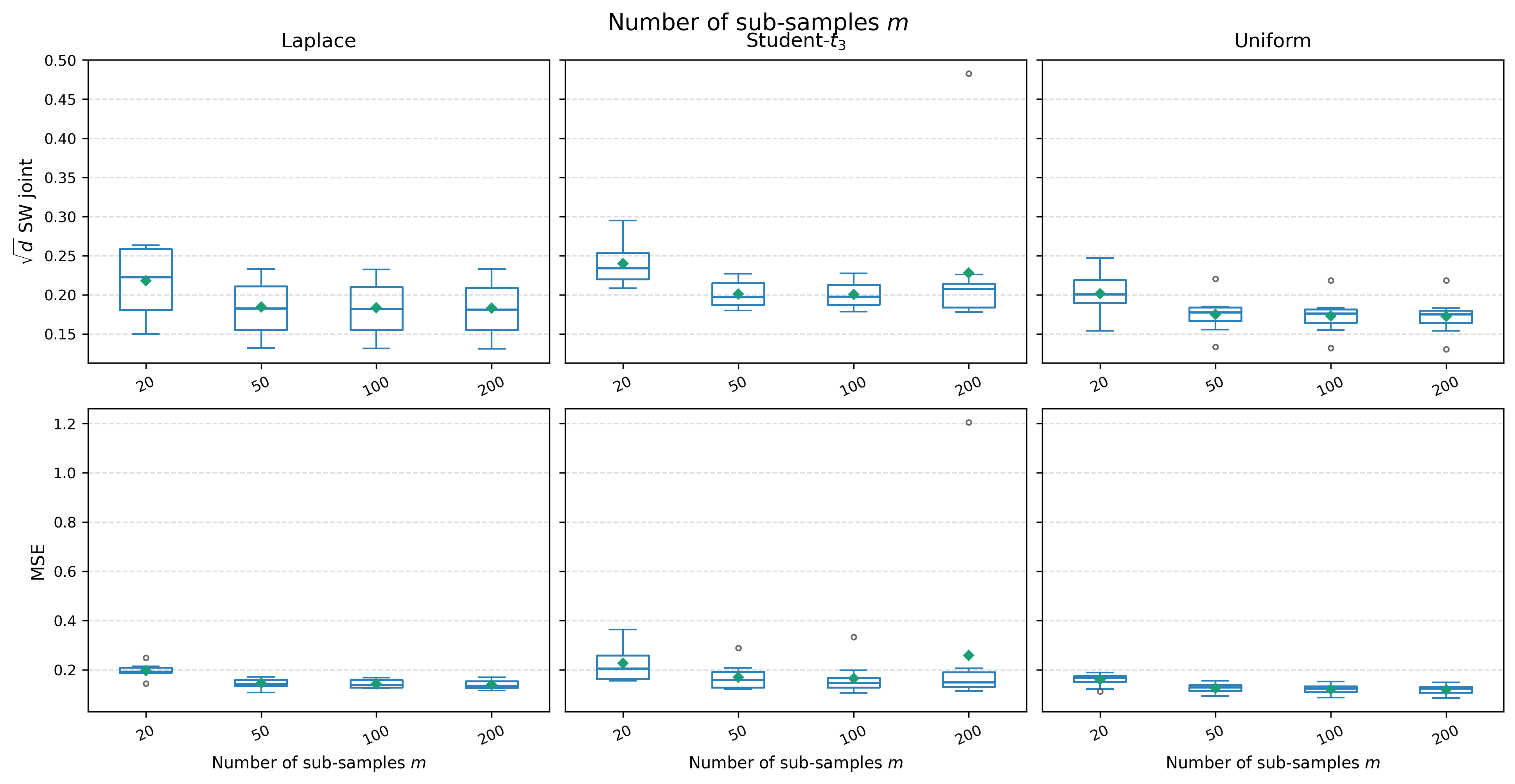}
	\captionof{figure}{The scaled joint sliced Wasserstein distance and MSE over different numbers of sub-samples $m$.}
	\label{fig:m}
\end{minipage}\par\vspace{0.35\baselineskip}

\noindent\begin{minipage}{\textwidth}
	\centering
	\includegraphics[width=0.95\textwidth]{./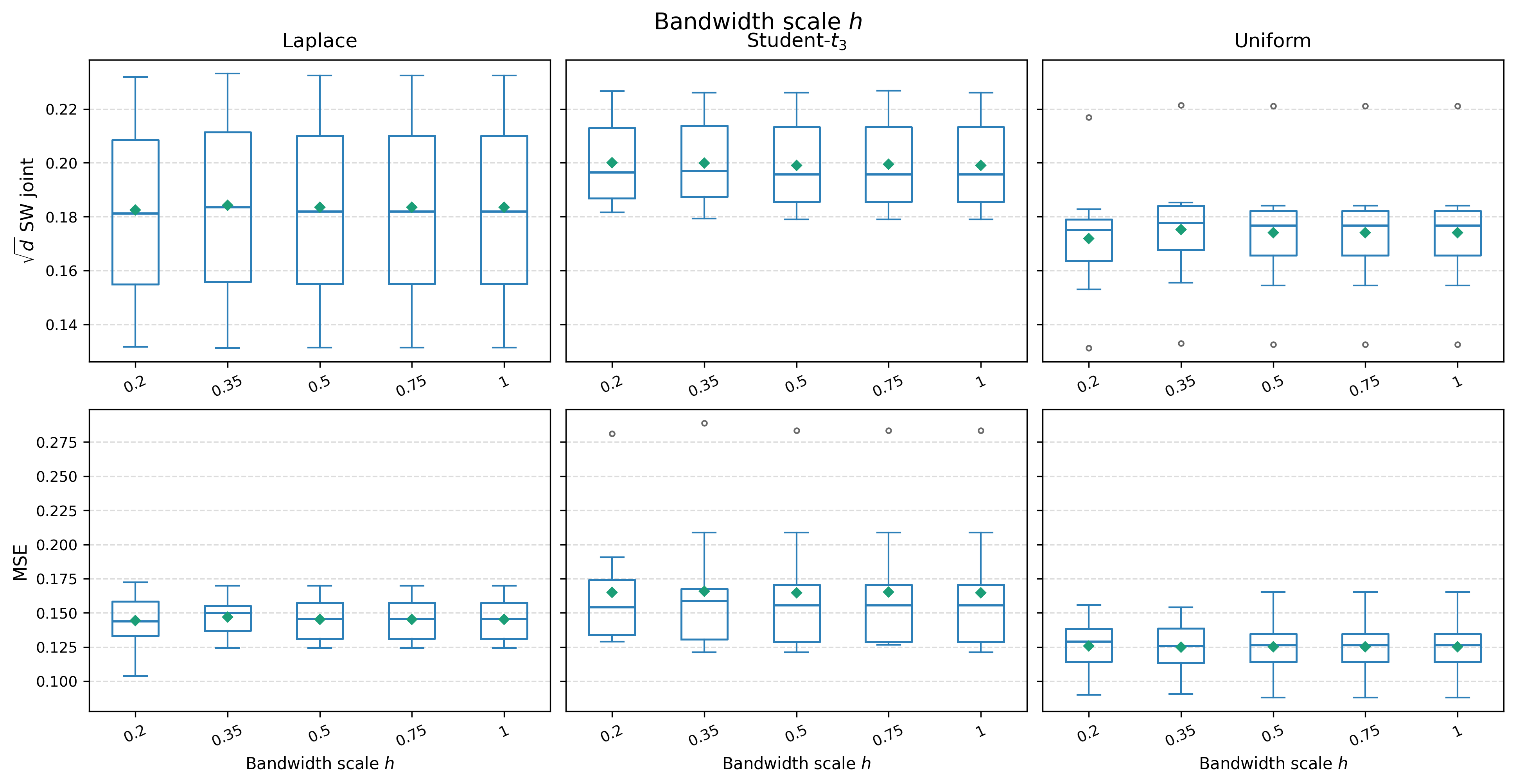}
	\captionof{figure}{The scaled joint sliced Wasserstein distance and MSE over different anisotropic bandwidth scales $h$.}
	\label{fig:h}
\end{minipage}\par\vspace{0.35\baselineskip}

\noindent\begin{minipage}{\textwidth}
	\centering
	\includegraphics[width=0.95\textwidth]{./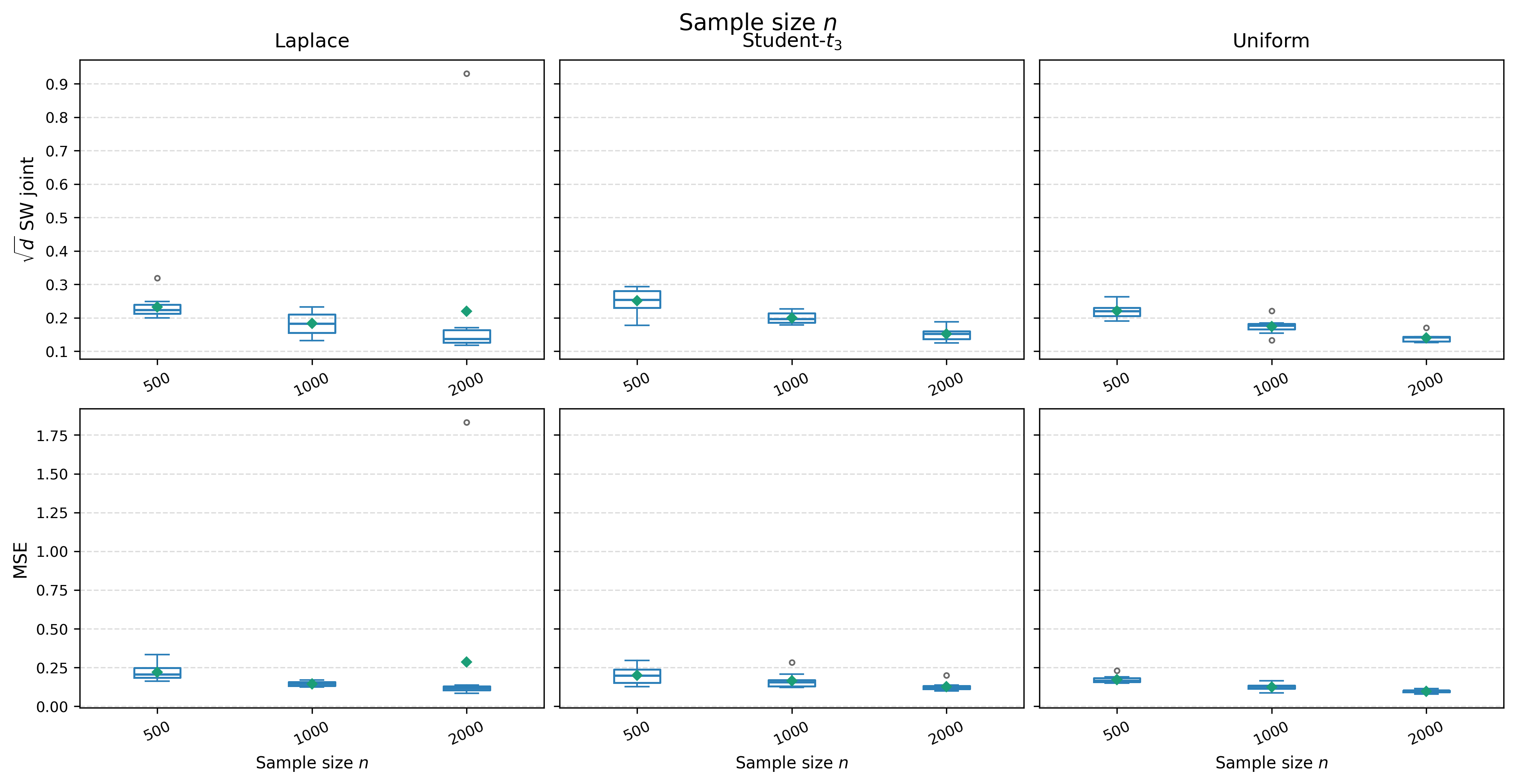}
	\captionof{figure}{The scaled joint sliced Wasserstein distance and MSE over different numbers of samples $n$.}
	\label{fig:n}
\end{minipage}
\endgroup

\section{Unknown Noise Variance}\label{app:unknown-sigma}
The main part of this paper assumes that the Gaussian measurement-error covariance is known. We now consider the practically important case where the error distribution is still Gaussian but its variance is unknown. In the simulation setting of Section \ref{sec:simulation}, the error is isotropic, $\bbSig=\sigma^{2}\bI$, and we estimate $\sigma$ from the contaminated covariates before running the denoising step with $\widehat{\bbSig}=\widehat{\sigma}^{2}\bI$.

\paragraph{Original characteristic-function estimator and its applicability.}
Let $\bU=\bZ_{0}+\bepsilon$, where $\bepsilon\sim\cN(\bzero,\sigma^{2}\bI)$ is independent of $\bZ_{0}$. For the $j$th coordinate,
\[
\phi_{\bU_{j}}(u)=\phi_{\bZ_{0,j}}(u)\exp\{-\sigma^{2}u^{2}/2\}.
\]
Hence
\[
\log|\phi_{\bU_j}(u)|
=
\log|\phi_{\bZ_{0,j}}(u)|-\frac{\sigma^{2}}{2}u^{2}.
\]
The characteristic-function method of \citet{butucea2005minimax} is useful when the Gaussian factor dominates the high-frequency decay of the contaminated characteristic function. In particular, the latent characteristic function should not have a quadratic exponential decay that is indistinguishable from Gaussian noise, and it should not have persistent zeros in the frequency range used for estimation. A convenient finite-sample implementation assumes an ordinary-smooth nuisance tail and fits
\[
\log |\widehat{\phi}_{\bU_j}(u)|
\approx
a_j-\frac{\sigma^{2}}{2}u^{2}-p_j\log u,
\qquad
\widehat{\phi}_{\bU_j}(u)
=
\frac{1}{n}\sum_{i=1}^{n}\exp\{\mathrm{i}u\bU_j^{(i)}\},
\]
over a moderate-frequency grid.\footnote{Notably, estimating $\sigma$ by this characteristic-function method results in a poor $\cO(1/\log{n})$ convergence rate.} We take the median of the coordinatewise estimates and truncate it at a small positive lower bound.

The three latent covariate distributions used in our simulations satisfy these requirements to different degrees. For the Laplace design, the marginal characteristic function has ordinary-smooth polynomial decay and no zeros, so the original slope regression is well matched to the theory. For the Student-$t_3$ design, the tail of the characteristic function contains an exponential term of order $\exp(-c|u|)$; this is still lower order than the Gaussian $\exp(-\sigma^2u^2/2)$ term, so $\sigma$ is identifiable asymptotically, but the ordinary-smooth regression misspecifies the finite-sample nuisance tail. For the Uniform design, the characteristic function has the form $\sin(au)/(au)$ up to scaling. Its envelope is ordinary-smooth, but it has infinitely many zeros, which violates the non-vanishing regularity needed for a stable moderate-frequency slope fit.

\paragraph{Results from the original plug-in scheme.}
Figure \ref{fig:unknown-sigma-original} reports the estimated noise standard deviation and the corresponding denoising performance for $d_{\bZ}=3$ and $n=1000$. The denoising methods use the same plug-in estimate $\widehat{\sigma}$; Error-Contaminated ignores the measurement error, and Error-Free is the oracle benchmark. To reduce purely numerical instability from the plug-in step, the diffusion implementation selects the ODE trajectory whose covariance trace is closest to the plug-in target $\operatorname{tr}\{\widehat{\operatorname{Cov}}(\bX_1)\}-\operatorname{tr}(\widehat{\bbSig})$, subject to a mild displacement constraint. 

\begin{figure}[!tbp]
	\centering
	\includegraphics[height=0.74\textheight,keepaspectratio]{./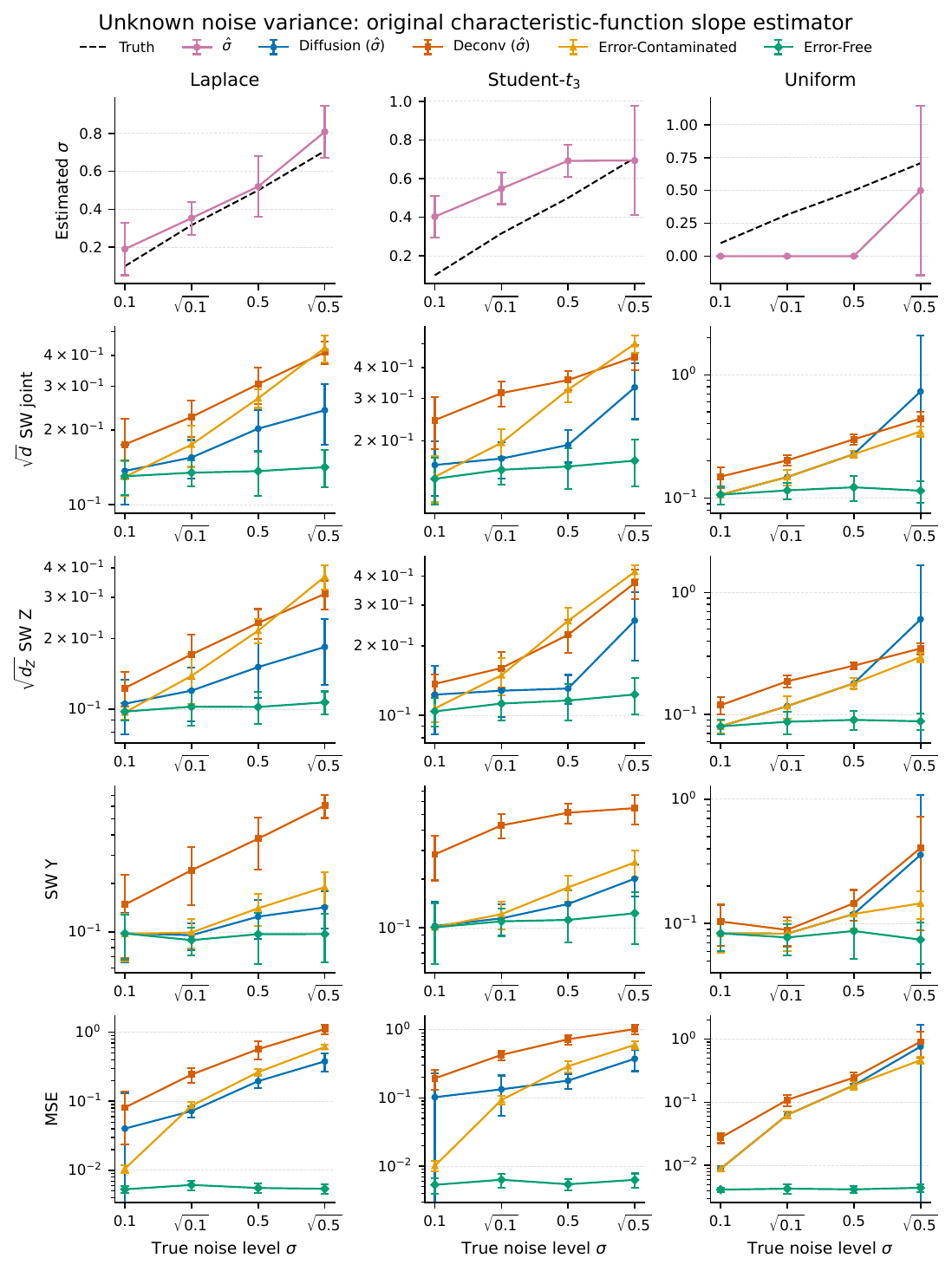}
	\caption{Unknown noise variance with the original characteristic-function slope estimator. The first row reports $\widehat{\sigma}$ against the true $\sigma$. The remaining rows report $\sqrt{d}$-scaled joint sliced Wasserstein distance, $\sqrt{d_{\bZ}}$-scaled sliced Wasserstein distance for $\bZ$, sliced Wasserstein distance for $Y$, and MSE. Values are means over 10 repetitions and are not multiplied by 1000; error bars are one standard deviation, and the performance axes are on the logarithmic scale.}
	\label{fig:unknown-sigma-original}
\end{figure}

The numerical patterns agree with the preceding regularity discussion. The Laplace design is the most favorable case for the original estimator, because its ordinary-smooth characteristic-function tail is exactly the nuisance structure used in the regression. The Student-$t_3$ design tends to overestimate small and moderate noise levels: part of the linear exponential decay of the latent characteristic function is incorrectly attributed to the Gaussian error. This overestimation can induce over-denoising and larger MSE. The Uniform design is the least reliable case. The zeros and oscillations of its characteristic function make the slope estimate highly sensitive to the chosen frequency window, and the plug-in procedure often behaves like an error-contaminated analysis when the estimate is truncated near zero. Thus, the results indicate that the denoising methods are not suitable for all clean data distributions under unknown variance of noise. Fortunately, our \texttt{Diffusion} method beats \texttt{Deconv} method in most cases. 

\paragraph{Distribution-aware refinement.}
If the rough form of the latent $\bZ_0$ distribution is known, the plug-in step can be improved by modelling the nuisance characteristic-function tail more explicitly. In our implementation, we first take an upper-envelope summary of $\log|\widehat{\phi}_{\bU_j}(u)|$ over frequency bins to reduce the effect of empirical dips and oscillatory zeros. We then fit
\[
\log |\widehat{\phi}_{\bU_j}(u)|
\approx
a_j-\frac{\sigma^{2}}{2}u^{2}-b_j|u|-p_j\log u.
\]
The additional $|u|$ term is intended to absorb the first-order exponential tail that appears, for example, in Student-$t$ characteristic functions, while the upper-envelope step is designed to avoid fitting directly through local zeros. This refinement is not fully model-free: it uses qualitative information about the clean data. If stronger prior information, replicate measurements, or calibration data are available, those sources should be preferred for estimating $\bbSig$.

\begin{figure}[!tbp]
	\centering
	\includegraphics[height=0.74\textheight,keepaspectratio]{./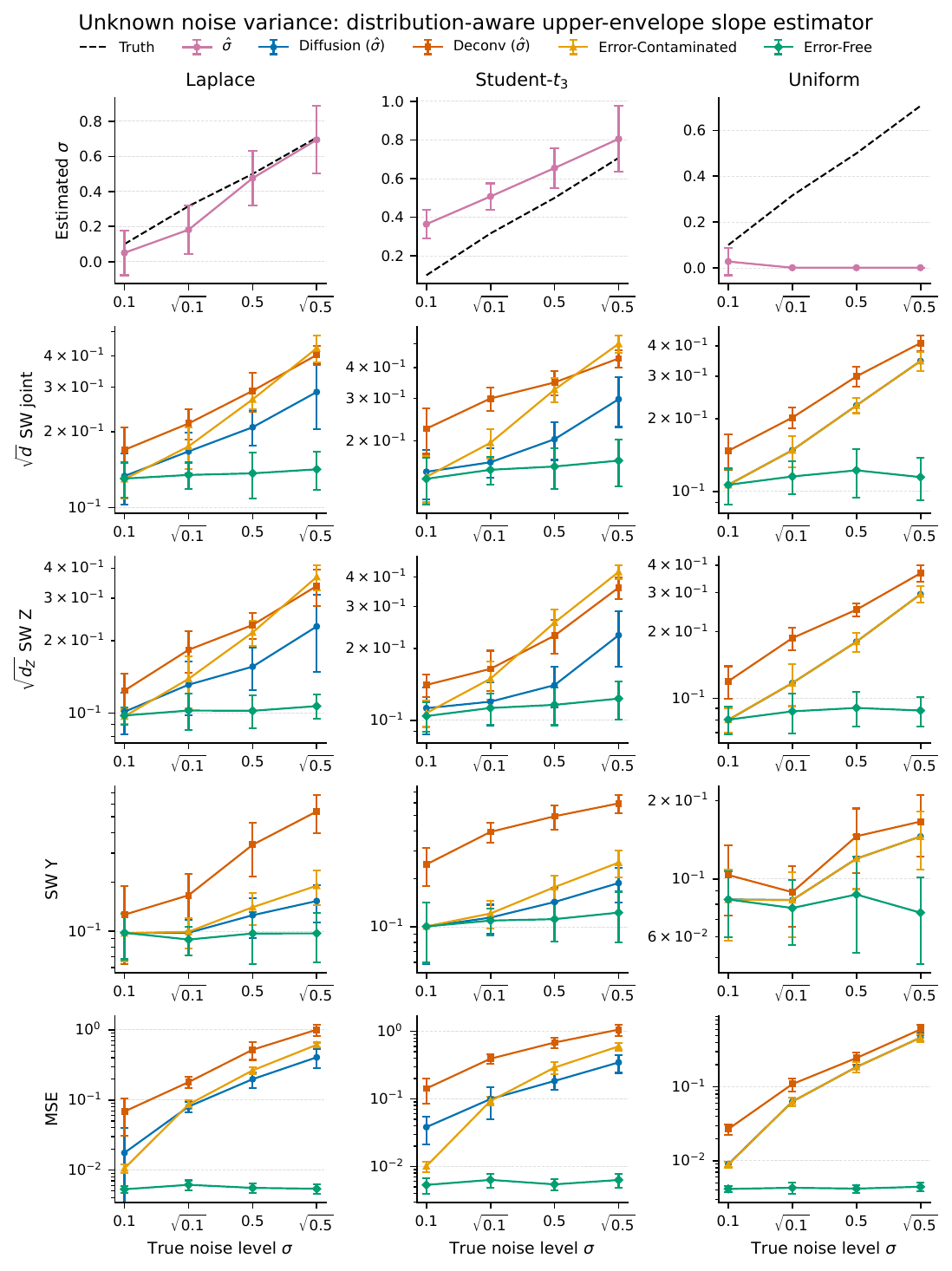}
	\caption{Unknown noise variance with the distribution-aware upper-envelope slope estimator. The plotting convention is the same as in Figure \ref{fig:unknown-sigma-original}.}
	\label{fig:unknown-sigma-refined}
\end{figure}

Figure \ref{fig:unknown-sigma-refined} shows that the refined estimator behaves as expected. It is more robust when the latent characteristic function has non-polynomial lower-order decay, and it reduces some of the large finite-sample fluctuations that occur when the original slope regression attributes the signal tail to noise. However, the refinement does not solve the fundamental difficulty of compactly supported or strongly oscillatory clean data distribution. In the Uniform case, the upper envelope mitigates local zeros but does not fully restore identifiability from a short frequency window, so the resulting denoising performance remains close to the error-contaminated baseline for some noise levels.

\paragraph{Role of the response variable.}
The preceding variance estimation is based on $\bZ$ rather than $Y$. This is deliberate. In our simulations, $Y_0=f_{\beta_{\rm truth}}(\bZ_0)$ is generated by a nonlinear neural network, so its marginal distribution is an unknown push-forward distribution depending jointly on $f_{\beta_{\rm truth}}$ and the distribution of $\bZ_0$. Unlike the covariate distributions considered above, the characteristic function of $Y_0$ does not have a simple known ordinary-smooth, Student-type, or compact-support form. Estimating $\sigma$ from $Y$ would therefore introduce an additional and hard-to-control modelling error.

Nevertheless, the response variable has a direct effect on the final performance metrics. The joint sliced Wasserstein distance measures the distribution of $(\bZ,Y)$, and the MSE is computed after training a regression function on the denoised pairs. If $\widehat{\sigma}$ is too small, residual measurement error remains in both $\bZ$ and $Y$; if it is too large, the denoising step can oversmooth the joint sample and distort the functional relation between $\bZ$ and $Y$. Thus the accuracy of $\widehat{\sigma}$ is only one part of the final result: the nonlinear and unknown distribution of $Y$ determines how noise-estimation error is amplified in the downstream regression task.

\section{Learning in Reproducing Kernel Hilbert Spaces: Some Background and Justification of Tuning Parameter Impact}\label{app:rkhs}

Many learning problems in RKHS can be formulated as minimizing an empirical loss plus a regularization term in a \emph{Reproducing Kernel Hilbert Space} (RKHS). Concretely, one often considers a problem of the form
\begin{equation}
	\label{eq:tikhonov}
	\min_{f \in \cH_{\cK}} 
	\left\{
	\frac{1}{n}\sum_{i=1}^{n} \left(Y^{(i)} -  f\left(\bX^{(i)}\right)\right)^{2} + 
	\lambda\|f\|_{\cH_{\cK}}^2\right\},
\end{equation}
\noindent where $\cH_{\cK}$ is an RKHS with reproducing kernel $\cK(\cdot,\cdot)$, $\{(\bX^{(i)}, y^{i})\}_{i=1}^n$ are training data, and $\lambda > 0$ is the regularization parameter. In this paper, we conduct such regression for each dimension of the target score function; that is, $\bX^{(i)}$ corresponds to $\bX_{t}^{(i)}$ and $Y^{(i)}$ corresponds to one specific dimension of $\nabla\log{p_{t}}(\bX_{t}^{(i)})$, where $\bX_{t}^{(i)}\sim P_{\bX_{t}}$. Although these data are unobservable, our method in Section \ref{sec:DEP} resolves this issue.   
\par
By representation theorem, the minimizer $f$ of \eqref{eq:tikhonov} takes the form
\begin{equation*}
	f(\bx) = \sum\limits_{i=1}^{n}\alpha_{i}\cK(\bx, \bX^{(i)}),
\end{equation*}
where $\boldsymbol{\alpha} = (\alpha_{1}, \cdots, \alpha_{n})$ satisfies
\begin{equation*}
	\boldsymbol{\alpha} = \left(\BK + \lambda \bI\right)^{-1}\BK\bY,
\end{equation*}
where $\BK$ is an $n\times n$ matrix whose $(i,j)$-th element is $\cK(\bX^{(i)}, \bX^{(j)})$, and $\bY = (Y^{(1)},\cdots, Y^{(n)})$. The regularization parameter $\lambda$ makes the matrix $\BK + \lambda \bI$ invertible, while increasing $\lambda$ moves the obtained $f$ away from the ground-truth target. Notably, this gap has an explicit dependence on $\lambda$, which usually depends on the smoothness of the target function, i.e., $\nabla\log{p_{t}}$ in this paper \citep{smale2003estimating,de2005learning}. Assumption \ref{ass:subexp} implies that the target score can be well approximated by (infinitely smooth) analytic functions; thus, we can derive a sharp approximation error by selecting $\lambda = \cO(n^{-1/2})$ as in Section \ref{sec:generalization analysis}. 
\par
In this paper, we set the kernel function $\cK(\cdot, \cdot)$ as the Gaussian kernel 
\begin{equation*}
	\cK(\bx_{1}, \bx_{2}) = \exp\left\{-(\bx_{1} - \bar{\bx}_{2})^{\top}\bbH(\bx_{1} - \bar{\bx}_{2})\right\},
\end{equation*}
in the complex coordinate space.
As suggested above, instead of optimizing over infinitely many possible functions in \(\mathcal{H}_{\cK}\), one only needs to solve for finite coefficients, as we did in Section \ref{sec:kernel-based function} by restricting $\bs_{tl}(\cdot, \btheta_{t})\in \cH_{\cK_{t}, m}^{d}$. The Gaussian kernel is analytic and serves as a universal approximator. Proposition \ref{prop:approximation error} shows that functions in $\cH_{\cK_{t}, m}$ can approximate the target score function $\nabla\log{p_{t}}$ well. 
\par
Notably, for the model based on RKHS \citep{de2005learning,smale2003estimating,smale2007learning}, we do not select $\bbH$, i.e., the bandwidth, as in classical kernel density estimation, where the bandwidth depends on data dimension. To make such a selection, one would need to explicitly compute the dependence of $\cB_{\cK}$ on $\bbH$, which is difficult; we leave this for future work.  
\section{Sliced Wasserstein Distance}\label{app:sw-distance}
In the main simulation and the ablation study, we use the sliced Wasserstein distance (SW) to evaluate the distributional gap between the generated denoised sample and an independent error-free test sample. Let $P$ and $Q$ be two distributions on $\mathbb{R}^{d}$. For a unit vector $\omega\in\mathbb{S}^{d-1}$, denote by $P_{\omega}$ and $Q_{\omega}$ the one-dimensional distributions of $\omega^{\top}\bX$ with $\bX\sim P$ and $\omega^{\top}\bY$ with $\bY\sim Q$, respectively. The first-order sliced Wasserstein distance used in our paper is
\begin{equation*}
	\mathrm{SW}_{1}(P,Q)
	=
	\sqrt{d}\int_{\mathbb{S}^{d-1}} W_{1}(P_{\omega},Q_{\omega})\,d\omega,
\end{equation*}
where $W_{1}$ denotes the one-dimensional Wasserstein distance and $d\omega$ is the uniform probability measure on the unit sphere.

We estimate this quantity by Monte Carlo random projections. Specifically, we draw $L=256$ independent random directions $\omega_{1},\ldots,\omega_{L}$ from the unit sphere by normalizing standard Gaussian vectors. For each direction, we project the two empirical samples, sort the projected values, and compute the empirical one-dimensional Wasserstein distance by matching the corresponding order statistics. When the two empirical samples are $\{\bx_{i}\}_{i=1}^{n}$ and $\{\by_{i}\}_{i=1}^{n}$, our estimator is
\begin{equation*}
	\widehat{\mathrm{SW}}_{1}(P,Q)
	=
	\frac{\sqrt{d}}{L}\sum_{\ell=1}^{L}
	\frac{1}{n}\sum_{i=1}^{n}
	\left|
	x_{(i),\ell}-y_{(i),\ell}
	\right|,
\end{equation*}
where $x_{(i),\ell}$ and $y_{(i),\ell}$ are the $i$th order statistics of $\{\omega_{\ell}^{\top}\bx_{i}\}_{i=1}^{n}$ and $\{\omega_{\ell}^{\top}\by_{i}\}_{i=1}^{n}$, respectively. In the one-dimensional case, this reduces to the usual empirical $W_{1}$ distance.

The factor $\sqrt{d}$ is a dimensional normalization. For any fixed vector $\bv\in\mathbb{R}^{d}$ and a random unit direction $\omega$, $\mE_{\omega}[(\omega^{\top}\bv)^{2}]=\|\bv\|_{2}^{2}/d$. Therefore, a typical one-dimensional projection is smaller than the original Euclidean displacement by a factor of order $1/\sqrt{d}$. Multiplying by $\sqrt{d}$ places the sliced Wasserstein distance on the original data scale, up to a dimension-free constant, and makes SW values more comparable across different dimensions. For a fixed dimension, this scaling does not change the relative ranking of competing methods.   
\section{Additional Figures in the Diabetes Clinical Case Study}\label{app:figure for diabetes}
The variations in glucose time-in-range metrics for control and treatment groups at different noise levels are in Figure \ref{fig:ejemplo}. 
\begin{figure}[!htbp]
	\centering
	\includegraphics[angle=270,width=0.48\textwidth]{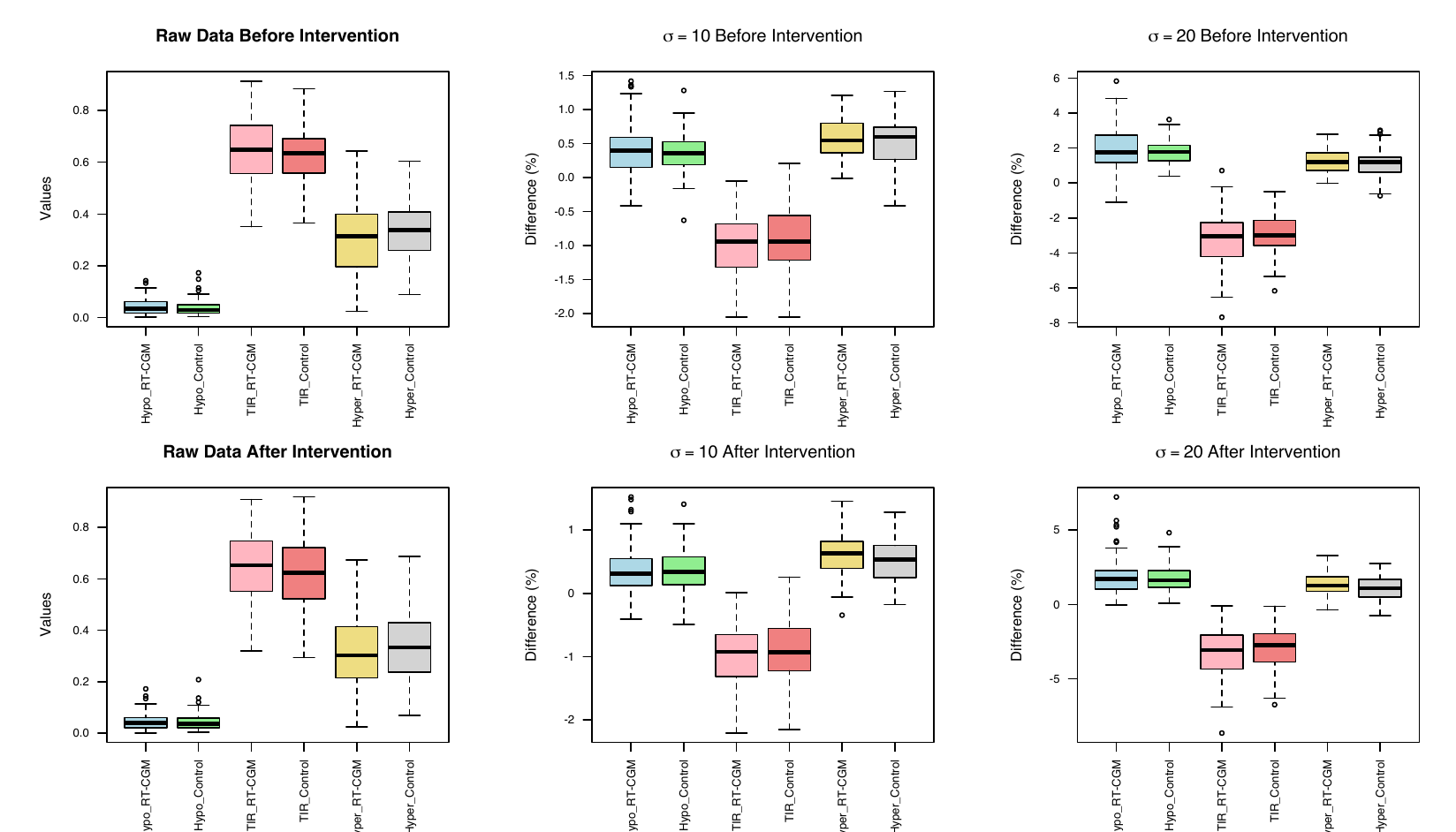}
	\caption{Variations in glucose time-in-range metrics for control and treatment groups at noise levels \(\sigma = 0, 10, 20\) before and after interventions.}
	\label{fig:ejemplo}
\end{figure}

\bibliographystyle{asa}
\bibliography{ref,ref_marcos}

\end{document}